\newtheorem{example}{Example}
\newtheorem{proposition}{Proposition}
\newtheorem{lemma}{Lemma}
\newtheorem{definition}{Definition}
\title{On Imperfect Recall in Multi-Agent Influence Diagrams}
\author{James Fox
\institute{University of Oxford}
\email{james.fox@cs.ox.ac.uk}
\and
Matt MacDermott
\institute{Imperial College London}
\email{m.macdermott21@imperial.ac.uk}
\and
Lewis Hammond
\institute{University of Oxford}
\email{lewis.hammond@cs.ox.ac.uk}
\and
Paul Harrenstein
\institute{University of Oxford}
\email{paul.harrenstein@cs.ox.ac.uk}
\and
Alessandro Abate
\institute{University of Oxford}
\email{aabate@cs.ox.ac.uk}
\and
Michael Wooldridge
\institute{University of Oxford}
\email{mjw@cs.ox.ac.uk}
}
\pgfplotsset{compat=1.17}
\begin{document}
\maketitle

\begin{abstract}
Multi-agent influence diagrams (MAIDs) are a popular game-theoretic model based on Bayesian networks. In some settings, MAIDs offer significant advantages over extensive-form game representations. Previous work on MAIDs has assumed that agents employ behavioural policies, which set independent conditional probability distributions over actions for each of their decisions. In settings with imperfect recall, however, a Nash equilibrium in behavioural policies may not exist. We overcome this by showing how to solve MAIDs with forgetful and absent-minded agents using mixed policies and two types of correlated equilibrium. We also analyse the computational complexity of key decision problems in MAIDs, and explore tractable cases. Finally, we describe applications of MAIDs to Markov games and team situations, where imperfect recall is often unavoidable.
\end{abstract}

\section{Introduction}

Multi-agent influence diagrams (MAIDs) are a graphical representation for dynamic non-cooperative games, which can be more compact and expressive than extensive-form games (EFGs) \citep{koller2003multi}. Like Bayesian networks (BNs), MAIDs use a directed acyclic graph (DAG) to represent conditional probabilistic dependencies between random variables, but they also specify decision and utility variables for each agent. Each agent selects a behavioural policy -- independent conditional probability distributions (CPDs) over actions for each of their decision variables -- to maximise their expected utility. A MAID's mechanised graph extends this DAG by explicitly representing each variable's distribution and showing which other variables' distributions matter to an agent optimising a particular decision rule \citep{causalgames,koller2003multi,dawid2002influence}. 

MAIDs, and their causal variants \citep{causalgames}, have been used in the design of safe and fair AI systems \citep{everitt2021agent,ashurst2022fair,everitt2021reward,farquhar2022path,carroll2023characterizing}, to explore reasoning patterns and deception \citep{pfeffer2007reasoning,ward2022agent}, and to identify agents from data \citep{kenton2022discovering}. However, to date, agents in MAIDs are usually assumed to have perfect (or, at least, `sufficient') recall~\citep{koller2003multi}. This assumption is often unreasonable. For example, MAIDs must allow imperfect recall to handle bounded rationality, teams with imperfect communication \citep{detwarasiti2005influence}, or memoryless policies in Markov games. However, forgetfulness (of previous observations) or absent-mindedness (about whether previous decisions have even been made) can prevent the existence of a Nash Equilibrium (NE) in behavioural policies. To overcome this, one can consider other solution concepts, such as mixed or correlated equilibria.

In this work, we focus on imperfect recall in MAIDs. Imperfect recall has already been extensively studied in EFGs \cite{piccione1997interpretation,Kuhn1953,waugh2009practical}, but a MAID's mechanised graph makes graphically explicit the semantic difference between behavioural and mixed policies (hidden in EFGs) and readily identifies forgetful or absent-minded agents (or teams).
Our insights inspire two definitions of \emph{correlated equilibrium} in MAIDs. The first follows from the normal-form game definition \cite{aumann1974subjectivity}. The second, based on von~Stengel and Forges' extensive-form correlated equilibrium \cite{von2008extensive}, is more natural for dynamic settings, can yield greater social welfare, and is easier to compute. Again, mechanised graphs clearly depict the assumptions made in both.
Next, we examine MAIDs from a computational complexity perspective by studying the decision problems of finding a best response, checking whether a policy profile is an NE, and checking whether each type of NE exists. These provide an insight into what makes particular instances hard, when computations can be made tractable, and rigorously identify which problems are suitable for analysis as MAIDs. Our results also apply to refinements of MAIDs, such as \textit{causal games} \citep{causalgames}.
We assume familiarity with EFGs \citep{maschler2020game}, BNs \citep{koller2009probabilistic}, and the complexity classes $\Poly$, $\NP$, and $\PP$~\citep{papadimitriou1994}. Proof sketches are provided, but details are deferred to the appendices. 

\paragraph{Related Work.} 
There is a rich literature on influence diagrams
\cite{kjaerulff2008bayesian} and imperfect recall has been studied in single-agent influence diagrams \cite{maua2012solving,maua2016fast,lauritzen2001representing,de2008strategy,van2022complete} as well as in EFGs \cite{aumann1997absent,kaneko1995behavior,Kuhn1953,piccione1997interpretation,waugh2009practical}. However, to our knowledge, we are the first to focus on imperfect recall in influence diagrams with multiple agents.

A full policy profile in a MAID induces a BN, so many of our results inherit from that setting, where the decision problem variant of marginal inference is, in general, \PP-complete \citep{littman2001stochastic}. However, we care about the cases we encounter in practice, not just the worst case. Marginal inference in a BN can be performed in time exponential in the treewidth of the underlying graph \citep{koller2009probabilistic}, which entails a poly-time algorithm when the treewidth is small. Similarly, we will see that tractable results for computations in MAIDs can be found when problems are restricted to certain settings. We also sometimes reduce from partial order games \cite{zahoransky2021partial}, which can be interpreted as MAIDs without chance nodes, with deterministic decision rules, and where each agent has a single utility node as a child of all the decision nodes.

\section{The Model}
\label{sec:background}

We use capital letters $V$ for random variables, lowercase letters $v$ for their instantiations, and bold letters $\bm{V}$ and $\bm{v}$, respectively, for sets of variables and their instantiations. 
We let $\dom(V)$ denote the (finite, non-singleton) domain of $V$ (for ease, we take this to be binary unless stated otherwise) and $\dom(\bm{V}) \coloneqq \bigtimes_{V \in \bm{V}}\dom(V)$. 
Parents and children of $V$ in a graph are denoted by $\Pa_V$ and $\Ch_V$, respectively (with $\pa_V$ and $\ch_V$ their instantiations) and
$\Delta(X)$ denotes the set of all probability distributions over a set $X$.

\begin{example}
  \label{ex:taxi}
An autonomous taxi decides whether to offer Alice a discount ($T$) depending on whether 
its journey count exceeds a quota ($Q$). Alice decides whether to accept a journey ($A$) depending on the price. The taxi wants to maximise profit, but if its journey count is less than the quota and Alice rejects it, the taxi pays a penalty (the municipality uses this mechanism to prevent a proliferation of unnecessary taxis). Alice's utility is a function of her decision and the price offered by the taxi.
\end{example}

Figure \ref{fig:taxi:a} shows a MAID for this example. Chance variables (moves by nature), decision variables, and utility variables are represented by white circles, squares, and diamonds, respectively. Full edges leading into chance and utility nodes represent probabilistic dependence, as in a BN. Dotted edges leading into decision nodes identify information available to the agent when a decision $D$ is made, so $\pa_D$, the values of $\Pa_D$, represents the decision context for $D$. 
In EFGs, imperfect information is represented using explicitly labelled information sets. In MAIDs, we can infer that Alice is unaware of the value of $Q$ when making her decision by the lack of edge $Q \rightarrow A$.  A parameterisation defines the CPDs for the chance and utility variables, whereas CPDs of decision nodes are chosen by the agents playing the game.

\begin{definition}[\citep{koller2003multi}]
    \label{def:MAID}
    A \textbf{multi-agent influence diagram (MAID)} is a structure $\model = (\graph, \bm{\theta})$. $\graph = (N, \bm{V}, E)$ specifies a set of agents $N = \{1,\dots,n\}$ and a DAG $(\bm{V}, E)$, where $\bm{V}$ is partitioned into chance variables $\bm{X}$, decision variables $\bm{D} = \bigcup_{i \in N} \bm{D}^i$, and utility variables $\bm{U} = \bigcup_{i \in N} \bm{U}^i$. 
    The parameters $\bm{\theta} = \{\theta_{V}\}_{V \in \bm{V} \setminus \bm{D}}$ define the CPDs $\Pr(V \mid \Pa_V)$ for each non-decision variable such that for any setting of the decision variables' CPDs, the resulting joint distribution over $\bm{V}$ is Markov compatible with the DAG, i.e., $\Pr(\bm{v}) = \prod_{V \in \bm{V}} \Pr(v \mid \pa_{V})$. 
\end{definition}

\begin{figure}[t]
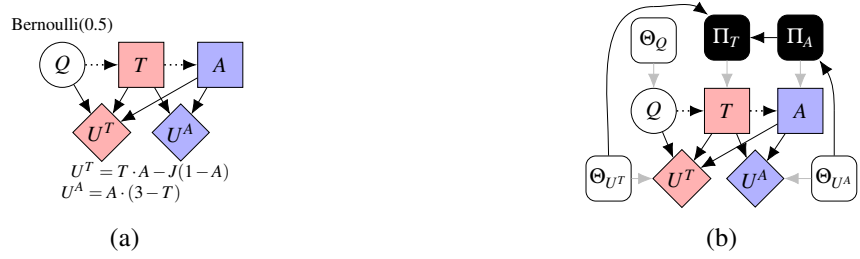

     \centering
     \begin{subfigure}[b]{0.49\linewidth}
               \vspace{-1em}
        \centering
        \resizebox{.75\width}{!}{
        \begin{influence-diagram}[every newellipse node/.style={inner sep=0pt}]
          \node (J) [] {$Q$};
          \node (D1) [decision, right = of J, player1] {$T$};
          \node (D2) [decision, right = of D1, player2] {$A$};
          \node (U1) [utility, below right = 1.2cm and 0.7cm of J, player1] {$U^T$};
          \node (U2) [utility, right = of U1, player2] {$U^A$};
          \edge {D1} {U1, U2};
          \edge {D2} {U1, U2};
          \edge [information] {J} {D1};
          \edge {J} {U1};
          \edge [information] {D1} {D2};
        
          \begin{scope}[
            every node/.style = {draw=none, rectangle, node distance=1mm}]
            \footnotesize
            \node [above = 0.05cm of J] {Bernoulli(0.5)};
            \node [below left = 0.15cm and -2.65cm of U1] {$U^T=T \cdot A-J(1-A)$};
            \node [below left = 0.5cm and -1.78cm of U1] {$U^A = A\cdot(3-T)$};

        \end{scope}
        
      \end{influence-diagram}
      }        
        \caption{}
      \label{fig:taxi:a}
     \end{subfigure}
     \begin{subfigure}[b]{0.49\linewidth}
               \vspace{-1em}
         \centering
          \resizebox{.75\width}{!}{
            \begin{influence-diagram}
        \node (J) [] {$Q$};
          \node (D1) [decision, right = 1.3cm of J, player1] {$T$};
          \node (D2) [decision, right = 1.3cm of D1, player2] {$A$};
          \node (Jmec) [relevancew, above = 1.25cm of J] {$\Theta_Q$};
          \node (D1mec) [relevanceb, above left = 1.3cm and 0cm of D1] {$\Pi_{T}$};
          \node (D2mec) [relevanceb, above left = 1.3cm and 0cm of D2] {$\Pi_{A}$};
          \node (U1) [utility, below right = 1.2cm and 0.5cm of J, player1] {$U^T$};
          \node (U2) [utility, below right = 1.2cm and 0.5cm of D1, player2] {$U^A$};

          \node (U1mec) [relevancew, left = 1.3cm of U1] {$\Theta_{U^T}$};
          \node (U2mec) [relevancew, right = 1.4cm of U2] {$\Theta_{U^A}$};

          \edge {D1} {U1, U2};
          \edge {D2} {U1, U2};
          \edge [information] {J} {D1};
          \edge {J} {U1};
          \edge [information] {D1} {D2};

          \edge [gray!50] {Jmec} {J};
          \edge [gray!50] {D1mec} {D1};
          \edge [gray!50] {D2mec} {D2};

          \edge [gray!50] {U1mec} {U1};
          \edge [gray!50] {U2mec} {U2};
          
          \edge {D2mec} {D1mec};

          \node (space1) [minimum size=0mm, node distance=2mm, above = 0.6cm of Jmec, draw=none] {};
      \draw (U1mec) edge[,in=-180,out=90] (space1.center)
        (space1.center) edge[,->,out=0,in=130] (D1mec);

        \node (space2) [minimum size=0mm, node distance=2mm, right = 0.6cm of D2, draw=none] {};
      \draw (U2mec) edge[,in=-90,out=90] (space2.center)
        (space2.center) edge[,->,out=90,in=-50] (D2mec);

        \end{influence-diagram}
      }   
         \caption{}
         \label{fig:taxi:b}
     \end{subfigure}

\caption{A MAID (a) and its mechanised graph (b) for Example \ref{ex:taxi}, which is a perfect recall and imperfect, but sufficient, information game.}
  \label{fig:taxi}
\end{figure}

Given a MAID, a \textbf{decision rule} $\pi_D$ for $D \in \bm{D}$ is a CPD $\pi_D(D \mid \Pa_D)$. A \textbf{partial (behavioural) policy profile} $\pi_{\bm{D}'}$ is a set of decision rules for each $D \in \bm{D}' \subseteq \bm{D}$, whereas $\pi_{-\bm{D}'}$ is the set of decision rules for each $D \in \bm{D} \setminus \bm{D}'$. 
A \textbf{(behavioural) policy} ${\bm{\pi}}^i$ refers to ${\bm{\pi}}_{\bm{D^i}}$, and a \textbf{(full) policy profile} ${\bm{\pi}} = ({\bm{\pi}}^1,\ldots,{\bm{\pi}}^n)$ is a tuple of policies, where ${\bm{\pi}}^{-i} \coloneqq ({\bm{\pi}}^1, \dots, {\bm{\pi}}^{i-1}, {\bm{\pi}}^{i+1}, \dots, {\bm{\pi}}^n)$. 
A decision rule is \textbf{pure} if $\pi_D(d \mid \pa_D) \in \{0,1\}$, which holds for a policy (profile) if it holds for all decision rules in the policy (profile).
For clarity, we use an overhead dot to mark this determinism, e.g., $\dot{\pi}_D, \dot{{\bm{\pi}}}^i$, or $\dot{{\bm{\pi}}}$.

By combining ${\bm{\pi}}$ with the partial distribution $\Pr$ over the chance and utility variables, we obtain a joint distribution: 
\[
\scalebox{.96}[1]{
$
\Pr^{\bm{\pi}}(\bm{x},\bm{d},\bm{u}) \coloneqq \prod_{V \in \bm{V} \setminus \bm{D}} \Pr (v \mid \pa_{V}) \cdot \prod_{D \in \bm{D}} \pi_{D} (d \mid \pa_{D})
$
}
\]

\noindent A full policy profile ${\bm{\pi}}$ therefore induces a BN with DAG given by the MAID's graph. Agent $i$'s \textbf{expected utility} $EU^i({\bm{\pi}})$ for a given policy profile ${\bm{\pi}}$ is defined as the expected sum of their utility variables: 
\[\textstyle
EU^i({\bm{\pi}}) \coloneqq \sum_{U \in \bm{U}^i} \sum_{u \in \dom(U)} \Pr^{{\bm{\pi}}} (U = u)\cdot u
\] 

Utility variables have deterministic CPDs, so can be interpreted as functions $U:\dom(\Pa_U) \rightarrow~\mathbb{R}$ to show their functional dependence on their parents (e.g., Figure \ref{fig:taxi:a}). An NE is defined in the usual way.

\begin{definition}[\citep{koller2003multi}] 
  \label{def:NE}
  A (behavioural) policy profile ${\bm{\pi}}$ is a \textbf{Nash equilibrium (NE)} (in behavioural policies) if for every agent $i \in N$ and every alternative (behavioural) policy ${\bm{\varpi}}^i$: $EU^i({\bm{\pi}^{-i}, {\bm{\pi}}^{i}}) \geq EU^i({\bm{\pi}^{-i}, {\bm{\varpi}}^{i}})$
\end{definition}

Collectively, the decision rules of decision variables and the CPDs of chance or utility nodes are known as mechanisms.
A mechanism $\mecvar_V$ for $V$ is \textbf{strategically relevant} to a decision rule for $D$ if the choice of the CPD at $\mecvar_V$ can affect the optimal choice of this decision rule. Koller and Milch \cite{koller2003multi} define an associated sound and complete graphical criterion for strategic relevance, $\bm{s}$-\textbf{reachability}, based on d-separation which can be checked in $\mathcal{O}(|\bm{V}| + |E|)$ time \citep{shachter2013bayes} (see Appendix \ref{app:stratrel} for formal definitions).

A MAID's regular graph $\graph$ captures the probabilistic dependencies between \textbf{object-level} variables in the game's environment, but its \textbf{mechanised graph} $\mec{\graph}$ is an enhanced representation which adds an explicit representation of the strategically relevant dependencies between agents' decision rules and the game's parameterisation (see  \citep{causalgames} for details). Each object-level variable $V \in \bm{V}$ has a mechanism parent $\mecvar_V$ representing the distribution governing $V$: each decision $D$ has a new \emph{decision rule} parent $\Pi_D = \mecvar_D$ and each non-decision $V$ has a new \emph{parameter} parent $\Theta_V = \mecvar_V$, whose values parameterise the CPDs.

Agents select a decision rule $\pi_D$ (i.e., the value of a decision rule variable $\Pi_D$) based on both the parameterisation of the game (i.e., the values of the parameter variables) and the selection of the other decision rules ${\bm{\pi}}_{-D}$ -- these dependencies are captured by the edges from other mechanisms into decision rule nodes. $s$-reachability determines which of these edges are necessary, so $\mecvar_V \rightarrow \Pi_D$ exists if and only if $\Pi_D$ strategically relies on $\mecvar_V$. The mechanised graph for Example \ref{ex:taxi} (in Figure \ref{fig:taxi:b}) shows that $\Pi_{T}$ strategically relies on $\Theta_{U^T}$ and $\Pi_{A}$, whereas $\Pi_{A}$ only strategically relies on $\Theta_{U^A}$. In contrast to a MAID's regular graph $\graph$, which is a DAG, there may exist cycles between mechanisms (e.g., Figure \ref{fig:noNE:a}).

For convenience, we denote the set of agent~$i$'s behavioural policies as ${\bm{P}}^i \coloneqq \dom(\bm{\Pi}^i)$, with sets of pure policies denoted as $\dot{{\bm{P}}}^i$ and (pure) policy profiles denoted by ${\bm{P}}$ ($\dot{{\bm{P}}}$).

\subsection{Concise Representations}
\label{sec:rep}

A concise representation of MAIDs is needed for three reasons. First, real numbers may obscure the true complexity of the problems \citep{bodlaender2002complexity}, so we assume that all probability parameters are given by a fraction of
two integers, both expressed in finite binary notation. This is realistic since the probabilities are normally either assessed by domain experts or estimated by a learning algorithm and means that all CPDs can be read in poly-time. Second, even with binary variables, a joint distribution across $\bm{V}$ requires $2^{|\bm{V}|}-1$  parameters. A MAID or BN's graphical Markov factorisation reduces this to $\sum_{V \in \bm{V}}2^{|\Pa_V|}$, but this can still be exponential in $|\bm{V}|$. Therefore, it is standard \citep{shimony1994finding,roth1996hardness,kwisthout2009computational,koller2009probabilistic} to assume that the maximum in-degree in the graph is much less than $|\bm{V}|$ (or constant), so that the size of the CPDs are polynomial in $|\bm{V}|$. This means that the total representation of our MAID (including all CPDs) is polynomial in our chosen complexity parameter $|\bm{V}|$.
Finally, as in BNs, our complexity results are strongly affected by the DAG's \textbf{treewidth}. The \textbf{treewidth} of a DAG measures its resemblance to a tree and is given by the number of vertices in the largest clique of the corresponding triangulated moral graph minus one \citep{bodlaender1993linear}.

\section{Imperfect Recall in MAIDs}
\label{sec:info}

Agents may possess different degrees of information about the state of a game. A game has \textbf{perfect recall} if each agent remembers all their past decisions and observations, and it has \textbf{perfect information} if each agent is aware of \textit{every} agent's past decisions and observations.

\begin{definition}[\citep{koller2003multi}]
  \label{def:perfect} Agent $i$ in a MAID $\model$ is said to have \textbf{perfect recall}  if there exists a total ordering $D_1 \prec \cdots \prec D_m$ over $\bm{D}^i$ such that $(\Pa_{D_j} \cup D_j) \subseteq \Pa_{D_k}$ for any $1 \leq j < k \leq m$. $\model$ is a perfect recall game if all agents in $\model$ have perfect recall. $\model$ is a \textbf{perfect information} game if there exists such an ordering over $\bm{D}$.
\end{definition}

A MAID with perfect information (recall) can be transformed into an EFG with perfect information (recall), and vice versa \cite{Hammond2021}. Hence, these information conditions also guarantee the existence of an NE in pure (behavioural) policies in the MAID (\cite{Kuhn1953} gives the equivalent results in EFGs).
However, the mechanised representation of a MAID enables weaker criteria to be defined -- \textbf{sufficient information} and \textbf{sufficient recall}. Later, in Proposition \ref{prop:suffexistence}, we will see that these criteria preserve  
the NE existence results of perfect information and perfect recall games, respectively. 

\begin{definition}
  \label{def:sufficient}
  Agent $i$ in a MAID $\model$ has \textbf{sufficient recall} \citep{milch2008ignorable} if the subgraph of the mechanised graph $\mec{\graph}$ restricted to just agent $i$'s decision rule nodes $\bm{\Pi}_{\bm{D}^i}$ is acyclic. $\model$ is a sufficient recall game if all agents in $\model$ have sufficient recall. $\model$ is a \textbf{sufficient information} game if the subgraph of $\mec{\graph}$ restricted to contain only and all decision rule nodes $\bm{\Pi}_{\bm{D}}$ is~acyclic.\footnote{Note that since previous work on influence diagrams has not modelled absent-mindedness (see our Definition \ref{def:forgetabsent} in Section~\ref{sec:forgetabsent}), this definition implicitly assumes each mechanism variable has a single child.}
\end{definition}

\subsection{Forgetfulness and Absent-Mindedness}
\label{sec:forgetabsent}

\begin{figure}[t]
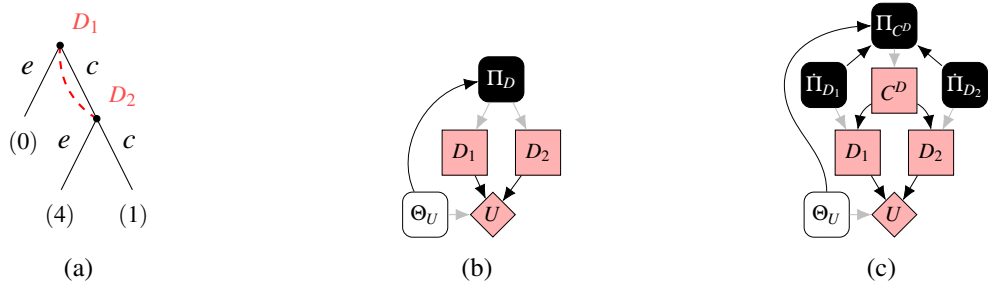

     \centering
\begin{subfigure}[b]{0.32\linewidth}
      \vspace{-1em}
      \centering
      \begin{istgame}[scale=0.65]
          \xtdistance{15mm}{15mm}
          \istroot(0)(0,0)<45, red!70>{$D_1$}
          \istb{e}[al]{(0)}
          \istb{c}[ar]{}
          \endist
          \istroot(1)(0-2)<45, red!70>{$D_2$}
          \istb{e}[al]{(4)}
          \istb{c}[ar]{(1)}
          \endist
          \setxtinfosetstyle{dashed,red,thick}
          \xtCInfoset(0)(1)<0.7>{\textcolor{red!70}{}}[below left]
      \end{istgame}
      \caption{}
      \label{fig:absent-minded:a}
  \end{subfigure}
  \begin{subfigure}[b]{0.33\linewidth}
  \vspace{-1em}
  \centering
  \resizebox{0.75\width}{!}{
    \begin{influence-diagram}[rotate=-90] 
      \node (D11) [decision, player1] {$D_1$};
      \node (D12) [decision, right = 1.3cm of D11, player1] {$D_2$};
      \node (Dmec) [relevanceb, above right = 1.3cm and 0.65cm of D11] {$\Pi_D$};
      \node (U1) [utility, below right = 1.1cm and 0.5cm of D11, player1] {$U$};
      \edge [gray!50] {Dmec} {D11, D12};
      \edge {D11, D12} {U1};
      \node (U1mec) [relevancew, left = 1.2cm of U1] {$\Theta_{U}$};
      \path (U1mec) edge[->, bend left=55] (Dmec);
      \edge [gray!50] {U1mec} {U1};
    \end{influence-diagram}
  }
  \caption{}
  \label{fig:absent-minded:b}
\end{subfigure}
  \begin{subfigure}[b]{0.33\linewidth}
        \vspace{-1em}
      \centering
      \resizebox{0.75\width}{!}{
      \begin{influence-diagram}
      \node (D11) [decision,player1 ] {$D_1$};
      \node (D12) [decision, player1, right = 1.3cm of D11] {$D_2$};
      \node (M) [decision, player1, above right = 1.1cm and 0.65cm of D11] {$C^D$};
      \node (U1) [utility, below right = 1.1cm and 0.65cm of D11, player1] {$U$};
      \node (Mmec) [relevanceb, above = 1.15cm of M] {$\Pi_{C^D}$};
      \node (D11mec) [relevanceb, above left = 1.2cm and 0.6cm of D11] {$\dot{\Pi}_{D_1}$};
      \node (D12mec) [relevanceb, above right = 1.2cm and 0.6cm of D12] {$\dot{\Pi}_{D_2}$};
      \edge {D11, D12} {U1};
      \edge [gray!50] {Mmec} {M};
      \node (U1mec) [relevancew, left = 1.2cm of U1] {$\Theta_{U}$};
      \edge [gray!50] {U1mec} {U1};
      \edge [gray!50] {D11mec} {D11};
      \edge [gray!50] {D12mec} {D12};
      \edge {D12mec, D11mec} {Mmec};

      \draw (M) edge[,->,in=90,out=-30] (D12);
      \draw (M) edge[,->,in=90,out=-150] (D11);

      \node (space1) [minimum size=0mm, node distance=2mm, left = 0.7cm of D11mec, draw=none] {};
      \draw (U1mec) edge[,in=-90,out=90] (space1.center)
        (space1.center) edge[,->,out=90,in=180] (Mmec);

      \end{influence-diagram}}
      \caption{}
      \label{fig:absent-minded:c}
  \end{subfigure}
\caption{The EFG (a) and the mechanised graphs for an absent-minded driver choosing behavioural (b) or mixed (c) policies.}
  \label{fig:absentminded}
\end{figure}

Previous work on MAIDs has assumed perfect or sufficient recall. We now begin the contributions of this paper by distinguishing between two types of imperfect recall in MAIDs.
\textbf{Forgetfulness} applies when an agent forgets an observation or the \emph{outcome} of one of their previous decisions. \textbf{Absent-mindedness} applies when an agent cannot even remember whether they have previously made a decision. To make this distinction, we leverage the following insight: \emph{mechanism nodes represent the CPDs governing object-level variables. Every edge between a mechanism and object-level node represents an independent draw from the mechanism's distribution.} We now provide formal definitions.

\begin{definition}
  \label{def:forgetabsent}
    Agent $i$ has \textbf{imperfect recall} in a MAID $\model$ if for every total ordering $D_1 \prec \cdots \prec D_m$ over $\bm{D}^i$ there exists some $j < k$ such that $(\Pa_{D_j} \cup D_j) \not\subseteq \Pa_{D_k}$ (i.e., if agent $i$ does not have perfect recall). Agent $i$ is \textbf{forgetful} if such a $D_j$ and $D_k$ have distinct decision rules and is \textbf{absent-minded} if in $\model$'s mechanised graph, a decision rule node has more than one outgoing edge to a decision node. 
\end{definition}

To motivate our definition of absent-mindedness in MAIDs, we revisit Piccione and Rubinstein's absent-minded driver game  \cite{piccione1997interpretation} (its EFG is in Figure \ref{fig:absent-minded:a}). A driver on a highway may take one of two exits. Taking the first, second, or no exit yields a payoff of~0,~4, or~1, respectively. Adopting Aumann \cite{aumann1997absent}'s \emph{modified multi-selves approach} (i.e., that the driver should only be able to control her current action, not her future actions), the driver does not know which junction she is facing, so she must have the same decision rule at both junctions. We make absent-mindedness explicit with a shared decision rule node $\Pi_D$ for $D_1$ and $D_2$ in the mechanised graph (Figure \ref{fig:absent-minded:b}) (note this is consistent with our mechanised graph definition). $\Pi_D$'s \emph{two outgoing edges now represent two independent draws from the same distribution}. For $D_i$ and $D_j$ to share a decision rule, it is necessary that $dom(D_i)=dom(D_j)$ and $dom(\emph{\Pa}_{D_i})=dom(\emph{\Pa}_{D_j})$. Note that perfect recall implies that for any two decisions belonging to the same agent, one's set of parents is a strict superset of the other's, so their decision rules have a different type signature, which rules out absent-mindedness.

In the following examples, used just to explain this paper's concepts, Alice and Bob play variations of matching pennies with the usual payoffs given according to the \emph{final} state of their two coins (where $a/b$ and $\bar{a}/\bar{b}$ represent heads and tails, respectively). Example \ref{ex:forget} illustrates a consequence of Bob being forgetful -- meaning he cannot remember the \textit{outcome} of his previous decision. In Example \ref{ex:absentminded}, Bob is absent-minded -- he cannot remember whether he has made a decision at all.

\begin{example}[Figures~\ref{fig:noNE:a}-\ref{fig:noNE:c}]
  \label{ex:forget}
    Bob is told he must submit a move in advance ($B_1$) and then confirm it on game day ($B_2$). If his moves agree, payoffs correspond with normal matching pennies, but if his moves disagree, he must forfeit and always loses (these payoffs are shown in Figure \ref{fig:noNE:c}). Bob is forgetful, so on game day he cannot remember his advance choice (i.e., the edge $B_1 \rightarrow B_2$ is missing in Figure \ref{fig:noNE:a}).
\end{example}

\begin{example}[Figures~\ref{fig:noNE:d}-\ref{fig:noNE:f}]
  \label{ex:absentminded}
  In a new game, the pennies start heads up, and Bob decides whether or not to turn the coin over ($B_1$). He is absent-minded, so when he sees heads he cannot remember whether he has already made his move, and he decides again ($B_2$). If he turns the coin having previously chosen to keep heads, Bob gets a $-2$ penalty and Alice a $+2$ bonus. In all other cases, the payoffs correspond with normal matching pennies (payoffs are shown at the leaves of the EFG in Figure~\ref{fig:noNE:e}).
\end{example}

Observe that the MAID's regular graph (just the object-level variables) is identical for both Figures~\ref{fig:noNE:a} and \ref{fig:noNE:d} with the missing $B_1 \rightarrow B_2$ edge implying imperfect recall. The difference between forgetfulness and absent-mindedness is only revealed by the mechanised graph. Forgetful Bob has two independent decision rules $\Pi_{B_1}$ and $\Pi_{B_2}$ for $B_1$ and $B_2$. Absent-minded Bob only has one shared decision rule $\Pi_{B}$.

\begin{figure}[tb]
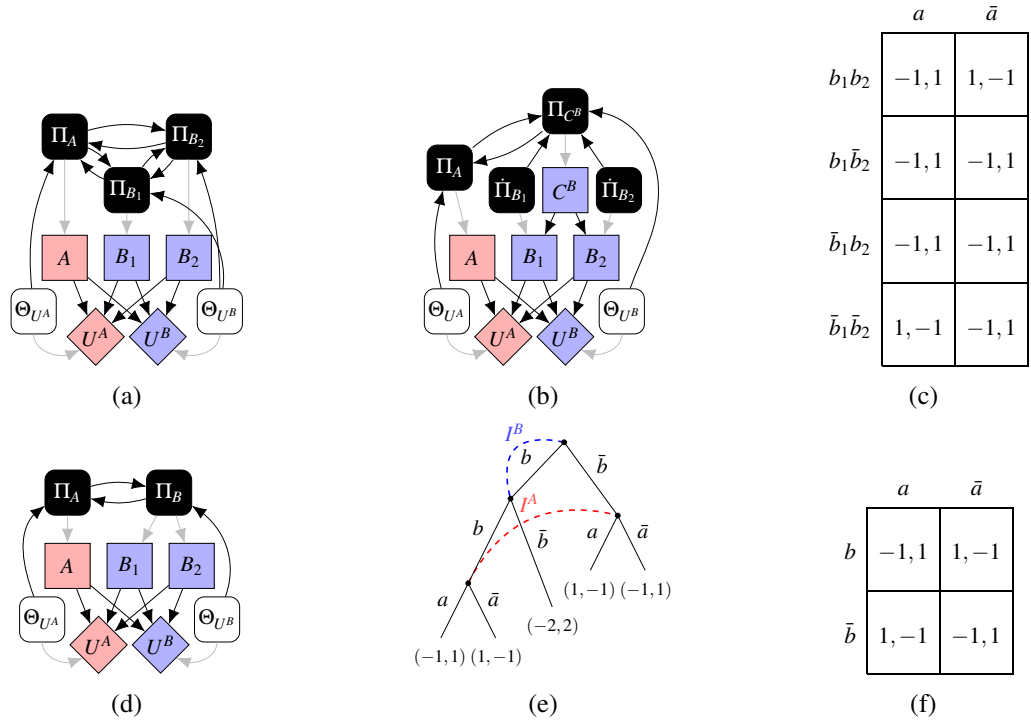

  \centering
  \begin{subfigure}[b]{0.34\linewidth}
        \vspace{-1em}
      \centering
      \resizebox{0.75\width}{!}{
        \begin{influence-diagram}
          \node (D1) [decision, player1] {$A$};
          \node (D21) [decision, right = 1.1cm of D1, player2] {$B_1$};
          \node (D22) [decision, right = 1.1cm of D21, player2] {$B_2$};
          \node (D1mec) [relevanceb, above right = 2.15cm and 0cm of D1] {$\Pi_A$};
          \node (D21mec) [relevanceb, above = 1.2cm of D21] {$\Pi_{B_1}$};
          \node (D22mec) [relevanceb, above left = 2.15cm and 0cm of D22] {$\Pi_{B_2}$};
          \node (U1) [utility, below right = 1.4cm and 0.55cm of D1, player1] {$U^A$};
          \node (U2) [utility, below right = 1.4cm and 0.55cm of D21, player2] {$U^B$};

          \node (U1mec) [relevancew, above left = 0.45cm and 1.1cm of U1] {$\Theta_{U^A}$};
          \node (U2mec) [relevancew, above right = 0.45cm and 1.1cm of U2] {$\Theta_{U^B}$};

          \edge {D1} {U1, U2};
          \edge {D21} {U1, U2};      
          \edge {D22} {U1, U2};
          \edge [gray!50] {D1mec} {D1};
          \edge [gray!50] {D21mec} {D21};
          \edge [gray!50] {D22mec} {D22};

          \path (D1mec) edge[->, bend left=15] (D21mec);
          \path (D21mec) edge[->, bend left=15] (D22mec);
          \path (D21mec) edge[->, bend left=15] (D1mec);
          \path (D1mec) edge[->, bend left=15] (D22mec);
          \path (D22mec) edge[->, bend left=15] (D21mec);
          \path (D22mec) edge[->, bend left=15] (D1mec);

          \path (U1mec) edge[->, bend left=15] (D1mec);
          \path (U2mec) edge[->, bend right=10] (D22mec);
          \path (U2mec) edge[->, bend right=45] (D21mec);

          \draw [gray!50] (U1mec) edge[,->,in=-140,out=-90] (U1);

          \draw [gray!50] (U2mec) edge[,->,in=-40,out=-90] (U2);

      \end{influence-diagram}
      }
      \caption{}
      \label{fig:noNE:a}
  \end{subfigure}
  \begin{subfigure}[b]{0.34\linewidth}
        \vspace{-1em}
    \centering
    \resizebox{0.75\width}{!}{
      \begin{influence-diagram}
        \node (D1) [decision, player1] {$A$};
        \node (D21) [decision, right = 1.1cm of D1, player2] {$B_1$};
        \node (D22) [decision, right = 1.1cm of D21, player2] {$B_2$};
        \node (M) [decision, above right = 1.2cm and 0.55cm of D21, player2] {$C^B$};
        \node (Mmec) [relevanceb, above = 1.4cm of M] {$\Pi_{C^B}$};
        \node (D1mec) [relevanceb, above left = 1.6cm and 0.4cm of D1] {$\Pi_A$};
        \node (U1) [utility, below right = 1.4cm and 0.55cm of D1, player1] {$U^A$};
        \node (U2) [utility, below right = 1.4cm and 0.55cm of D21, player2] {$U^B$};
        \edge {D1} {U1, U2};
        \edge {D21} {U1, U2};      
        \edge {D22} {U1, U2}; 
        \edge {M} {D21, D22}; 
        \edge [gray!50] {Mmec} {M};
        \edge [gray!50] {D1mec} {D1};
        \path (Mmec) edge[->, bend left=15] (D1mec);
        \path (D1mec) edge[->, bend left=15] (Mmec);

        \node (D21mec) [relevanceb, above left = 1.2cm and 0.4cm of D21] {$\dot{\Pi}_{B_1}$};
        \node (D22mec) [relevanceb, above right = 1.2cm and 0.4cm of D22] {$\dot{\Pi}_{B_2}$};
        \edge [gray!50] {D21mec} {D21};
        \edge [gray!50] {D22mec} {D22};
        \edge {D21mec, D22mec} {Mmec};

        \node (U1mec) [relevancew, above left = 0.45cm and 1.0cm of U1] {$\Theta_{U^A}$};
        \node (U2mec) [relevancew, above right = 0.45cm and 1.0cm of U2] {$\Theta_{U^B}$};
        \path (U1mec) edge[->, bend left=15] (D1mec);

        \node (space1) [minimum size=0mm, node distance=2mm, right = 0.7cm of D22mec, draw=none] {};
        \draw (U2mec) edge[,in=-90,out=70] (space1.center)
          (space1.center) edge[,->,out=90,in=0] (Mmec);

        \draw [gray!50] (U1mec) edge[,->,in=-140,out=-90] (U1);

        \draw [gray!50] (U2mec) edge[,->,in=-40,out=-90] (U2);

    \end{influence-diagram}
    }
    \caption{}
    \label{fig:noNE:b}
\end{subfigure}
\begin{subfigure}[b]{0.28\linewidth}
      \vspace{-1em}
  \centering
    \scalebox{.8}{$
    \begin{array}{r|c|c|c|c}
          \multicolumn{1}{c}{\rule[-1.2ex]{0pt}{3ex}}						&	\multicolumn{1}{c}{\mathwordbox{a}{xxxxx}}	&	\multicolumn{1}{c}{\mathwordbox{\bar{a}}{xxxxx}}	\\\cline{2-3}
          \rule[-3ex]{0pt}{8ex}b_1b_2			&	-1,1					& 1,-1												\\\cline{2-3}
          \rule[-3ex]{0pt}{8ex}b_1\bar{b}_2			&	-1,1					&	-1,1												\\\cline{2-3}
          \rule[-3ex]{0pt}{8ex}\bar{b}_1	{b}_2		&	-1,1					&	-1,1											\\\cline{2-3}
          \rule[-3ex]{0pt}{8ex}\bar{b}_1	\bar{b}_2		&	1,-1					&	-1,1											\\\cline{2-3}
      \end{array}
    $}
    \caption{}
    \label{fig:noNE:c}
  \end{subfigure}
  
\begin{subfigure}[b]{0.34\linewidth}
  \centering
  \resizebox{0.75\width}{!}{
    \begin{influence-diagram}
      \node (D1) [decision, player1] {$A$};
      \node (D21) [decision, right = 1.1cm of D1, player2] {$B_1$};
      \node (D22) [decision, right = 1.1cm of D21, player2] {$B_2$};
      \node (D1mec) [relevanceb, above = 1.3cm of D1] {$\Pi_A$};
      \node (D2mec) [relevanceb, above right = 1.3cm and 0.7cm of D21] {$\Pi_{B}$};
      \node (U1) [utility, below right = 1.4cm and 0.55cm of D1, player1] {$U^A$};
      \node (U2) [utility, below right = 1.4cm and 0.55cm of D21, player2] {$U^B$};
      \edge {D1} {U1, U2};
      \edge {D21} {U1, U2};      
      \edge {D22} {U1, U2};
      \edge [gray!50] {D1mec} {D1};
      \edge [gray!50] {D2mec} {D21, D22};

      \path (D1mec) edge[->, bend left=15] (D2mec);
      \path (D2mec) edge[->, bend left=15] (D1mec);

      \node (U1mec) [relevancew, above left = 0.45cm and 1.0cm of U1] {$\Theta_{U^A}$};
        \node (U2mec) [relevancew, above right = 0.45cm and 1.0cm of U2] {$\Theta_{U^B}$};
    \draw [gray!50] (U1mec) edge[,->,in=-140,out=-90] (U1);
    \draw [gray!50] (U2mec) edge[,->,in=-40,out=-90] (U2);

    \draw (U1mec) edge[,->,in=-150,out=110] (D1mec);
    \draw (U2mec) edge[,->,in=-30,out=70] (D2mec);

  \end{influence-diagram}
  }
  \caption{}
  \label{fig:noNE:d}
\end{subfigure}
\begin{subfigure}[b]{0.34\linewidth}
\centering
\resizebox{0.75\width}{!}{
\begin{istgame}
  \istroot(0){}+10mm..19mm+
  \istb{b}[al]{} \istbA(1.3){\bar{b}}[ar]{} \endist
  \istroot(1)(0-1) \istb{b}[al]{} \istbA(1.3){\bar{b}}[ar]{(-2,2)} \endist
  \xtdistance{10mm}{10mm}
  \istroot(2)(0-2) \istbA(1.0){a}[al]{(1,-1)} \istb{\bar{a}}[ar]{(-1,1)} \endist
  \xtdistance{10mm}{10mm}
  \istroot(3)(1-1) \istbA(1.0){a}[al]{(-1,1)} \istb{\bar{a}}[ar]{(1,-1)} \endist
  \setxtinfosetstyle{dashed,blue,thick}
  \xtCInfoset(0)(1)<0.2>{\textcolor{blue!70}{$I^B$}}[above]
  \setxtinfosetstyle{dashed,red,thick}
  \xtCInfoset(1-1)(2)<1.4>{\textcolor{red!70}{$I^A$}}[above]
\end{istgame}
}
  \caption{}
  \label{fig:noNE:e}
\end{subfigure}
\begin{subfigure}[b]{0.28\linewidth}
\centering
\scalebox{.8}{$
\begin{array}{r|c|c|c|c}
  \multicolumn{1}{c}{\rule[-1.2ex]{0pt}{3ex}}						&	\multicolumn{1}{c}{\mathwordbox{a}{xxxxx}}		&	\multicolumn{1}{c}{\mathwordbox{\bar{a}}{xxxxx}}	\\\cline{2-3}
  \rule[-3ex]{0pt}{8ex}b			&	-1,1									&	1,-1						\\\cline{2-3}
  \rule[-3ex]{0pt}{8ex}\bar{b}			&	1,-1								&	-1,1						\\\cline{2-3}
\end{array}
$}
\caption{}
\label{fig:noNE:f}
\end{subfigure}
\caption{The mechanised graphs for forgetful Bob (Example \ref{ex:forget}) using (a) behavioural or (b) mixed policies, with normal-form in (c). (d) The mechanised graph for absent-minded Bob (Example \ref{ex:absentminded}) using a behavioural policy, with EFG and normal-form representations in (e) and (f). 
}
\label{fig:noNE}
\end{figure}

Examples \ref{ex:forget} and \ref{ex:absentminded} demonstrate that both types of imperfect recall can mean an NE in behavioural policies may not exist, even in zero-sum two agent MAIDs with binary decisions. The normal-form games (in Figures \ref{fig:noNE:c} and \ref{fig:noNE:f}) show that neither contains an NE in pure policies. It is also easy to prove non-existence in behavioural policies (see Appendix~\ref{app:proofs}). This arises due to the grand best response function being non-convex valued, which violates a condition of Kakutani's fixed point theorem.

\begin{proposition}
  \label{prop:noNE}
  Both forgetfulness and absent-mindedness can prevent the existence of an NE in behavioural policies.
\end{proposition}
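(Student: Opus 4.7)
The plan is to use the two witnesses already exhibited. Both are two-agent zero-sum games with binary actions, so the analysis reduces to a handful of scalar parameters: Alice's $q \coloneqq \pi_A(a)$ together with Bob's $(p_1, p_2) \coloneqq (\pi_{B_1}(b), \pi_{B_2}(b))$ in Example \ref{ex:forget}, or the single scalar $p \coloneqq \pi_B(b)$ in Example \ref{ex:absentminded}, since absent-mindedness forces $p_1 = p_2$ via the shared decision rule $\Pi_B$. I would first rule out pure NEs by inspecting the normal-form tables in Figures \ref{fig:noNE:c} and \ref{fig:noNE:f}: each game, restricted to Bob's consistent pure strategies, is a matching-pennies sub-game with no pure equilibrium, and Bob's inconsistent (forfeit or penalty) strategies are strictly dominated against each of Alice's pure actions, so neither extended game has a pure NE either.

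For the mixed case I would expand $EU^B$ as a closed-form polynomial in the policy parameters and inspect its Hessian. In Example \ref{ex:forget} one obtains $EU^B = (2q-1)(1 - p_1 - p_2) + 2 p_1 p_2 - p_1 - p_2$, whose mixed partial in $(p_1, p_2)$ equals $2$ while the pure second partials vanish, so the unique interior critical point $p_1 = p_2 = q$ is a saddle and Bob's best response always lies at a vertex of $[0,1]^2$. A finite case split on Bob's vertex responses and Alice's reactions yields no fixed point: for $q \ne 1/2$, Bob's unique pure best response is one of $(0,0)$ or $(1,1)$ but Alice strictly prefers to deviate to $q \in \{0, 1\}$ inconsistent with the chosen $q$; for $q = 1/2$, Bob is indifferent between $(0,0)$ and $(1,1)$ but cannot mix them behaviourally without re-introducing the forfeit penalty, and against either pure choice Alice's strict best response is again $q \in \{0,1\} \ne 1/2$. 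Example \ref{ex:absentminded} is handled analogously: a direct calculation yields $\partial^2 EU^B / \partial p^2 = 4q + 2 > 0$, so $EU^B(\cdot, q)$ is strictly convex in $p$ and maximised at a vertex; indifference between $p = 0$ and $p = 1$ occurs only at $q = 1/2$, yet against either vertex Alice's unique best response is $q \in \{0,1\}$, again a contradiction.

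The main obstacle is the mixed analysis: behavioural mixing in a MAID is strictly more restrictive than normal-form mixing, so one must explicitly rule out any equilibrium that a naive Kakutani-style argument would suggest. The guiding intuition, mentioned in the main text, is that Bob's grand best-response correspondence is not convex-valued at $q = 1/2$ (it jumps from $\{0\}$ through the discrete set $\{0,1\}$ to $\{1\}$ as $q$ crosses $1/2$), so the hypotheses of Kakutani's fixed point theorem are not met; the boundary-and-gradient analysis above then turns this intuition into explicit non-existence. Since both examples are small and zero-sum, the full verifications are routine and can be deferred to Appendix~\ref{app:proofs} as the paper indicates.
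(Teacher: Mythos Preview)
Your proposal is correct and uses the same two witnesses as the paper, but the mixed-case analysis follows a genuinely different route.

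For Example~\ref{ex:forget}, the paper argues directly from indifference: Alice must randomise, so her indifference forces $(1-p_1)(1-p_2)=p_1p_2$, i.e.\ $p_1+p_2=1$; then Bob's incentive to avoid forfeit yields $(2p_1-1)^2<0$, a contradiction. For Example~\ref{ex:absentminded}, the paper again uses an indifference condition on Alice to pin down $p$, then shows Bob's resulting mixed payoff is dominated by a pure policy. Your approach instead analyses the curvature of $EU^B$: bilinearity (saddle Hessian) in the forgetful case and strict convexity in $p$ in the absent-minded case force Bob's best response to a vertex of his policy polytope, after which a short case split on $q$ shows the best-response correspondence has no fixed point. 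This is slightly more systematic and connects cleanly to the Kakutani intuition you invoke (the failure is precisely that Bob's vertex best responses at $q=\tfrac12$ cannot be convexly combined within the behavioural policy space without losing optimality). The paper's route is more elementary, avoiding second derivatives, but yours makes the geometric reason for non-existence more transparent.

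One minor slip: in Example~\ref{ex:absentminded}, with $q=\pi_A(a)$ and the payoffs of Figure~\ref{fig:noNE:e}, one gets $EU^B=p^2(3-2q)-p(1+2q)+(2q-1)$, so $\partial^2 EU^B/\partial p^2=6-4q$, not $4q+2$ (your formula corresponds to $q=\pi_A(\bar a)$). Since both expressions are strictly positive on $[0,1]$, the convexity conclusion and the rest of your argument are unaffected.
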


\section{Solution Concepts for MAIDs under Imperfect Recall}
\label{sec:imperfect}

To overcome the fact that a behavioural policy NE may not exist in imperfect recall MAIDs, one can use mixed or correlated policies. These ensure that the grand best response function always satisfies the conditions of Kakutani's fixed point theorem, so an equilibrium always exists. We show how the assumptions behind mixed policies, behavioural mixtures, and correlated equilibria (well-studied in EFGs \citep{kaneko1995behavior,von2008extensive}, but unexplored in MAIDs) are made graphically explicit in mechanised graphs. 

\subsection{Mixed Policies and Behavioural Mixtures}
\label{sec:mixed}

 Behavioural policies allow agents to randomise independently at every decision node. By contrast, a \textbf{mixed policy} $\mu^i \in \Delta ({\dot{\bm{P}}}^i)$ is a distribution over pure policies. It allows an agent to coordinate their choice of decision rules at different decisions by randomising once at the game's outset and then committing to the assigned pure policy. More generally, \textbf{behavioural mixtures} in $\Delta ({\bm{P}}^i)$ are distributions over all behavioural policies. They
 allow agents to randomise \textit{both} at the outset of the game and before each decision. The outcome of the first randomisation determines the distributions for the others. 

A behavioural mixture changes the specification of the game because it can require correlation between different decision rules. At the object-level, a behavioural mixture for agent $i$ requires a new (correlation) decision variable $C^i$ with $\Pa_{C^i} = \varnothing$, $\Ch_{C^i} = \bm{D}^i$, and $\dom(C^i) = {\bm{P}}^i$ (the set of all behavioural policies).
The decision rules for each $D^i$ become conditional on $C^i$, so each value of $C^i$ determines a behavioural policy. This explains why $C^i$ and still every $D \in \bm{D}^i$ are decision nodes -- the agent chooses the CPDs for both. Even in the mixed policy case, where each $D^i$ depends deterministically on $C^i$, the agent chooses the dependence independently from choosing the distribution over $C^i$.  In the mechanised graph (see Figure \ref{fig:absent-minded:c}), $C^i$ gets an associated mechanism variable $\Pi_{C^i}$ for the distribution $C^i$ is drawing from (its mechanism parents are again determined by $s$-reachability). 

In EFGs, the mechanism by which agents decide on their decision rules is not explicitly shown. Mechanised graphs, however, show clearly when an agent chooses to randomise. Behavioural and mixed policies are the limiting cases of behavioural mixtures: the former where the distribution over ${\bm{P}}^i$ is deterministic; the latter where the decision rules $\bm{\Pi}_{\bm{D}^i}$ are deterministic. The difference between forgetful Bob in Example~\ref{ex:forget} using a behavioural or mixed policy is shown in Figures \ref{fig:noNE:a} and \ref{fig:noNE:b}. For Bob's behavioural policy, $C^B$ and $\Pi_{C^B}$ are omitted as the decision rules $\Pi_{B_1}$ and $\Pi_{B_2}$ are independent. This leaves a normal mechanised graph. Whereas, if Bob uses a mixed policy, he only randomises once from $\Pi_{C^B}$ at the start of the game to select a pure policy at $C^B$. This fixes deterministic decision rules at $\dot{\Pi}_{B_1}$ and $\dot{\Pi}_{B_2}$.

\begin{proposition}
  \label{prop:BRdeviation}
Given a MAID $\model$ with any partial profile ${\bm{\pi}}^{-i}$ for agents $-i$, then if agent $i$ is not absent-minded, for any behavioural policy ${\bm{\pi}}^i$ there exists a pure policy ${\dot{\bm{\pi}}}^i$ which yields a payoff at least as high against ${\bm{\pi}}^{-i}$.
    On the other hand, if agent $i$ is absent-minded in $\model$ across a pair of decisions with descendants in $\bm{U}^i$, then there exists a parameterisation of $\model$ and a behavioural policy 
${\bm{\pi}}^i$ which yields a payoff strictly higher than any payoff achievable by a pure policy.
\end{proposition}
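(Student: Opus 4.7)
The plan is to treat the two halves separately. Part~1 is a standard multilinearity argument exploiting the independence of an agent's decision rules when they are not absent-minded; Part~2 constructs a parameterisation that embeds the absent-minded driver of Figure~\ref{fig:absent-minded} into the given MAID.

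For Part~1, fix the partial profile ${\bm{\pi}}^{-i}$ and any behavioural policy ${\bm{\pi}}^i$. Because agent $i$ is not absent-minded, each decision $D \in \bm{D}^i$ has its own decision rule node in the mechanised graph, so the factorisation of $\Pr^{{\bm{\pi}}}$ contains each $\pi_D$ exactly once. Fixing all other rules, $EU^i$ is therefore separately linear in $\pi_D$:
\[
EU^i({\bm{\pi}}^i, {\bm{\pi}}^{-i}) \;=\; \sum_{\pa_D}\sum_{d \in \dom(D)} \pi_D(d \mid \pa_D)\,g_D(d, \pa_D),
\]
for coefficients $g_D$ that do not depend on $\pi_D$. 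For each context $\pa_D$ we may replace $\pi_D(\cdot \mid \pa_D)$ by a Dirac mass at some $d^*(\pa_D) \in \arg\max_d g_D(d, \pa_D)$; this weakly increases $EU^i$ and yields a pure rule at $D$. Iterating this replacement once over every $D \in \bm{D}^i$ produces the required pure policy ${\dot{\bm{\pi}}}^i$.

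For Part~2, I would embed the absent-minded driver into $\model$. Suppose $D_1, D_2 \in \bm{D}^i$ share a decision rule and that some $U \in \bm{U}^i$ is a common descendant of both. Parameterise every chance variable on the paths $D_1 \to U$ and $D_2 \to U$ to copy its decision ancestor, so that $U$ depends deterministically on $(D_1, D_2)$, and zero out every other utility in $\bm{U}^i$. Choose ${\bm{\pi}}^{-i}$ and the remaining chance CPDs so that $\Pa_{D_1}$ and $\Pa_{D_2}$ take fixed values $\pa^*$, forcing $D_1$ and $D_2$ to be independent draws from the same rule $\pi_D(\cdot \mid \pa^*)$. Parameterise the payoff at $U$ to mirror the driver: $4$ when $(D_1, D_2) = (c, e)$, $1$ when $(c, c)$, and $0$ otherwise. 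Writing $p = \Pr(D = c)$ gives $EU^i = 4p - 3p^2$, a strictly concave function with maximum $4/3$ at $p = 2/3$; the pure policies ($p \in \{0, 1\}$) achieve only $0$ or $1$, so the behavioural policy with $p = 2/3$ strictly dominates them.

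The main obstacle is the embedding step of Part~2: one must check that a single node in $\bm{U}^i$ can carry a payoff jointly dependent on both $D_1$ and $D_2$, which needs a common utility descendant together with enough freedom in the intervening chance CPDs to propagate both decisions down to $U$ without collapsing the dependence. Once this routing is arranged and all unrelated utilities are zeroed out, both halves of the proposition reduce to routine verification.
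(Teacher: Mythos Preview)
Your Part~1 is essentially the paper's argument: both exploit that, without absent-mindedness, each $\pi_D$ appears exactly once in the factorisation of $\Pr^{{\bm{\pi}}}$, so $EU^i$ is affine in each decision rule separately, allowing rule-by-rule replacement by a pure maximiser.

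For Part~2 the paper takes a more economical route than your driver embedding. Rather than reproduce the $(0,1,4)$ payoff scheme and optimise a concave quadratic, the paper picks any single outcome $\hat{\bm v}\in\dom(\bm V)$ in which $\pa_{D_1}=\pa_{D_2}$ but $d_1\neq d_2$, and parameterises so that $EU^i=1$ on $\hat{\bm v}$ and $EU^i=0$ on every other outcome. Since $D_1$ and $D_2$ share a decision rule, any pure policy forces $d_1=d_2$ whenever the contexts agree, so every pure policy scores~$0$; any behavioural rule putting positive mass on both $d_1$ and $d_2$ hits $\hat{\bm v}$ with positive probability and scores strictly positively. This buys simplicity: no driver-specific constants, no quadratic optimisation, no need to fix the parents to a single value $\pa^*$, and it works directly for arbitrary domain sizes. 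Your construction is also correct and has the virtue of exhibiting an explicit quantitative gap ($4/3$ versus $1$), but it carries more overhead. Both approaches share the routing concern you flag---that some utility in $\bm U^i$ must be made jointly sensitive to $(D_1,D_2)$---and the paper handwaves this in exactly the same place you do.
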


Proposition \ref{prop:BRdeviation} says that a non-absent-minded agent cannot achieve more expected utility by using a behavioural rather than a pure (or mixed) policy, but an absent-minded agent often can. Consider Figure \ref{fig:absent-minded:c}, where $\dom(C^D) = \dot{{\bm{P}}}^D$, the set of all the driver's pure policies. $\Pi_{C^D}$ represents the distribution over $\dom(C^D)$, so $D_1$ and $D_2$ must both be~$e$ or both be~$c$. Therefore, $EU^D \leq 1$ under any mixed policy. Whereas, under the behavioural policy $\pi^1_D(e)=\frac{1}{3}$, $EU^D = \frac{4}{3}$.
This highlights an important difference between absent-mindedness and forgetfulness. Under perfect recall, every mixed policy has an equivalent behavioural policy, in the sense of inducing the same distribution over outcomes against every opposing policy profile \cite{causalgames}. Under forgetfulness, whilst a mixed policy might not have an equivalent behavioural policy, a behavioural policy always has an equivalent mixed policy \cite{Kuhn1953}, so there must exist a pure policy which performs just as well. On the other hand, under absent-mindedness, neither mixed nor behavioural policies are guaranteed to have an equivalent of the other type, so there can be a behavioural policy which outperforms every mixed policy against a given policy profile.

We introduce mixed policies (and behavioural mixtures) to MAIDs to allow more generality in modelling when agents randomise and to guarantee an NE. However, a mixed policy can require exponentially more parameters $\mathcal{O}(2^{2^{|\bm{V}|}})$ than a behavioural policy $\mathcal{O}(2^{|\bm{V}|})$ to define. Moreover, single agents are often more naturally modelled as randomising once they meet decision points \citep{Kuhn1953} (this changes for team situations described in Section \ref{sec:markov}). 
It is therefore important to know when existence of each type of NE is guaranteed. The sufficient recall result was proved by \cite{causalgames}, which we adapt to get the sufficient information result (in Appendix \ref{app:proofs}). The mixed policies result follows directly from Nash's theorem \citep{nash1950equilibrium}.

\begin{proposition}
  \label{prop:suffexistence}
  A MAID with sufficient information always has an NE in pure policies, a MAID with sufficient recall always has an NE in behavioural policies, and every MAID has an NE in mixed policies.
\end{proposition}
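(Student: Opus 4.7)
The plan is to handle the three existence claims separately, guided by the paper's own remark that the first follows from Nash's theorem and the second is proved in \cite{causalgames}. For mixed policies, we view the MAID as a finite $n$-person normal-form game whose pure-strategy set for agent $i$ is $\dot{{\bm{P}}}^i$ and whose payoff function is $EU^i$; since each $\dot{{\bm{P}}}^i$ is finite and $EU^i$ extends multilinearly to $\bigtimes_i\Delta(\dot{{\bm{P}}}^i)$, Nash's theorem \citep{nash1950equilibrium} delivers an NE. The sufficient recall claim is established in \cite{causalgames} by backward induction along a topological order of each agent's decision-rule subgraph $\bm{\Pi}_{\bm{D}^i}$, so we quote it directly and focus our effort on extending that template under the stronger sufficient information hypothesis.

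Under sufficient information, Definition~\ref{def:sufficient} guarantees that the subgraph of $\mec{\graph}$ restricted to $\bm{\Pi}_{\bm{D}}$ is acyclic, so we fix a topological order $\Pi_{D_1},\ldots,\Pi_{D_m}$ of all decision-rule nodes. We construct a pure profile $\dot{\bm{\pi}}$ greedily: for $k=1,\ldots,m$, having fixed $\dot{\pi}_{D_1},\ldots,\dot{\pi}_{D_{k-1}}$ and the parameters $\bm{\theta}$, we pick $\dot{\pi}_{D_k}$ to maximise $EU^{i}$ over decision rules for $D_k$, where $D_k\in\bm{D}^i$. Since no variable in $\Pa_{D_k}$ is a descendant of $D_k$ in the DAG, $\Pr(\pa_{D_k})$ is unaffected by $\pi_{D_k}$; writing
\[
EU^i \;=\; \sum_{\pa_{D_k}}\Pr(\pa_{D_k})\sum_{d}\pi_{D_k}(d\mid \pa_{D_k})\,g(d,\pa_{D_k}),
\]
for a context-conditional term $g$ that is independent of $\pi_{D_k}$, the inner maximisation decomposes across contexts, so selecting the argmax action for each $\pa_{D_k}$ yields a pure rule. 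The footnote to Definition~\ref{def:sufficient} assures us that no two object-level decisions share a mechanism here, so this per-context argmax is unambiguous.

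The main obstacle is showing that $\dot{\bm{\pi}}$ is a genuine NE, not merely a fixed point of coordinate-wise best response -- the one-shot deviation principle is notoriously fragile under imperfect recall. The rescue is soundness of $s$-reachability \citep{koller2003multi}: the absence of an edge $\Pi_{D_j}\to\Pi_{D_k}$ in $\mec{\graph}$ means the optimality of $\pi_{D_k}$ is insensitive to the choice of $\pi_{D_j}$, which by our topological order holds for every $j>k$. Given an arbitrary deviation ${\bm{\varpi}}^i$ by agent $i$, let $k_1<\cdots<k_p$ enumerate $i$'s decisions in topological order, and define intermediate profiles $\bm{\pi}^{(\ell)}$ that agree with $\dot{\bm{\pi}}$ on $D_{k_1},\ldots,D_{k_\ell}$ and with $\bm{\varpi}^i$ on $D_{k_{\ell+1}},\ldots,D_{k_p}$, with ${\dot{\bm{\pi}}}^{-i}$ held fixed throughout. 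At step $\ell\!\to\!\ell{+}1$, all rules earlier in the topological order than $D_{k_{\ell+1}}$ have already been set to $\dot{\pi}$ (agent $i$'s by induction, the others from $\dot{\bm{\pi}}^{-i}$), matching the partial profile against which $\dot{\pi}_{D_{k_{\ell+1}}}$ was optimised, while later rules are irrelevant by the insensitivity property. Hence each replacement weakly increases $EU^i$, yielding $EU^i({\bm{\varpi}}^i,\dot{\bm{\pi}}^{-i}) = EU^i(\bm{\pi}^{(0)}) \leq \cdots \leq EU^i(\bm{\pi}^{(p)}) = EU^i(\dot{\bm{\pi}})$, which is the NE condition.
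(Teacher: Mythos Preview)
Your proof is correct and rests on the same core idea as the paper's: exploit the acyclicity of the decision-rule subgraph of $\mec{\graph}$ (guaranteed by sufficient information) together with the soundness of $s$-reachability to perform a backward-induction construction of a pure profile. The differences are largely presentational. The paper phrases the induction through the subdiagram/subgame machinery of Definition~\ref{def:subdiagram}, starting from an arbitrary full profile and optimising one decision at a time as subgames grow; you instead work directly with a topological order on the $\Pi_D$ nodes, which is cleaner and avoids introducing subgames. More notably, you supply an explicit telescoping argument (the chain $\bm{\pi}^{(0)},\ldots,\bm{\pi}^{(p)}$) verifying the NE condition, whereas the paper simply asserts at the end that the constructed profile is an NE; your version makes transparent exactly where soundness of $s$-reachability is invoked to handle the not-yet-fixed later rules. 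One small point you leave implicit in the construction step is what values $\pi_{D_{k+1}},\ldots,\pi_{D_m}$ take when you maximise $EU^i$ at stage $k$; it would be worth stating that any placeholder suffices (by the same insensitivity you use later), or equivalently that you begin from an arbitrary initial profile as the paper does.
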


    Since both sufficient recall and sufficient information (Definition \ref{def:sufficient}) can be checked in poly-time\footnote{The mechanised graph is constructed using $s$-reachability, which uses the poly-time graphical criterion d-separation \citep{shachter2013bayes}.}, they expand the class of games that have simple NEs beyond those identifiable using an EFG. For example, we can check in poly-time that the MAID in Figure \ref{fig:taxi:a} is an imperfect, but sufficient, information game, and hence know that there must exist an NE in pure policies.

\subsection{Correlated Equilibria}
\label{sec:correlated}

 We have just shown how mechanised graphs can explicitly represent the assumption behind mixed policies: a \emph{single} agent uses a source of randomness to correlate their decision rules. We now do the same for when \emph{multiple} agents can use the same source of randomness, so the choice of pure policy made by each agent may be correlated. An equilibrium in such a game is called a \emph{correlated equilibrium (CE)}~\citep{aumann1974subjectivity}, which is a distribution $\kappa$ over the set of all pure policy profiles, i.e., $\kappa \in \Delta(\dot{\bm{P}})$. A mediator samples $\dot{\bm{\pi}}$ according to $\kappa$, then recommends to each agent $i$ the pure policy $\dot{\bm{\pi}}^i$. 
  The distribution $\kappa$ is a CE if no agent, given their information, has an incentive to unilaterally deviate from their recommended policy~$\dot{\bm{\pi}}^i$.

\begin{definition}
    \label{def:CE}
    In a MAID, $\kappa \in \Delta(\dot{\bm{P}})$ is a \textbf{correlated equilibrium (CE)} if and only if $ \forall i$, $\forall \dot{{\bm{\pi}}}^i, \dot{{\bm{\varpi}}}^i \in \dot{{\bm{P}}}^i$:
  \[
  \sum_{\dot{\bm{\pi}}^{-i} \in \dot{\bm{P}}^{-i}} 
  \kappa({\bm{\dot{\pi}}}^i, {\bm{\dot{\pi}}}^{-i})EU^i({\bm{\dot{\pi}}}^i, {\bm{\dot{\pi}}}^{-i}) 
  \geq 
  \sum_{\dot{\bm{\pi}}^{-i} \in \dot{\bm{P}}^{-i}}
  \kappa({\bm{\dot{\pi}}}^i, {\bm{\dot{\pi}}}^{-i}) EU^i(\dot{\bm{\pi}}^{-i}, \dot{{\bm{\varpi}}}^i)
  \]
  \end{definition}

  \begin{figure}[tb]
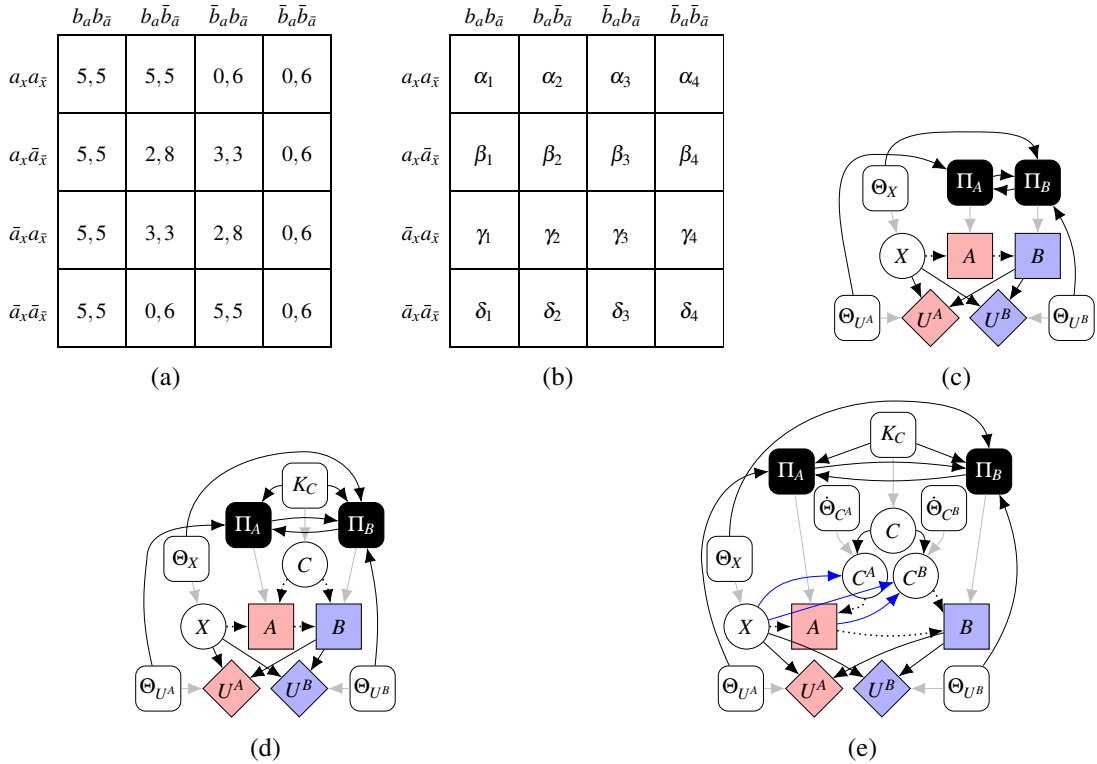

    \centering
 \begin{subfigure}[b]{0.32\linewidth}
      \vspace{-1em}
  \centering
    \scalebox{0.75}{$
      \begin{array}{r|c|c|c|c|c|c}
        \multicolumn{1}{c}{\rule[-1.2ex]{0pt}{3ex}}						&	\multicolumn{1}{c}{\mathwordbox{b_ab_{\bar{a}}}{xxxxx}}	&	\multicolumn{1}{c}{\mathwordbox{b_a \bar{b}_{\bar{a}}}{xxxxx}}		&	\multicolumn{1}{c}{\mathwordbox{\bar{b}_a b_{\bar{a}}}{xxxxx}}	&	\multicolumn{1}{c}{\mathwordbox{\bar{b}_a \bar{b}_{\bar{a}}}{xxxxx}}	\\\cline{2-5}
        \rule[-3ex]{0pt}{8ex}a_x a_{\bar{x}}			&	5,5					&	5,5							&	0,6				&		0,6					\\\cline{2-5}
        \rule[-3ex]{0pt}{8ex}a_x \bar{a}_{\bar{x}}		&	5,5					&	2,8						&	3,3				&	0,6						\\\cline{2-5}
        \rule[-3ex]{0pt}{8ex}\bar{a}_x	a_{\bar{x}}		&	5,5					&	3,3						&	2,8				&	0,6						\\\cline{2-5}
        \rule[-3ex]{0pt}{8ex}\bar{a}_x	\bar{a}_{\bar{x}}		&	5,5					&	0,6						&	5,5				&	0,6						\\\cline{2-5}
      \end{array}
      $}
    \caption{}
    \label{fig:CE:a}
  \end{subfigure}
\begin{subfigure}[b]{0.32\linewidth}
      \vspace{-1em}
  \centering
    \scalebox{0.75}{$
        \begin{array}{r|c|c|c|c|c|c}
          \multicolumn{1}{c}{\rule[-1.2ex]{0pt}{3ex}}						&	\multicolumn{1}{c}{\mathwordbox{b_ab_{\bar{a}}}{xxxxx}}	&	\multicolumn{1}{c}{\mathwordbox{b_a \bar{b}_{\bar{a}}}{xxxxx}}		&	\multicolumn{1}{c}{\mathwordbox{\bar{b}_a b_{\bar{a}}}{xxxxx}}	&	\multicolumn{1}{c}{\mathwordbox{\bar{b}_a \bar{b}_{\bar{a}}}{xxxxx}}	\\\cline{2-5}
          \rule[-3ex]{0pt}{8ex}a_x a_{\bar{x}}			&	\alpha_1					&		\alpha_2						&	\alpha_3				&		\alpha_4					\\\cline{2-5}
          \rule[-3ex]{0pt}{8ex}a_x \bar{a}_{\bar{x}}		&	\beta_1					&	\beta_2						&	\beta_3				&	\beta_4						\\\cline{2-5}
          \rule[-3ex]{0pt}{8ex}\bar{a}_x	a_{\bar{x}}		&	\gamma_1					&	\gamma_2						&	\gamma_3				&	\gamma_4						\\\cline{2-5}
          \rule[-3ex]{0pt}{8ex}\bar{a}_x	\bar{a}_{\bar{x}}		&	\delta_1					&	\delta_2						&	\delta_3				&	\delta_4						\\\cline{2-5}
        \end{array}
        $}
    \caption{}
    \label{fig:CE:b}
  \end{subfigure}
    \begin{subfigure}[b]{0.33\linewidth}
    \vspace{-1em}
  \centering
  \resizebox{0.75\width}{!}{
    \begin{influence-diagram}
      \node (X) [] {$X$};
      \node (D1) [decision, right = 1.2cm of X, player1] {$A$};
      \node (D2) [decision, right = 1.2cm of D1, player2] {$B$};
      \node (D1mec) [relevanceb, above = 1.3cm of D1] {$\Pi_A$};
      \node (D2mec) [relevanceb, above = 1.3cm of D2] {$\Pi_B$};

      \node (Xmec) [relevancew, above left = 1.2cm and 0.3cm of X] {$\Theta_X$};
      \node (U1) [utility, below right = 1.1cm and 0.5cm of X, player1] {$U^A$};
      \node (U2) [utility, below right = 1.1cm and 0.5cm of D1, player2] {$U^B$};
      
      \edge {X} {U1, U2};
      \edge [information] {X} {D1};
      \edge [information] {D1} {D2};
      \edge {D2} {U1, U2};
      \edge [gray!50] {Xmec} {X};
      \edge [gray!50] {D1mec} {D1};
      \edge [gray!50] {D2mec} {D2};

      \path (D1mec) edge[->, bend left=15] (D2mec);
      \path (D2mec) edge[->, bend left=15] (D1mec);

      \node (U1mec) [relevancew, left = 1.3cm of U1] {$\Theta_{U^A}$};
      \node (U2mec) [relevancew, right = 1.3cm of U2] {$\Theta_{U^B}$};
      \edge [gray!50] {U1mec} {U1};
      \edge [gray!50] {U2mec} {U2};
      \path (U2mec) edge[->, bend right=20] (D2mec);

      \node (space1) [minimum size=0mm, node distance=2mm, above = 0.9cm of D1mec, draw=none] {};
      \draw (Xmec) edge[,in=180,out=90] (space1.center)
      (space1.center) edge[,->,out=0,in=90] (D2mec);

      \node (space2) [minimum size=0mm, node distance=2mm, above left = 0.6cm and 0.2cm of Xmec, draw=none] {};
      \draw (U1mec) edge[,in=180,out=100] (space2.center)
      (space2.center) edge[,->,out=0,in=150] (D1mec);

  \end{influence-diagram}
  }
  \caption{}
  \label{fig:CE:c}
\end{subfigure}

     \begin{subfigure}[b]{0.49\linewidth}
    \centering
    \resizebox{0.75\width}{!}{
      \begin{influence-diagram}
        \node (X) [] {$X$};
        \node (D1) [decision, right = 1.2cm of X, player1] {$A$};
        \node (D2) [decision, right = 1.2cm of D1, player2] {$B$};
        \node (C) [above right = 1.1cm and 0.6cm of D1] {$C$};
        \node (Cmec) [relevancew, above = 1.4cm of C] {$K_C$};
        \node (D1mec) [relevanceb, above left = 1.8cm and 0.4cm of D1] {$\Pi_A$};
        \node (D2mec) [relevanceb, above right = 1.8cm and 0.4cm of D2] {$\Pi_B$};
    
        \node (Xmec) [relevancew, above left = 1.2cm and 0.3cm of X] {$\Theta_X$};
        \node (U1) [utility, below right = 1.1cm and 0.5cm of X, player1] {$U^A$};
        \node (U2) [utility, below right = 1.1cm and 0.5cm of D1, player2] {$U^B$};
        
        \edge {X} {U1, U2};
        \edge [information] {X} {D1};
        \edge [information] {D1} {D2};
        \edge {D2} {U1, U2};
        \edge [gray!50] {Xmec} {X};
        \edge [gray!50] {D1mec} {D1};
        \edge [gray!50] {D2mec} {D2};
        
        \edge [gray!50] {Cmec} {C};

        \draw (C) edge[,->,thick,dotted,in=110,out=-40] (D2);
        \draw (C) edge[,->,thick,dotted,in=70,out=-140] (D1);

        \draw (Cmec) edge[,->,in=120,out=0] (D2mec);
        \draw (Cmec) edge[,->,in=60,out=180] (D1mec);

        \path (D1mec) edge[->, bend left=10] (D2mec);
        \path (D2mec) edge[->, bend left=10] (D1mec);
    
        \node (U1mec) [relevancew, left = 1.3cm of U1] {$\Theta_{U^A}$};
        \node (U2mec) [relevancew, right = 1.3cm of U2] {$\Theta_{U^B}$};
        \edge [gray!50] {U1mec} {U1};
        \edge [gray!50] {U2mec} {U2};
        \path (U2mec) edge[->, bend right=10] (D2mec);
    
        \node (space1) [minimum size=0mm, node distance=2mm, above = 0.6cm of Cmec, draw=none] {};
        \draw (Xmec) edge[,in=180,out=90] (space1.center)
        (space1.center) edge[,->,out=0,in=90] (D2mec);
    
        \node (space2) [minimum size=0mm, node distance=2mm, left = 0.7cm of Xmec, draw=none] {};
        \draw (U1mec) edge[,in=-90,out=90, bend left=10] (space2.center)
        (space2.center) edge[,->,out=90,in=180] (D1mec);

    \end{influence-diagram}
    }
    \caption{}
    \label{fig:CE:d}
    \end{subfigure}
    \begin{subfigure}[b]{0.49\linewidth}
      \centering
      \resizebox{0.75\width}{!}{
        \begin{influence-diagram}
          \node (X) [] {$X$};
          \node (D1) [decision, right = 1.2cm of X, player1] {$A$};
          \node (D2) [decision, right = 2.7cm of D1, player2] {$B$};
          \node (C) [above right = 1.7cm and 1.4cm of D1] {$C$};
          \node (Cmec) [relevancew, above = 1.7cm of C] {$K_C$};
          \node (D1mec) [relevanceb, above left = 2.75cm and 0.4cm of D1] {$\Pi_A$};
          \node (D2mec) [relevanceb, above right = 2.75cm and 0.4cm of D2] {$\Pi_B$};
  
          \node (Xmec) [relevancew, above left = 1.2cm and 0.3cm of X] {$\Theta_X$};
          \node (U1) [utility, below right = 1.1cm and 1.2cm of X, player1] {$U^A$};
          \node (U2) [utility, right = 1.2cm of U1, player2] {$U^B$};

          \node (CA) [above right = 0.9cm and 0.9cm of D1] {$C^A$};
          \node (CB) [above left = 0.9cm and 0.9cm of D2] {$C^B$};

          \node (CAmec) [relevancew, above left = 1.2cm and 0.5cm of CA] {$\dot{\Theta}_{C^A}$};
          \node (CBmec) [relevancew, above right = 1.2cm and 0.5cm of CB] {$\dot{\Theta}_{C^B}$};

          \edge {X} {U1};
          \edge [information] {X} {D1};
          \edge {D2} {U2};
          \edge [gray!50] {Xmec} {X};
          \edge [gray!50] {D1mec} {D1};
          \edge [gray!50] {D2mec} {D2};
          \edge [gray!50] {Cmec} {C};
          \edge {Cmec} {D1mec, D2mec};

          \path (D1mec) edge[->, bend left=8] (D2mec);
          \path (D2mec) edge[->, bend left=8] (D1mec);
  
          \node (U1mec) [relevancew, left = 1.3cm of U1] {$\Theta_{U^A}$};
          \node (U2mec) [relevancew, right = 1.5cm of U2] {$\Theta_{U^B}$};
          \edge [gray!50] {U1mec} {U1};
          \edge [gray!50] {U2mec} {U2};
          \path (U2mec) edge[->, bend right=30] (D2mec);
  
          \path (D2) edge[->, bend right=7] (U1);
          \path (X) edge[->, bend left=7] (U2);

          \draw (CB) edge[,->,thick,dotted,in=150,out=-40] (D2);
          \draw (CA) edge[,->,thick,dotted,in=30,out=-90] (D1);

          \draw [gray!50] (CBmec) edge[,->,in=60,out=-90] (CB);
          \draw [gray!50] (CAmec) edge[,->,in=120,out=-90] (CA);

          \draw (X) edge[,->,in=180,out=60, blue] (CA);
          \edge [blue] {X} {CB};
          \path (D1) edge[->, bend right=20, blue] (CB);

          \draw (C) edge[,->,in=110,out=-180] (CA);
          \draw (C) edge[,->,in=70,out=0] (CB);

          \draw (D1) edge[,->,thick,dotted,bend right = 10] (D2);

          \node (space1) [minimum size=0mm, node distance=2mm, above = 0.6cm of Cmec, draw=none] {};
          \draw (Xmec) edge[,in=180,out=90] (space1.center)
          (space1.center) edge[,->,out=0,in=90] (D2mec);
  
          \node (space2) [minimum size=0mm, node distance=2mm, left = 0.5cm of Xmec, draw=none] {};
          \draw (U1mec) edge[,in=-90,out=90, bend left=10] (space2.center)
          (space2.center) edge[,->,out=90,in=180] (D1mec);

      \end{influence-diagram}
      }
      \caption{}
      \label{fig:CE:e}
  \end{subfigure}

      \caption{The sub-figures (a) and (b) give the expected payoff for each agent under each pure policy profile and the parameterisation of the distribution $\kappa$, respectively. The mechanised graph for Example~\ref{ex:signaling}'s original MAID is shown in (c), and the mechanised graphs for when a trusted mediator gives public or private recommendations to find a CE are shown in (d) and (e), respectively. The blue edges are added to the graph in (e) for a MAID-CE's staggered recommendations.
            }
    \label{fig:CE}
  \end{figure}

We illustrate how MAIDs and their mechanised graphs make explicit the assumptions used for a CE using a costless-signal variation of Spence's job market game \cite{spence1978job}. 

\begin{example}
  \label{ex:signaling}
  Alice is hardworking or lazy ($X$) with equal probability. She applies for a job with Bob by deciding which costless signal ($A$) to send. Bob can distinguish between the signals, but does not know Alice's true temperament. He decides whether to offer the job ($B$) to Alice. The utility functions for Alice and Bob are $U^A = (6-2X)\cdot B$ and $U^B = 6 + (10X-6)\cdot B$, respectively.
\end{example}

The mechanised graph for the original game's MAID is shown in Figure \ref{fig:CE:c}. The cycle between $\Pi_A$ and $\Pi_B$ reveals that each agent’s decision rule strategically relies on the other agent’s decision rule.\footnote{That Bob strategically relies on Alice’s decision rule might be less obvious than the fact that Alice strategically relies on Bob’s decision rule. The dependency occurs because since Bob can observe $A$, this unblocks an active path $\Pi_A \rightarrow A \leftarrow X \rightarrow U^B$ in the independent mechanised graph, so $\Pi_A$ is $s$-reachable from $\Pi_B$.} Therefore, the MAID has insufficient information and no proper subgames, making it difficult to solve. 

To find the CE of this game, a trusted mediator is added using a \emph{correlation variable} $C$ with $\Pa_{C} = \varnothing$, $\Ch_{C} = \bm{D}$, and $\dom(C) = \dot{\bm{P}}$. In the mechanised graph, $C$'s associated mechanism variable $K_{C}$ represents the distribution $\kappa \in \Delta(\dot{\bm{P}})$ that the mediator draws a pure policy profile according to. This time, since $K_{C}$ is fixed as $\kappa$ at the game outset instead of being chosen by any agent, $C$ acts as a chance variable (in contrast to the correlation decision variable introduced for mixed policies and behavioural mixtures).

There is a well-known difference between public and private recommendations. If public, every payoff in the convex hull of the set of NE payoffs can be attained by a CE; however, if the recommendations are private, then the payoffs to each agent in a CE can lie outside this convex hull (e.g., Aumann's game of chicken \cite{aumann1974subjectivity}). This distinction is made explicit in the MAID's graph. If the recommendations are public, then the full outcome of $C$ (the pure policy profile chosen by the mediator) is known by every agent (shown by the dotted edges between $C$ and both $A$ and $B$ in Figure \ref{fig:CE:d}). If the recommendations are private, then each agent only observes their decision rules (action recommendations) in $C$'s outcome, i.e., all recommendations given to other players are hidden (at $C^A$ and $C^B$ in Figure \ref{fig:CE:e}). In this latter case, the agent infers, using Bayes' rule, a posterior over the pure policy profile that was chosen (and also which action was recommended to the other agent(s)). If $\kappa$ is a CE, then each agent picks for their decision $D$'s decision rule the mediator's recommendation, i.e., $\dot{\pi}_D$ where $c = \dot{\bm{\pi}}$. 
The set of variables $\bm{D}$ remain as decisions because agents are free to deviate from their recommendation and pick any CPDs as decision rules for their decisions. 

This mediator's distribution  $\kappa  \in \Delta(\dot{\bm{P}})$ can be parameterised according to that in Figure \ref{fig:CE:b}. Note that $b_a\bar{b}_{\bar{a}}$ denotes the pure policy profile where Bob offers the job ($b$) to Alice if she selects $a$ and Bob does not offer the job ($\bar{b}$) if Alice selects $\bar{a}$. Using the expected payoff for Alice and Bob under each pure policy profile (Figure \ref{fig:CE:a}),  Definition~\ref{def:CE}'s incentive constraints define 24 inequalities that must be satisfied by the CE distribution. After some algebra, we find that  $\alpha_1 = \alpha_2 = \alpha_3 = \beta_1 = \beta_2 = \beta_3 = \gamma_1 = \gamma_2 = \gamma_3 = 0$; $\alpha_4, \beta_4, \gamma_4, \delta_4 \geq 0$; $\alpha_4 -2\beta_4 + 3\gamma_4 \geq 0$, and $3\beta_4 -2\gamma_4 + \delta_4 \geq 0$. Any CE, therefore, has Bob never offering a job to Alice because they play the pure policy $\bar{b}_a\bar{b}_{\bar{a}}$ with probability 1, i.e., Bob's decision rule has $\pi^B(B=\bar{b}\mid A=a)=\pi^B(B=\bar{b}\mid A=\bar{a})=1$. The remaining constraints require Alice not to give any incentive for Bob to offer her a job by making the conditional probability of Alice being hardworking too high relative to the conditional probability of her being lazy when he receives the signal $a$ or $\bar{a}$. These constraints find that every CE will result in $EU^A = 0$ and $EU^B = 6$. This is unsurprising because, in a signaling game with costless signals, every CE will be a `pooling equilibrium' \cite{cho1987signaling} (an equilibrium in which Alice chooses the same action regardless of their temperament). 
  
  Whilst the CE is among the best-known solution concepts for normal-form games, and is efficiently computable in that setting (e.g., via linear programming \citep{hart1989existence}), there can be an exponential number of pure policies (so an exponential number of incentive constraints) in EFGs and even in bounded treewidth MAIDs. 
  It is therefore currently unknown if a CE can be found in an EFG or MAID in poly-time. Motivated by these tractability concerns, Von Stengel and Forges proposed an \emph{extensive-form correlated equilibrium (EFCE)} \cite{von2008extensive}. Along similar lines, we define a \emph{MAID correlated equilibrium}. 
  
   Instead of revealing the entire recommendation $\dot{\bm{\pi}}^i$ to each agent $i$ immediately, we let the mediator \emph{stagger} their recommendations. This is made visible in the mechanised graph by adding the blue edges in Figure \ref{fig:CE:e}. Importantly, if an agent deviates from any recommendation, then the mediator will \emph{cease giving further recommendations to that agent} (but will still give recommendations to all other agents). Thus, the incentive constraints are now tied to the threat of the mediator withholding future information.

   \begin{definition}
    \label{def:MAIDCE}
    Given a distribution $\kappa \in \Delta(\dot{\bm{P}})$, consider the MAID with an additional correlation variable $C$ with $\Pa_{C} = \varnothing$, $\Ch_{C} = \{C_D\}_{D \in \bm{D}}$, and $\Ch_{C_D} = \{D\}$ for each $D$. Let a pure policy profile $\bm{\dot{\pi}}$ be selected at $C$ according to $\kappa$. Then, when each decision context $\pa_D$ is reached, agent $i$ receives a recommended move $d \in \dom(D)$ specified by $\bm{\dot{\pi}}_D \in \bm{\dot{\pi}}$ ($C_D$ hides all other recommendations $\bm{\dot{\pi}}_{-D} \in \bm{\dot{\pi}}$). A \textbf{MAID correlated equilibrium (MAID-CE)} is an NE of this game in which no agent has an incentive to deviate from their recommendations.
  \end{definition}

 The localised recommendations in a MAID-CE pose weaker incentive constraints compared to a CE, so the set of MAID-CE outcomes is larger. As such, MAID-CEs can lead to Pareto-improvements over the CEs (and NEs) in a game. We now give one such MAID-CE. The mediator chooses a signal $s$ with equal probability for type $X=x$, i.e., $\Pr(c_A=a\mid X=x)= \Pr(c_A=\bar{a}\mid X=x)=0.5$. Bob is recommended to offer Alice a job ($b$) when Alice's action matches $s$ and to reject otherwise ($\bar{b}$). If $X=\bar{x}$, then the recommendation to Alice is arbitrary and is independent of the signal $s$, which is only shown to hardworking Alice. Because the mediator only gives Alice her recommendation once her decision context $\Pa_A$ is set, lazy Alice cannot know $s$. Therefore, in any situation, lazy Alice's action will match $s$ with probability $\frac{1}{2}$. Consequently, when Bob is called to play (i.e., the decision context $\Pa_B$ is set), and Alice's action matches $s$, Alice is twice as likely to be hardworking than lazy (so $EU^B = \frac{20}{3}$ for offering Alice a job rather than $EU^B = 6$ for rejecting her). If instead, Alice's action does not match $s$, then he knows with certainty that Alice is lazy, so his best response is to reject. Overall, Alice's expected payoff in this MAID-CE is $3.5$, and Bob's is $6.5$ (higher than 0 and 6, respectively, for all CEs).

A MAID-CE can be computed in poly-time if the treewidth is bounded, via a reduction to a linear program. We follow Huang et al \cite{huang2008computing}'s method because the information sets in an EFG are in bijection with the decision contexts in a MAID, but relax beyond their conditions as MAIDs only require sufficient (rather than perfect) recall \cite{huang2008computing}. 
Any distribution over pure policies induced by an NE can be represented using a distribution $\kappa$, and hence any mixed NE (or equivalent behavioural NE) is also a CE and MAID-CE.
As every MAID has an NE in (mixed) policies, every MAID must also have a CE and a MAID-CE.

  \begin{proposition}
    \label{prop:CEpoly}
    A MAID-CE in bounded treewidth MAIDs with sufficient recall can be found in~poly-time.
  \end{proposition}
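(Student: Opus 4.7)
The plan is to reduce the computation of a MAID-CE to a polynomial-size linear program, following the template of Huang and von Stengel's \cite{huang2008computing} EFCE LP for EFGs with perfect recall, and exploiting the bijection (noted above) between information sets in an EFG and decision contexts in a MAID. Bounded treewidth will ensure that the coefficients appearing in the LP are efficiently computable, and the sufficient-recall assumption will be what lets us keep the number of LP variables polynomial.

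First, I would introduce one LP variable per (decision context, recommended action) pair, interpreted as the marginal probability that the mediator's correlation device $\kappa$ recommends action $d$ when decision context $\pa_D$ is reached. These play the role of realization weights in the sequence-form representation. For each agent $i$, the acyclicity of $\mec{\graph}$ restricted to $\bm{\Pi}_{\bm{D}^i}$ (Definition \ref{def:sufficient}) gives a partial order on her decisions; this is exactly what is needed to write down the reach-probability and marginal-consistency constraints as a polynomial-size family of linear equalities, paralleling the tree-structured sequence form used in the perfect-recall EFG case.

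Second, I would write down the MAID-CE incentive inequalities. By Definition \ref{def:MAIDCE}, a deviating agent forfeits all further recommendations, so it suffices to rule out ``trigger'' deviations of the form: follow recommendations up to decision context $\pa_D$, then switch from the recommended $d$ to some $d' \neq d$ and continue with an arbitrary downstream policy. For each such trigger, the incentive constraint is a linear inequality in the realization weights whose coefficients are conditional expected utilities in the BN induced by $\kappa$ and $\bm{\theta}$. Bounded treewidth means these coefficients can be evaluated by variable elimination in polynomial time \citep{koller2009probabilistic}, and the bounded-in-degree representation from Section \ref{sec:rep} keeps the number of triggers per decision polynomial; the ``best continuation after deviation'' can itself be computed by backward induction over the agent's DAG of decision rules because she is no longer receiving correlated advice, which is a standard poly-time subroutine under sufficient recall.

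Third, I would assemble these into a single LP (a feasibility LP suffices for existence, or any linear welfare objective selects a Pareto-optimal MAID-CE from the same polytope), observe that it has polynomially many variables, polynomially many constraints, and polynomial-time-computable coefficients, and solve it in polynomial time by interior-point methods. Non-emptiness of the polytope is inherited from Proposition \ref{prop:suffexistence}: every MAID has a mixed NE, and the induced distribution on $\dot{\bm{P}}$ trivially satisfies the weaker MAID-CE incentive constraints. I expect the main obstacle to be the first step: verifying that sufficient recall (weaker than the perfect recall assumed in \cite{huang2008computing}) still supports a sound realization-plan decomposition, so that the incentive constraints genuinely remain \emph{linear} in the chosen LP variables rather than secretly requiring joint probabilities over exponentially many pure policies of a single agent.
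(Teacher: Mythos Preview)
Your proposal is correct and follows essentially the same route as the paper: reduce to Huang and von Stengel's \cite{huang2008computing} EFCE linear program via the bijection between EFG information sets and MAID decision contexts, use bounded treewidth to compute the coefficients in poly-time, and relax their perfect-recall assumption to sufficient recall by observing that the acyclicity of $\mec{\graph}$ restricted to $\bm{\Pi}_{\bm{D}^i}$ provides the required ordering of decision contexts. Your write-up is in fact considerably more detailed than the paper's own proof sketch, and you correctly flag the one non-routine step---checking that sufficient recall still supports a sound realization-plan decomposition so the incentive constraints stay linear.
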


  \section{Complexity Results in MAIDs}
  \label{sec:complexity}
  
 We now give some complexity results in MAIDs. Our first follows from the known result in normal-form games \citep{daskalakis2009complexity}. Any normal-form game $\mathcal{N}$ can be reduced to a MAID where each agent has one utility node (which copies the payoffs in $\mathcal{N}$) and one decision node. The domains of the decision variables are the set of each agent's pure strategies in $\mathcal{N}$. Edges are added from every $D \in \bm{D}$ to every $U \in \bm{U}$.
 
 \begin{proposition}
     In a MAID, finding an NE in mixed policies is \PPAD-hard.
 \end{proposition}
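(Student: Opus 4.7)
The plan is to exploit the reduction that the paper itself hints at, turning any normal-form game into a MAID in polynomial time and showing that the induced correspondence between mixed strategies and mixed policies is payoff-preserving. Since the problem of computing a Nash equilibrium in a (two-player or $k$-player) normal-form game is $\PPAD$-hard \cite{daskalakis2009complexity}, $\PPAD$-hardness for MAIDs will follow.

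Concretely, given a normal-form game $\mathcal{N}$ with players $1,\dots,n$, pure strategy sets $S_1,\dots,S_n$, and payoff functions $u_i:\prod_j S_j \to \mathbb{R}$, I would construct a MAID $\model_\mathcal{N}$ with:
no chance variables; one decision variable $D^i$ per agent $i$ with $\Pa_{D^i}=\varnothing$ and $\dom(D^i)=S_i$; one utility variable $U^i$ per agent with $\Pa_{U^i}=\{D^1,\dots,D^n\}$ and CPD $U^i(d_1,\dots,d_n)=u_i(d_1,\dots,d_n)$. This construction is clearly computable in time polynomial in the size of $\mathcal{N}$ (the utility tables have the same total size as the payoff tables of $\mathcal{N}$).

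Next I would verify the semantic correspondence. Since each agent has a single decision node with no parents, a behavioural policy $\pi^i$, a mixed policy $\mu^i$, and a mixed strategy $\sigma^i$ in $\mathcal{N}$ are all just elements of $\Delta(S_i)$, and they induce identical distributions over $\bm{D}$. Using the joint distribution from Definition~\ref{def:MAID},
\[
EU^i({\bm{\pi}})
=\sum_{d_1,\dots,d_n}\Bigl(\prod_{j=1}^n \pi^j(d_j)\Bigr)\, u_i(d_1,\dots,d_n),
\]
which is exactly agent $i$'s expected payoff in $\mathcal{N}$ under the mixed strategy profile obtained by identifying $\pi^j$ with $\sigma^j$. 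Hence ${\bm{\pi}}$ satisfies Definition~\ref{def:NE} in $\model_\mathcal{N}$ if and only if the corresponding $\bm{\sigma}$ is a Nash equilibrium of $\mathcal{N}$, giving a polynomial-time many-one reduction and the claimed hardness.

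I do not expect any real obstacle: the reduction is immediate and the payoff calculation is a one-line application of the MAID joint distribution. The only point to state carefully is the correspondence between mixed and behavioural policies in this degenerate single-decision case (so that the theorem is phrased correctly for ``mixed policies''), which is trivial because each agent has exactly one decision node, so a mixed policy over pure policies collapses to a distribution in $\Delta(S_i)$ and induces the same distribution over $\bm{D}$ as the corresponding behavioural policy.
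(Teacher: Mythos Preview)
Your proposal is correct and follows exactly the reduction the paper uses: embed a normal-form game as a MAID with one parentless decision node per agent and one utility node per agent whose parents are all the decisions, then observe that mixed policies coincide with mixed strategies and expected utilities agree. The paper states this reduction tersely in the text preceding the proposition without spelling out the payoff-preservation argument, so your write-up is simply a more detailed version of the same idea.
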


\begin{table}[h]
    \centering
    \begin{tabular}{l l l}
        \toprule
        \textbf{Problem} & \textbf{Input} & \textbf{Question}\\
        \midrule
        \textsc{Is-Best-Response} & $\model$, $i$, ${\bm{\pi}}^{-i}$, $q \in \Rats$ & Is there some $\hat{{\bm{\pi}}}^i$ such that $EU^i(\hat{{\bm{\pi}}}^i, {\bm{\pi}}^{-i}) > q$?\\
        \textsc{Is-Nash} & $\model$, ${\bm{\pi}}$ & Is ${\bm{\pi}}$ a (behavioural) NE of $\model$?\\
        \textsc{Non-Emptiness}: & $\model$ & Does $\model$ have a (behavioural) NE?\\
        \bottomrule
    \end{tabular}
    \caption{Three decision problems in MAIDs with behavioural policies.}
    \label{tab:problems}
\end{table}

In the following results, we focus on the complexity of the decision problems in Table \ref{tab:problems}.

  \begin{proposition}
    \label{prop:decidebestresponse}
    \textsc{Is-Best-Response} is ${\NP^{\PP}}$-complete, \NP-complete when restricted to MAIDs with graphs of bounded treewidth, and \PP-complete if both $|\bm{D}^i|$ and the in-degrees of $\bm{D}^i$ are bounded.
  \end{proposition}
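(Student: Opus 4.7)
The plan is to exploit a simple observation: once the opponents' policies ${\bm{\pi}}^{-i}$ are fixed in the instance, each decision variable in $\bm{D}\setminus\bm{D}^i$ collapses into a chance variable whose CPD is given by ${\bm{\pi}}^{-i}$, leaving a single-agent influence diagram (ID) for agent $i$. This transformation runs in polynomial time and preserves both the treewidth and the in-degrees of $\bm{D}^i$, so each of the three claims will reduce to the corresponding known result on single-agent IDs \cite{maua2012solving,maua2016fast} combined with classical bounded-treewidth BN inference \citep{koller2009probabilistic}.

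For membership I would argue as follows. For $\NP^{\PP}$, nondeterministically guess a polynomial-size description of a candidate $\hat{{\bm{\pi}}}^i$ and invoke a $\PP$ oracle to perform the marginal inference that yields $EU^i(\hat{{\bm{\pi}}}^i,{\bm{\pi}}^{-i})$ and compares it with $q$. Polynomial-size witnesses suffice because Proposition~\ref{prop:BRdeviation} lets us restrict to pure policies when $i$ is not absent-minded, and the absent-minded case inherits the polynomial bit bound on optima from \cite{maua2012solving}. In the bounded-treewidth case the oracle call becomes deterministic polynomial-time junction-tree inference, so the problem drops to $\NP$. When $|\bm{D}^i|$ and the in-degrees of $\bm{D}^i$ are both bounded, agent $i$ has only polynomially many distinct pure decision rules; enumerating them all and issuing a $\PP$ query per candidate places the problem in $\PP$ by closure of $\PP$ under poly-time truth-table reductions, with the absent-minded case handled by reducing to maximisation of a multilinear polynomial over a constant-dimensional box whose coefficients are themselves $\PP$-computable marginals.

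For matching lower bounds I would reduce from canonical complete problems encoded as single-agent MAIDs. For $\NP^{\PP}$-hardness reduce from E-MAJSAT: existentially quantified literals become parent-less decisions of a single agent, majority-quantified literals become Bernoulli$(1/2)$ chance nodes, and a utility variable evaluates the CNF, with threshold $q$ chosen to encode the counting condition. For $\NP$-hardness on bounded treewidth I would reduce from MAP inference on polytree BNs, which is already $\NP$-complete via Park and Darwiche's construction, by promoting the MAP variables to decisions of a single agent. The main obstacle is ensuring the resulting MAID's treewidth stays bounded, which holds because their MAP encoding inflates treewidth only by an additive constant; the same style of gadget handles the $\NP^{\PP}$-hardness reduction under bounded parameters. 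Finally, for $\PP$-hardness with bounded $|\bm{D}^i|$ and in-degrees, reduce from BN marginal inference \citep{littman2001stochastic} using a trivial MAID with a single parent-less dummy decision node that does not affect utility, so the best-response question collapses to the underlying marginal inference instance.
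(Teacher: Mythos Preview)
Your proposal is correct and follows essentially the same architecture as the paper's proof sketch: collapse ${\bm{\pi}}^{-i}$ into chance nodes to get a single-agent ID, then handle membership by guess-and-verify with a $\PP$ oracle (replaced by junction-tree inference under bounded treewidth), and establish hardness via \textsc{E-Majsat}, a SAT-style encoding, and degenerate BN inference respectively. The only noteworthy deviations are that the paper obtains $\PP$ membership in the third case by appeal to Kwisthout's \textsc{Parameter Tuning} rather than your enumeration plus truth-table closure argument, and it reduces directly from \textsc{MaxSat} (following de~Campos) for bounded-treewidth $\NP$-hardness rather than routing through Park--Darwiche's polytree MAP result; both pairs of arguments are interchangeable here. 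Your explicit treatment of the absent-minded case for witness size is a detail the paper's sketch leaves implicit, though you should be cautious that \cite{maua2012solving} concerns LIMIDs without absent-mindedness, so the cleaner justification is the one you hint at later: with strict inequality and continuity of $EU^i$ in a constant number of decision-rule parameters, a rational witness of polynomial bit-size exists whenever any witness does.
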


  \begin{proof}[Proof sketch]
      \textsc{Is-Best-Response} is in ${\NP^{\PP}}$ because given $\hat{{\bm{\pi}}}^i$, we can verify that $EU^i(\hat{{\bm{\pi}}}^i, {\bm{\pi}}^{-i}) > q$ in poly-time using a \PP~oracle for inference in a BN \cite{littman2001stochastic}. With bounded treewidth, verification can be done in poly-time. The final setting is in \PP~by analogy with Kwisthout's \textsc{PARAMETER TUNING} \cite{kwisthout2012computational}. For the general case's hardness, we can reduce from \textsc{E-Majsat} as in \citep{park2004complexity}, where MAP-nodes are replaced by agent $i$'s decision nodes; for bounded treewidth, we can reduce from \textsc{MAXSAT} as in \citep{de2005inferential}; and for the final case, \textsc{Is-Best-Response} with $|\bm{D}^i| = 0$ is the same as inference in a BN.
  \end{proof}

Proposition \ref{prop:decidebestresponse} suggests \textsc{Is-Best-Response} is, in general, only tractable if inference is easy \emph{and} $|\bm{D}^i|$ is bounded by a constant. Proposition \ref{prop:optimalbestresponse} then explains the decision problem's name.

\begin{proposition}
\label{prop:optimalbestresponse}
If the in-degrees of $\bm{D}^i$ are bounded and \textsc{Is-Best-Response} can be solved in poly-time, then a best response policy for agent $i$ to a partial profile ${\bm{\pi}}^{-i}$ can be found in polynomial~time.
\end{proposition}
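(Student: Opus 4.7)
The plan is to treat \textsc{Is-Best-Response} as a black-box oracle and invoke it polynomially many times: first to pin down the value $q^\star$ of agent~$i$'s best response via binary search, and then to commit, one by one, to a pure decision rule at each $D \in \bm{D}^i$. By Proposition~\ref{prop:BRdeviation}, whenever agent~$i$ is not absent-minded a pure best-response policy exists, so searching among pure policies suffices; the absent-minded case, where the supremum might not even be attained by any behavioural policy, is the natural edge case to set aside.

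The first step exploits the bounded-in-degree hypothesis: each $D \in \bm{D}^i$ admits only $|\dom(D)|^{|\dom(\Pa_D)|} = O(2^{2^c})$ pure decision rules, i.e.\ constantly many (where $c$ is the in-degree bound). Any pure policy yields a rational expected utility whose numerator and denominator have polynomial bit-length, so the set of candidate values for $q^\star$ is finite and the minimum gap between distinct candidates is at least inverse-exponential in the input size. A standard binary search that uses \textsc{Is-Best-Response} as the comparator therefore locates $q^\star$ after polynomially many oracle calls.

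For the second step I would fix an arbitrary ordering $D_1, \dots, D_m$ of $\bm{D}^i$. At stage~$j$, with pure rules $\dot{\pi}_{D_1}, \dots, \dot{\pi}_{D_{j-1}}$ already committed, enumerate the constantly many candidate pure rules $\dot{\pi}_{D_j}$. For each candidate I build a reduced MAID $\model'$ in which $D_1, \dots, D_j$ are reparameterised as chance nodes carrying the committed CPDs, and call the oracle on $(\model', i, \bm{\pi}^{-i}, q^\star - \epsilon)$ for $\epsilon$ smaller than the minimum candidate gap. The query answers whether $\dot{\pi}_{D_j}$ extends to a policy attaining~$q^\star$; commit to any candidate for which the answer is affirmative -- at least one must exist by the definition of $q^\star$. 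The total number of oracle calls is $m \cdot O(2^{2^c})$, i.e.\ polynomial.

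The main obstacle I anticipate is ensuring that the reparameterisation step preserves the structural assumptions under which the oracle remains polynomial. Converting a decision node into a chance node with a committed pure CPD leaves the underlying DAG, the in-degree bound, and (should it be assumed) the treewidth unchanged, and only adds a CPD of constant size; hence each invocation of the oracle on $\model'$ still runs in polynomial time. After stage~$m$ the fixed pure policy achieves expected utility $q^\star$ by construction, and is therefore a best response of agent~$i$.
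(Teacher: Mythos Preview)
Your proposal is correct and follows essentially the same route as the paper: first use binary search with the \textsc{Is-Best-Response} oracle to pin down the optimal value (the paper isolates this as a separate lemma), then sequentially commit to a pure decision rule at each $D \in \bm{D}^i$ by reparameterising already-fixed decisions as chance nodes and invoking the oracle on the reduced MAID, relying on the bounded in-degree to keep the number of candidate rules per node constant. Your write-up is in fact more careful than the paper's on the rational-gap argument for binary search and on the sequential nature of the commitment; the one minor slip is that under absent-mindedness the supremum over behavioural policies \emph{is} attained (the space is compact and $EU^i$ continuous), it just need not be pure---but you are right that this edge case falls outside the pure-policy search and neither you nor the paper handles it.
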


\begin{proposition}
    \label{prop:isNASH}
    \textsc{Is-Nash} is ${\coNP^{\PP}}$-complete, and \coNP-complete when restricted to MAIDs with graphs of bounded treewidth. The general problem remains ${\coNP^{\PP}}$-hard in sufficient information MAIDs. In MAIDs without chance variables, the problem remains \coNP-hard. 
  \end{proposition}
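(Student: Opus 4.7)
My plan is to lift both upper and lower bounds from the \textsc{Is-Best-Response} results of Proposition~\ref{prop:decidebestresponse} via the observation that ${\bm{\pi}}$ fails to be an NE iff some agent~$i$ has a deviation $\hat{\bm{\pi}}^i$ with $EU^i(\hat{\bm{\pi}}^i,{\bm{\pi}}^{-i}) > EU^i({\bm{\pi}})$. For membership, an $\NP^{\PP}$-machine decides the complement by guessing $i$ and $\hat{\bm{\pi}}^i$ and using the \PP-oracle for BN inference~\cite{littman2001stochastic} to compare the two expected utilities; under bounded treewidth the oracle is replaced by polynomial inference~\cite{koller2009probabilistic}, yielding \coNP.

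For $\coNP^{\PP}$-hardness, and for \coNP-hardness under bounded treewidth, I would reduce from \textsc{Is-Best-Response} using a \emph{clip gadget}. Given $(\model,i,{\bm{\pi}}^{-i},q)$, I would absorb ${\bm{\pi}}^{-i}$ into the CPDs of $\bm{D}^{-i}$ and introduce a new binary decision $D^*$ of agent~$i$ with $\Pa_{D^*}=\emptyset$, a new utility $U^*$ with $\Pa_{U^*}=\{D^*\}$ equal to $q$ when $D^*=1$ and $0$ otherwise, and replace each $U\in\bm{U}^i$ by $U'$ with parents $\Pa_U\cup\{D^*\}$ that outputs $U(\pa_U)$ when $D^*=0$ and $0$ when $D^*=1$. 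Letting ${\bm{\pi}}'$ set $D^*=1$ deterministically makes $EU^i({\bm{\pi}}')=q$, while any deviation placing mass on $D^*=0$ and playing $\hat{\bm{\pi}}^i$ yields $EU^i=EU^i_\model(\hat{\bm{\pi}}^i,{\bm{\pi}}^{-i})$ (mixed values of $D^*$ interpolate convexly), so ${\bm{\pi}}'$ is an NE iff $\max_{\hat{\bm{\pi}}^i} EU^i_\model(\hat{\bm{\pi}}^i,{\bm{\pi}}^{-i})\leq q$. Adding $D^*$ to every bag of a tree decomposition increases treewidth by at most one, so the reduction inherits bounded treewidth from the \NP-hard \textsc{MaxSat}-style instances of Proposition~\ref{prop:decidebestresponse}.

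The main obstacle is preserving sufficient information: the plain clip gadget creates a cycle $\Pi_{D^*}\leftrightarrow\Pi_D$ in the mechanised graph, since $\Pi_{D^*}$ is $s$-reachable from $\Pi_D$ via $D^*\to U'$ and $\Pi_D$ from $\Pi_{D^*}$ via $D\to\cdots\to U'$. To avoid this I would start from the perfect-recall \textsc{E-Majsat} encoding of \textsc{Is-Best-Response}, in which a single agent owns a recall chain $D_1\prec\cdots\prec D_k$ and a single scalar utility $U^\phi$ sits at the output $Y$ of a bounded-in-degree gate circuit. I would then prepend $D^*$ to this recall order, so $D^*\in\Pa_{D_j}$ for every $j$, and gate only the one utility by setting $\Pa_{U^{\phi'}}=\{D^*,Y\}$. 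Because $D^*$ now lies in every conditioning set $\Pa_{D_j}\cup\{D_j\}$, the unique path $\Pi_{D^*}\to D^*\to U^{\phi'}$ is blocked, while each $\Pi_{D_j}\to D_j\to\cdots\to Y\to U^{\phi'}$ path avoids $D^*$ and stays active; the mechanised subgraph on decision rules is therefore acyclic, consisting of the arrows $\Pi_{D_l}\to\Pi_{D_j}$ for $l>j$ together with the fan-in $\Pi_{D_j}\to\Pi_{D^*}$, so sufficient information is retained. Finally, for \coNP-hardness on chance-free MAIDs I would reduce from \textsc{UnSat}, encoding $\phi$'s gate circuit with one auxiliary agent per gate whose unique decision is the gate output and whose utility is $1$ exactly when that output matches the correct Boolean function of its observed inputs (forcing correct computation at any NE); agent~$1$ controls the propositional-variable decisions plus an additional clip bit $Z$ and owns the utility $Y_{\mathrm{root}}\wedge Z$. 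Letting the reference set $Z=0$, every variable to $0$, and every auxiliary agent to its correct function gives $EU^1=0$, and a strictly improving deviation for agent~$1$ exists iff some assignment satisfies $\phi$, making ${\bm{\pi}}'$ an NE iff $\phi$ is unsatisfiable.
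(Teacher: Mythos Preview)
Your membership argument and your basic hardness strategy coincide with the paper's: the sketch there says only that ``hardness comes from the single-agent setting where it is the complement of \textsc{Is-Best-Response}'', and your clip gadget is exactly what is needed to turn that remark into an actual reduction, since it pins the reference payoff to the given threshold $q$ (which the bare ``complement'' observation does not by itself supply). For the chance-free claim you take a genuinely different route: the paper imports \coNP-hardness from partial order games~\cite{zahoransky2021partial}, which embed directly as chance-free MAIDs, whereas you build a self-contained reduction from \textsc{UnSat} with one auxiliary agent per gate whose utility forces correct Boolean evaluation. Both arguments are valid; yours is more elementary and avoids the external citation, the paper's is a one-line transfer.

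There is, however, a gap in your sufficient-information construction. Prepending $D^*$ and enforcing the full recall chain $D^*\prec D_1\prec\cdots\prec D_k$ does kill the mechanised-graph cycles exactly as you argue, but it forces $|\Pa_{D_j}|=j$, so under the paper's tabular-CPD conventions the reference decision rule for $D_k$ alone has $2^{k}$ entries. Since $k$ is polynomial in the \textsc{E-Majsat} input, the \textsc{Is-Nash} instance $(\model',{\bm{\pi}}')$ you produce is exponentially larger than the source instance, and the reduction is not polynomial time; the $\coNP^{\PP}$-hardness under sufficient information is therefore not established by your argument as written. The paper's sketch does not spell out how it preserves sufficient information either, so you are not overlooking an obvious trick, but you would need either to justify a compact encoding of the constant reference policy or to exhibit a bounded-in-degree encoding whose decision-rule subgraph is still acyclic.
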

\begin{proof}[Proof sketch]
  For membership, we can check that ${\bm{\pi}}$ is \emph{not} an NE by guessing an agent $i$ and checking if ${\bm{\pi^i}} \in {\bm{\pi}}$ is a best response in poly-time using a $\PP$-oracle (this is unnecessary if the graph has bounded treewidth). Hardness comes from the single-agent setting where it is the complement of \textsc{Is-Best-Response}. In MAIDs without chance variables, we reduce from partial order games \citep{zahoransky2021partial}.
\end{proof}

Proposition \ref{prop:suffexistence} shows when \textsc{Non-Emptiness} is vacuous. However, in an insufficient recall MAID, \textsc{Non-Emptiness} is, in general, intractable even without chance variables. 
  
  \begin{proposition}
    \label{prop:nonempty} \textsc{Non-Emptiness} is \NEXPTIME-hard and becomes \NEXPTIME-complete if we restrict to MAIDs without chance variables.
  \end{proposition}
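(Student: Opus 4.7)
The plan is to establish NEXPTIME-hardness via a reduction from the pure Nash equilibrium existence problem in partial order games, which is NEXPTIME-complete by Zahoransky~\cite{zahoransky2021partial} and, as noted in the related work section, naturally corresponds to the restricted class of MAIDs with no chance nodes and one utility node per agent that is a child of all decisions. Given a PO game, I would construct a MAID $\model$ with one decision variable per move, dotted (information) edges reproducing the partial order's visibility relation between moves, and a single utility variable per agent whose deterministic CPD encodes the PO game's payoff function. Because the construction introduces no chance variables, the hardness transfers both to the general statement and to the hardness half of the NEXPTIME-completeness claim in the chance-free setting.

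The crucial correspondence to establish is that $\model$ has a behavioural NE if and only if the PO game has a pure NE. One direction is immediate: any pure NE of the PO game lifts to a pure (hence behavioural) NE of $\model$. For the converse, the straightforward encoding assigns a distinct decision rule variable to each object-level decision, so no agent is absent-minded. Proposition~\ref{prop:BRdeviation} then guarantees that each agent's behavioural best response is matched by some pure best response, so replacing every agent's policy in a behavioural NE of $\model$ with a pure best response yields a pure NE, which in turn corresponds to a pure NE of the underlying PO game.

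For NEXPTIME-membership in the chance-free case, I would use a guess-and-verify procedure. With bounded in-degree, each agent has at most $\exp(\text{poly}(|\bm{V}|))$ pure policies, so the induced normal-form game has exponential strategy sets but rational payoffs of polynomial bit length. Standard precision bounds (e.g.\ Lipton--Markakis--Mehta-style arguments) then imply that if a behavioural NE exists, one exists with rational coordinates of exponential bit length. I would nondeterministically guess such a profile ${\bm{\pi}}$ and, in exponential time, compute every $EU^i({\bm{\pi}})$ by brute-force marginalisation over $\dom(\bm{V})$ and verify that no unilateral deviation strictly improves any agent's payoff. Proposition~\ref{prop:BRdeviation} again restricts the verification to the exponentially many pure deviations in the non-absent-minded case, and the construction above produces no absent-minded agents.

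The main obstacle will be ensuring that the pure/behavioural correspondence in the reduction is airtight: any leakage (for instance, inadvertently shared decision-rule variables that produce absent-mindedness) would break the appeal to Proposition~\ref{prop:BRdeviation} and force a separate gadget to penalise randomisation via small utility perturbations. A secondary technical issue is formalising the precision bound used in the guess step, so that the existence of a real behavioural NE implies the existence of a rational one with exponential-bit coordinates in our setting; this is routine but requires care because the expected-utility polynomials involved have exponential size despite the polynomially-sized MAID representation.
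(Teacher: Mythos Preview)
Your hardness argument has a real gap in the converse direction. You argue that a behavioural NE ${\bm{\pi}}$ of $\model$ yields a pure NE by replacing each ${\bm{\pi}}^i$ with a pure best response $\dot{\bm{\pi}}^i$ supplied by Proposition~\ref{prop:BRdeviation}. But that proposition only guarantees that $\dot{\bm{\pi}}^i$ is a best response to ${\bm{\pi}}^{-i}$, not to $\dot{\bm{\pi}}^{-i}$; after everyone purifies simultaneously, no one need be best-responding to the resulting profile. Matching pennies, viewed as a chance-free two-decision MAID with one utility node per agent, already breaks the claimed equivalence: it is not absent-minded, it has a behavioural NE, and it has no pure NE. So the naive encoding of a PO game does \emph{not} give ``$\model$ has a behavioural NE $\Leftrightarrow$ the PO game has a pure NE'', and the reduction fails as written. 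You flag the need for a gadget only if absent-mindedness creeps in, but the failure is more basic and occurs even when Proposition~\ref{prop:BRdeviation} applies cleanly; a correct reduction must add structure (auxiliary agents or penalty gadgets) that forces every behavioural NE of $\model$ to be pure, or else route through a different source problem. The paper's sketch likewise reduces from partial order games, but does not spell out this step, so you should not assume the identity encoding suffices.

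Your membership argument has a related problem. Membership must cover \emph{all} chance-free MAIDs, including absent-minded ones, yet you justify restricting verification to pure deviations by invoking Proposition~\ref{prop:BRdeviation} together with the fact that ``the construction above produces no absent-minded agents''. The latter is a property of your hardness instance, not of the arbitrary input to the membership algorithm; and for absent-minded inputs Proposition~\ref{prop:BRdeviation} says precisely that pure deviations do \emph{not} suffice to witness a profitable deviation. The paper instead appeals directly to the determinism of the chance-free setting and the algorithm of~\cite{zahoransky2021partial}, rather than guessing a rational behavioural profile and bounding its bit-length; your more explicit route is reasonable in spirit, but as stated it neither covers absent-minded inputs nor discharges the rational-witness bound you yourself identify as delicate.
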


\begin{proof}[Proof sketch]
  For hardness, we can reduce from partial order games. Without chance variables, we can determine \textsc{Non-Emptiness} using a similar algorithm to that in \cite{zahoransky2021partial}. It exploits the setting's determinism: payoffs are poly-time computable and the number of policy profiles is reduced to~$\mathcal{O}(2^{|\bm{V}|})$. 
\end{proof}

  \begin{proposition}
    \label{prop:polySPE}
    In a MAID with sufficient information, if the in-degrees of $\bm{D}$ are bounded and \textsc{Is-Best-Response} can be solved in poly-time, then a pure NE can be found in poly-time.
  \end{proposition}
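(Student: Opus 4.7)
The plan is to exploit the acyclic strategic-relevance structure on $\bm{\Pi}_{\bm{D}}$ afforded by sufficient information, building up a pure NE one decision rule at a time in topological order. First, construct the mechanised subgraph on $\bm{\Pi}_{\bm{D}}$ using $s$-reachability (poly-time via d-separation \cite{shachter2013bayes}) and, since it is acyclic, topologically sort it as $D_1, \ldots, D_m$ so that every strategic-relevance edge $\Pi_{D_j} \to \Pi_{D_k}$ has $j < k$. Initialise $\dot{\bm{\pi}}$ to an arbitrary pure profile.

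For $k = 1, \ldots, m$, construct a modified MAID $\model_k$ by replacing every decision $D_j$ with $j \neq k$ by a chance node whose CPD is the currently fixed $\dot{\pi}_{D_j}$. In $\model_k$, the agent $i_k$ owning $D_k$ has $D_k$ as their sole decision, the in-degree of $D_k$ is unchanged, and $|\model_k|$ is polynomial in $|\model|$, so \textsc{Is-Best-Response} remains poly-time. Apply Proposition~\ref{prop:optimalbestresponse} to $\model_k$ to compute a best-response policy for $i_k$, i.e.\ a single decision rule for $D_k$. Because sufficient information precludes absent-mindedness (by the footnote to Definition~\ref{def:sufficient}), Proposition~\ref{prop:BRdeviation} guarantees that this best response can be taken pure, which we assign as the updated $\dot{\pi}_{D_k}$.

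For correctness, the critical observation is that $\Pi_{D_k}$ has no incoming strategic-relevance edges from $\Pi_{D_j}$ with $j > k$, so the optimal rule at $D_k$ is invariant to later values of $\dot{\pi}_{D_{k+1}}, \ldots, \dot{\pi}_{D_m}$. Hence the rule fixed at step $k$ remains a best response against $\dot{\bm{\pi}}_{-D_k}$ at termination: the rules for $D_j$ with $j < k$ are never modified again, and those for $D_j$ with $j > k$ do not affect the optimisation. To lift per-decision optimality to a per-agent best response, I would use a telescoping swap argument: given any pure alternative $\dot{\bm{\varpi}}^i$ for agent $i$, transform it into $\dot{\bm{\pi}}^i$ by swapping agent $i$'s rules one at a time in topological order, each swap weakly improving $EU^i$ because the swapped-in rule is per-decision optimal against its current context; Proposition~\ref{prop:BRdeviation} then rules out profitable behavioural deviations for $i$ too.

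The main obstacle is this final telescoping step, which requires per-decision optimality to compose into a per-agent best response. It leans on two features that sufficient information simultaneously delivers: a global topological order on $\bm{\Pi}_{\bm{D}}$, so each swap is independent of rules yet to be swapped, and the exclusion of absent-mindedness, so Proposition~\ref{prop:BRdeviation} lets us restrict attention to pure deviations. The overall algorithm performs $m$ invocations of Proposition~\ref{prop:optimalbestresponse}, each poly-time by hypothesis, plus polynomial bookkeeping, hence runs in polynomial time.
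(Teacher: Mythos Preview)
Your proposal is correct and takes essentially the same approach as the paper: the paper phrases the iteration via the subdiagram ordering from the constructive proof of Proposition~\ref{prop:suffexistence} (solving nested subgames from the inside out), whereas you work directly with a topological sort of $\bm{\Pi}_{\bm{D}}$, but these produce the same processing order and the same single-decision optimisation at each step via Proposition~\ref{prop:optimalbestresponse}. The only real difference is that the paper defers the ``per-decision optimality $\Rightarrow$ NE'' step to Proposition~\ref{prop:suffexistence}, while you spell out the telescoping swap argument and the appeal to Proposition~\ref{prop:BRdeviation} explicitly; this makes your write-up slightly more self-contained but does not change the underlying argument.
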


This result suggests an NE can be found efficiently in certain MAIDs, but even in games without sufficient information, NEs can be found more efficiently in a MAID than in an EFG. The mechanised graph dependencies reveal more `subgames' -- parts of the MAID that can be solved independently from the rest -- to which dynamic programming can be applied \cite{koller2003multi,Hammond2021}. As finding an NE in both EFGs and MAIDs depends significantly on the game's size, this can empirically lead to large compute savings~\citep{koller2003multi}. 
  \section{Applications and Conclusion}
    \label{sec:markov}

 \begin{figure}[tb]
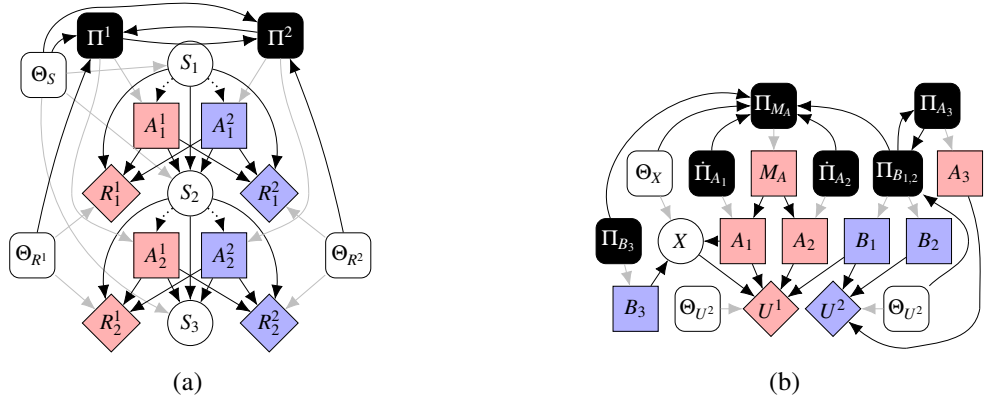

    \centering
\begin{subfigure}[b]{0.49\linewidth}
          \vspace{-1em}
        \centering
        \resizebox{0.75\width}{!}{
        \begin{influence-diagram}
        \node (S1) [] {$S_1$};
        \node (S2) [below=2.3cm of S1] {$S_2$};
        \node (D11) [decision, below left = 1.1cm and 0.6cm of S1, player1] {$A^1_1$};
        \node (D21) [decision, below right = 1.1cm and 0.6cm of S1, player2] {$A^2_1$};
    
        \node (U11) [utility, below left = 2.3cm and 1.4cm of S1, player1] {$R^1_1$};
        \node (U21) [utility, below right = 2.3cm and 1.4cm of S1, player2] {$R^2_1$};
              
        \node (D12) [decision, below left = 1.1cm and 0.6cm of S2, player1] {$A^1_2$};
        \node (D22) [decision, below right = 1.1cm and 0.6cm of S2, player2] {$A^2_2$};
        \node (U12) [utility, below left = 2.3cm and 1.4cm of S2, player1] {$R^1_2$};
       \node (U22) [utility, below right = 2.3cm and 1.4cm of S2, player2] {$R^2_2$};
  
        \node (S3) [below=2.3cm of S2] {$S_3$};

        \node (P1) [relevanceb, above left=1.6cm and 1cm of D11] {$\Pi^{1}$};
        \node (P2) [relevanceb, above right=1.6cm and 1cm of D21] {$\Pi^{2}$};

        \node (TS) [relevancew, below left=-2.1cm and 1.2cm of U11] {$\Theta_{S}$};
        
         \node (TR1) [relevancew, below left=1.1cm and 2.8cm of S2] {$\Theta_{R^1}$};
         \node (TR2) [relevancew, below right=1.1cm and 2.8cm of S2] {$\Theta_{R^2}$};
         \path [gray!50] (P1) edge[->] (D11);
        \path [gray!50] (P2) edge[->] (D21);

        \path [gray!50] (TS) edge[->] (S1);
        \path [gray!50] (TS) edge[->] (S2);
        \path [gray!50] (TS) edge[->, , bend right=35] (S3);

         \path [gray!50] (TR1) edge[->, ] (U11);
        \path [gray!50] (TR1) edge[->, ] (U12);
        \path [gray!50] (TR2) edge[->, ] (U21);
        \path [gray!50] (TR2) edge[-> ] (U22);
  
        \path (S1) edge[->, bend right=45] (U11);
        \path (S1) edge[->, bend left=45] (U21);
        \path (S2) edge[->, bend right=45] (U12);
        \path (S2) edge[->, bend left=45] (U22);
  
        \edge {S1} {S2}
        \edge {S2} {S3}

        \draw (S1) edge[,->,thick,dotted,in=90,out=-140] (D11);
        \draw (S1) edge[,->,thick,dotted,in=90,out=-40] (D21);

        \draw (S2) edge[,->,thick,dotted,in=90,out=-140] (D12);
        \draw (S2) edge[,->,thick,dotted,in=90,out=-40] (D22);
  
        \edge{D11} {U11}
        \edge{D11} {U21}
        \edge{D21} {U11}
        \edge{D21} {U21}
  
        \edge{D12} {U12}
        \edge{D12} {U22}
        \edge{D22} {U12}
        \edge{D22} {U22}
  
        \edge{D11} {S2}
        \edge{D21} {S2}
        \edge{D12} {S3}
        \edge{D22} {S3}

        \draw (TS) edge[,->,in=180,out=70] (P1);

        \path (TR1) edge[->, bend left=5] (P1);
        \path (TR2) edge[->, bend right=5] (P2);
       
        \path (P1) edge[->, bend right=8] (P2);
        \path (P2) edge[->, bend right=8] (P1);

        \node (space1) [minimum size=0mm, node distance=2mm, left = 0.7cm of U11, draw=none] {};
      \draw [gray!50] (P1) edge[,in=90,out=-90] (space1.center)
      (space1.center) edge[,->,out=-90,in=150] (D12);

     \node (space2) [minimum size=0mm, node distance=2mm, right = 0.7cm of U21, draw=none] {};
      \draw [gray!50] (P2) edge[,in=90,out=-90] (space2.center)
      (space2.center) edge[,->,out=-90,in=30] (D22);

      \node (space3) [minimum size=0mm, node distance=2mm, above = 0.5cm of P1, draw=none] {};
      \draw (TS) edge[,in=180,out=90] (space3.center)
      (space3.center) edge[,->,out=0,in=150] (P2);
        
        \end{influence-diagram}}
        \caption{}
        \label{fig:Markov:a}
    \end{subfigure}
    \begin{subfigure}[b]{0.49\linewidth}
              \vspace{-1em}
        \centering
        \resizebox{0.75\width}{!}{
          \begin{influence-diagram}
            \node (U1) [utility, player1] {$U^1$};
            \node (U2) [utility, right = 1.1cm of U1, player2] {$U^2$};
            \node (U2mec) [right = 1.3cm of U2, relevancew] {$\Theta_{U^2}$};
            \node (U1mec) [left = 1.3cm of U1, relevancew] {$\Theta_{U^2}$};

            \node (D11) [above left = 1.2cm and 0.5cm of U1, decision, player1] {$A_1$};
            \node (D12) [right = 1.1cm of D11, decision, player1] {$A_2$};
            \node (D21) [right = 1.1cm of D12, decision, player2] {$B_1$};
            \node (D22) [right = 1.1cm of D21, decision, player2] {$B_2$};
            \node (X) [left = 1.1cm of D11,] {$X$};
            \node (Xmec) [above left = 1.2cm and 0.55cm of X, relevancew] {$\Theta_X$};
            \node (D11mec) [right = 1.1cm of Xmec, relevanceb] {$\dot{\Pi}_{A_1}$};
            \node (M) [right = 1.1cm of D11mec, decision, player1] {$M_A$};
            \node (D12mec) [right = 1.1cm of M, relevanceb] {$\dot{\Pi}_{A_2}$};            
            \node (P1) [above = 1.2cm of M, relevanceb] {$\Pi_{M_A}$};
            \node (P2) [right = 1.1cm of D12mec, relevanceb] {$\Pi_{B_{1,2}}$};
            \node (D13) [right = 1.1cm of P2, decision, player1] {$A_3$};
            \node (P3) [above right = 1.2cm and .7cm of P2, relevanceb] {$\Pi_{A_3}$};
            \node (B3) [left = 1.1cm of U1mec, decision, player2] {$B_3$};
            \node (P4) [left = 1.1cm of X, relevanceb] {$\Pi_{B_3}$};

            \edge {B3} {X};
            \edge [gray!50] {P4} {B3};
            \edge [gray!50] {Xmec} {X};
            \edge {D11} {X};
            \edge {M} {D11, D12};
            \edge [gray!50] {D11mec} {D11};
            \edge [gray!50] {D12mec} {D12};
            \edge {X,D11,D12} {U1};
            \edge {D21,D22} {U2};
            \edge [gray!50] {U1mec} {U1};

            \edge [gray!50] {U2mec} {U2};
            \edge [gray!50] {P1} {M};
            \edge [gray!50] {P2} {D21, D22};
            \edge [gray!50] {P3} {D13};
            \edge {P3} {P2};
            \edge {D21} {U1};

            \draw (D11mec) edge[,->,in=200,out=90] (P1);
            \draw (D12mec) edge[,->,in=-20,out=90] (P1);
            \draw (P4) edge[,->,in=160,out=110] (P1);

            \draw (P2) edge[->,in=0,out=110] (P1);
            \draw (Xmec) edge[->,in=180,out=70] (P1);
            \draw (P2) edge[->,in=190,out=90] (P3);

            \node (space1) [minimum size=0mm, node distance=2mm, right = 1.3cm of U2mec, draw=none] {};
            \node (space3) [minimum size=0mm, node distance=2mm, below = 0.7cm of U2mec, draw=none] {};
            \draw (D13) edge[,in=90,out=-70] (space1.center)
            (space1.center) edge[,out=-100,in=10] (space3.center)
            (space3.center) edge[,->,out=180,in=-40] (U2);

            \node (space2) [minimum size=0mm, node distance=2mm, right = 0.7cm of D22, draw=none] {};
            \draw (U2mec) edge[,in=270,out=45] (space2.center)
            (space2.center) edge[,->,out=90,in=-40] (P2);
                 
        \end{influence-diagram}
        }
        \caption{}
        \label{fig:Markov:b}
    \end{subfigure}
      \caption{Mechanised graphs for a CE with (a) public and (b) private recommendations, where the blue edges are added for a MAID-CE; (c) a Markov game;(d) a team setting with imperfect communication.}
    \label{fig:teamMarkov}
  \end{figure}

      We introduced forgetfulness and absent-mindedness as properties of individual agents (due to imperfect memory). However, imperfect recall also commonly arises in \emph{team situations}; each team consists of several agents targeting a common goal with imperfect communication. Forgetfulness or absent-mindedness occurs when an agent does not know their teammates' actions (or observations) or whether they have acted at all. Mechanised graphs represent these situations where teams often employ a mix of randomisation strategies (e.g., Figure \ref{fig:Markov:b}). For mixed policies, the random seed is chosen at the start, before the agents set out following their distinct policies. For behavioural policies, agents pick a new random seed at every decision point. Behavioural mixtures correspond to randomising at both stages.

Another application of imperfect recall in MAIDs is to \emph{Markov (or `stochastic') games} \citep{shapley1953stochastic}, in which the agents move between different states over time (e.g., Figure \ref{fig:Markov:a}). At each time step~$t$, each agent~$i$ selects an action $A^i_t$, and the game probabilistically transitions to a new state $S_{t+1}$, depending on the previous state $S_{t}$ and the actions selected, and each agent receives a payoff $R^i_t$. Each $S_{t+1}$ and $R^i_{t}$ has parents $\{S_t,A^{1}_t,\dots, A^n_t\}$ and must be identically distributed for all $t$, again represented using shared mechanism variables. Often, the agent must learn a memoryless, stationary policy $\pi^i:S\to \Delta (A^i)$, where $S$ is the set of states and $\Delta (A^i)$ the set of probability distributions over agent $i$'s actions. Hence, the agents are absent-minded (every decision $A^i_{t+1}$ of agent $i$ shares the same decision rule) and use \emph{behavioural} policies (since the action selected in each state is independently stochastic). In light of Proposition~\ref{prop:noNE}, it is therefore natural to ask whether a Markov game may not have an NE in memoryless stationary policies. It is known that infinite-horizon Markov games might not (for a counterexample see \cite{de2001quantitative}). Although infinite games lie outside of the scope of this paper, it is nonetheless insightful to note that this possible non-existence is due to absent-mindedness: if agents can choose a different decision rule at each time step, a behavioural NE is guaranteed~\citep{maskin2001Markov}.

We have shown how to handle imperfect recall in MAIDs by overcoming the potential lack of NEs in behavioural policies using mixed and correlated equilibria. EFGs leave many assumptions about how agents play games hidden, but mechanised graphs make explicit the assumptions behind imperfect recall (both forgetfulness and absent-mindedness), mixed policies, and two types of correlated equilibria. Our complexity results highlight the importance of restricting the use of MAIDs to those with a limited number of decision variables and bounded treewidth. Finally, our applications to Markov games and team situations show that imperfect recall broadens the scope of what can be modelled using MAIDs.

\paragraph{Acknowledgements} 
The authors wish to thank Ryan Carey, Tom Everitt, and Francis Rhys Ward for invaluable feedback, as well as three anonymous reviewers for their helpful comments.  Fox was supported by the EPSRC Centre for Doctoral Training in Autonomous Intelligent Machines and Systems (Reference:
EP/S024050/1), MacDermott was supported by the UKRI Centre
for Doctoral Training in Safe and Trusted Artificial Intelligence (Reference: EP/S023356/1), Hammond was supported by an EPSRC Doctoral Training Partnership studentship (Reference: 2218880), and Wooldridge was supported by a UKRI
Turing AI World Leading Researcher Fellowship (Reference: EP/W002949/1).

\bibliographystyle{eptcs}
\bibliography{refs}

\begin{thebibliography}{10}
\providecommand{\bibitemdeclare}[2]{}
\providecommand{\surnamestart}{}
\providecommand{\surnameend}{}
\providecommand{\urlprefix}{Available at }
\providecommand{\url}[1]{\texttt{#1}}
\providecommand{\href}[2]{\texttt{#2}}
\providecommand{\urlalt}[2]{\href{#1}{#2}}
\providecommand{\doi}[1]{doi:\urlalt{https://doi.org/#1}{#1}}
\providecommand{\eprint}[1]{arXiv:\urlalt{https://arxiv.org/abs/#1}{#1}}
\providecommand{\bibinfo}[2]{#2}

\bibitemdeclare{inproceedings}{ashurst2022fair}
\bibitem{ashurst2022fair}
\bibinfo{author}{Carolyn \surnamestart Ashurst\surnameend},
  \bibinfo{author}{Ryan \surnamestart Carey\surnameend},
  \bibinfo{author}{Silvia \surnamestart Chiappa\surnameend} \&
  \bibinfo{author}{Tom \surnamestart Everitt\surnameend}
  (\bibinfo{year}{2022}): \emph{\bibinfo{title}{Why fair labels can yield
  unfair predictions: Graphical conditions for introduced unfairness}}.
\newblock In: {\slshape \bibinfo{booktitle}{Proceedings of the AAAI Conference
  on Artificial Intelligence}}, \bibinfo{volume}{36}, pp.
  \bibinfo{pages}{9494--9503}, \doi{10.1609/aaai.v36i9.21182}.

\bibitemdeclare{article}{aumann1974subjectivity}
\bibitem{aumann1974subjectivity}
\bibinfo{author}{Robert~J \surnamestart Aumann\surnameend}
  (\bibinfo{year}{1974}): \emph{\bibinfo{title}{Subjectivity and correlation in
  randomized strategies}}.
\newblock {\slshape \bibinfo{journal}{Journal of mathematical Economics}}
  \bibinfo{volume}{1}(\bibinfo{number}{1}), pp. \bibinfo{pages}{67--96},
  \doi{10.1016/0304-4068(74)90037-8}.

\bibitemdeclare{article}{aumann1997absent}
\bibitem{aumann1997absent}
\bibinfo{author}{Robert~J \surnamestart Aumann\surnameend},
  \bibinfo{author}{Sergiu \surnamestart Hart\surnameend} \&
  \bibinfo{author}{Motty \surnamestart Perry\surnameend}
  (\bibinfo{year}{1997}): \emph{\bibinfo{title}{The absent-minded driver}}.
\newblock {\slshape \bibinfo{journal}{Games and Economic Behavior}}
  \bibinfo{volume}{20}(\bibinfo{number}{1}), pp. \bibinfo{pages}{102--116},
  \doi{10.1006/game.1997.0577}.

\bibitemdeclare{inproceedings}{bodlaender1993linear}
\bibitem{bodlaender1993linear}
\bibinfo{author}{Hans~L \surnamestart Bodlaender\surnameend}
  (\bibinfo{year}{1993}): \emph{\bibinfo{title}{A linear time algorithm for
  finding tree-decompositions of small treewidth}}.
\newblock In: {\slshape \bibinfo{booktitle}{Proceedings of the twenty-fifth
  annual ACM symposium on Theory of computing}}, pp. \bibinfo{pages}{226--234},
  \doi{10.1145/167088.167161}.

\bibitemdeclare{inproceedings}{bodlaender2002complexity}
\bibitem{bodlaender2002complexity}
\bibinfo{author}{Hans~L \surnamestart Bodlaender\surnameend},
  \bibinfo{author}{Frank \surnamestart van~den Eijkhof\surnameend} \&
  \bibinfo{author}{Linda~C \surnamestart van~der Gaag\surnameend}
  (\bibinfo{year}{2002}): \emph{\bibinfo{title}{On the complexity of the {MPA}
  problem in probabilistic networks}}.
\newblock In: {\slshape \bibinfo{booktitle}{ECAI}}, pp.
  \bibinfo{pages}{675--679}.

\bibitemdeclare{inproceedings}{de2008strategy}
\bibitem{de2008strategy}
\bibinfo{author}{Cassio~P \surnamestart de~Campos\surnameend} \&
  \bibinfo{author}{Qiang \surnamestart Ji\surnameend} (\bibinfo{year}{2008}):
  \emph{\bibinfo{title}{Strategy selection in influence diagrams using
  imprecise probabilities}}.
\newblock In: {\slshape \bibinfo{booktitle}{Proceedings of the Twenty-Fourth
  Conference on Uncertainty in Artificial Intelligence}}, pp.
  \bibinfo{pages}{121--128}.

\bibitemdeclare{article}{carroll2023characterizing}
\bibitem{carroll2023characterizing}
\bibinfo{author}{Micah \surnamestart Carroll\surnameend}, \bibinfo{author}{Alan
  \surnamestart Chan\surnameend}, \bibinfo{author}{Henry \surnamestart
  Ashton\surnameend} \& \bibinfo{author}{David \surnamestart
  Krueger\surnameend} (\bibinfo{year}{2023}):
  \emph{\bibinfo{title}{Characterizing Manipulation from AI Systems}}.
\newblock {\slshape \bibinfo{journal}{arXiv preprint arXiv:2303.09387}}.

\bibitemdeclare{article}{cho1987signaling}
\bibitem{cho1987signaling}
\bibinfo{author}{In-Koo \surnamestart Cho\surnameend} \&
  \bibinfo{author}{David~M \surnamestart Kreps\surnameend}
  (\bibinfo{year}{1987}): \emph{\bibinfo{title}{Signaling games and stable
  equilibria}}.
\newblock {\slshape \bibinfo{journal}{The Quarterly Journal of Economics}}
  \bibinfo{volume}{102}(\bibinfo{number}{2}), pp. \bibinfo{pages}{179--221},
  \doi{10.2307/1885060}.

\bibitemdeclare{article}{daskalakis2009complexity}
\bibitem{daskalakis2009complexity}
\bibinfo{author}{Constantinos \surnamestart Daskalakis\surnameend},
  \bibinfo{author}{Paul~W \surnamestart Goldberg\surnameend} \&
  \bibinfo{author}{Christos~H \surnamestart Papadimitriou\surnameend}
  (\bibinfo{year}{2009}): \emph{\bibinfo{title}{The complexity of computing a
  {N}ash equilibrium}}.
\newblock {\slshape \bibinfo{journal}{SIAM Journal on Computing}}
  \bibinfo{volume}{39}(\bibinfo{number}{1}), pp. \bibinfo{pages}{195--259},
  \doi{10.1145/1132516.1132527}.

\bibitemdeclare{article}{dawid2002influence}
\bibitem{dawid2002influence}
\bibinfo{author}{A.~P. \surnamestart Dawid\surnameend} (\bibinfo{year}{2002}):
  \emph{\bibinfo{title}{Influence Diagrams for Causal Modelling and
  Inference}}.
\newblock {\slshape \bibinfo{journal}{International Statistical Review}}
  \bibinfo{volume}{70}(\bibinfo{number}{2}), pp. \bibinfo{pages}{161--189},
  \doi{10.1111/j.1751-5823.2002.tb00354.x}.

\bibitemdeclare{inproceedings}{de2001quantitative}
\bibitem{de2001quantitative}
\bibinfo{author}{Luca \surnamestart De~Alfaro\surnameend} \&
  \bibinfo{author}{Rupak \surnamestart Majumdar\surnameend}
  (\bibinfo{year}{2001}): \emph{\bibinfo{title}{Quantitative Solution of
  Omega-Regular Games}}.
\newblock In: {\slshape \bibinfo{booktitle}{Proceedings of the thirty-third
  annual ACM symposium on Theory of computing}}, pp. \bibinfo{pages}{675--683},
  \doi{10.1016/j.jcss.2003.07.009}.

\bibitemdeclare{inproceedings}{de2005inferential}
\bibitem{de2005inferential}
\bibinfo{author}{Cassio~Polpo \surnamestart De~Campos\surnameend} \&
  \bibinfo{author}{Fabio~Gagliardi \surnamestart Cozman\surnameend}
  (\bibinfo{year}{2005}): \emph{\bibinfo{title}{The inferential complexity of
  Bayesian and credal networks}}.
\newblock In: {\slshape \bibinfo{booktitle}{IJCAI}}, \bibinfo{volume}{5},
  \bibinfo{organization}{Citeseer}, pp. \bibinfo{pages}{1313--1318}.

\bibitemdeclare{article}{detwarasiti2005influence}
\bibitem{detwarasiti2005influence}
\bibinfo{author}{Apiruk \surnamestart Detwarasiti\surnameend} \&
  \bibinfo{author}{Ross~D \surnamestart Shachter\surnameend}
  (\bibinfo{year}{2005}): \emph{\bibinfo{title}{Influence diagrams for team
  decision analysis}}.
\newblock {\slshape \bibinfo{journal}{Decision Analysis}}
  \bibinfo{volume}{2}(\bibinfo{number}{4}), pp. \bibinfo{pages}{207--228},
  \doi{10.1287/deca.1050.0047}.

\bibitemdeclare{inproceedings}{everitt2021agent}
\bibitem{everitt2021agent}
\bibinfo{author}{Tom \surnamestart Everitt\surnameend}, \bibinfo{author}{Ryan
  \surnamestart Carey\surnameend}, \bibinfo{author}{Eric~D \surnamestart
  Langlois\surnameend}, \bibinfo{author}{Pedro~A \surnamestart
  Ortega\surnameend} \& \bibinfo{author}{Shane \surnamestart Legg\surnameend}
  (\bibinfo{year}{2021}): \emph{\bibinfo{title}{Agent incentives: A causal
  perspective}}.
\newblock In: {\slshape \bibinfo{booktitle}{Proceedings of the AAAI Conference
  on Artificial Intelligence}}, \bibinfo{volume}{35}, pp.
  \bibinfo{pages}{11487--11495}, \doi{10.1609/aaai.v35i13.17368}.

\bibitemdeclare{article}{everitt2021reward}
\bibitem{everitt2021reward}
\bibinfo{author}{Tom \surnamestart Everitt\surnameend}, \bibinfo{author}{Marcus
  \surnamestart Hutter\surnameend}, \bibinfo{author}{Ramana \surnamestart
  Kumar\surnameend} \& \bibinfo{author}{Victoria \surnamestart
  Krakovna\surnameend} (\bibinfo{year}{2021}): \emph{\bibinfo{title}{Reward
  tampering problems and solutions in reinforcement learning: A causal
  influence diagram perspective}}.
\newblock {\slshape \bibinfo{journal}{Synthese}}
  \bibinfo{volume}{198}(\bibinfo{number}{Suppl 27}), pp.
  \bibinfo{pages}{6435--6467}, \doi{10.1007/s11229-021-03141-4}.

\bibitemdeclare{inproceedings}{farquhar2022path}
\bibitem{farquhar2022path}
\bibinfo{author}{Sebastian \surnamestart Farquhar\surnameend},
  \bibinfo{author}{Ryan \surnamestart Carey\surnameend} \& \bibinfo{author}{Tom
  \surnamestart Everitt\surnameend} (\bibinfo{year}{2022}):
  \emph{\bibinfo{title}{Path-specific objectives for safer agent incentives}}.
\newblock In: {\slshape \bibinfo{booktitle}{Proceedings of the AAAI Conference
  on Artificial Intelligence}}, \bibinfo{volume}{36}, pp.
  \bibinfo{pages}{9529--9538}, \doi{10.1609/aaai.v36i9.21186}.

\bibitemdeclare{inproceedings}{Hammond2021}
\bibitem{Hammond2021}
\bibinfo{author}{Lewis \surnamestart Hammond\surnameend},
  \bibinfo{author}{James \surnamestart Fox\surnameend}, \bibinfo{author}{Tom
  \surnamestart Everitt\surnameend}, \bibinfo{author}{Alessandro \surnamestart
  Abate\surnameend} \& \bibinfo{author}{Michael \surnamestart
  Wooldridge\surnameend} (\bibinfo{year}{2021}):
  \emph{\bibinfo{title}{Equilibrium Refinements for Multi-agent Influence
  Diagrams: Theory and Practice}}.
\newblock In: {\slshape \bibinfo{booktitle}{Proceedings of the 20th
  International Conference on Autonomous Agents and Multiagent Systems}}, pp.
  \bibinfo{pages}{574--582}.

\bibitemdeclare{article}{causalgames}
\bibitem{causalgames}
\bibinfo{author}{Lewis \surnamestart Hammond\surnameend},
  \bibinfo{author}{James \surnamestart Fox\surnameend}, \bibinfo{author}{Tom
  \surnamestart Everitt\surnameend}, \bibinfo{author}{Ryan \surnamestart
  Carey\surnameend}, \bibinfo{author}{Alessandro \surnamestart
  Abate\surnameend} \& \bibinfo{author}{Michael \surnamestart
  Wooldridge\surnameend} (\bibinfo{year}{2023}):
  \emph{\bibinfo{title}{Reasoning about causality in games}}.
\newblock {\slshape \bibinfo{journal}{Artificial Intelligence}}
  \bibinfo{volume}{320}, p. \bibinfo{pages}{103919},
  \doi{10.1016/j.artint.2023.103919}.

\bibitemdeclare{article}{hart1989existence}
\bibitem{hart1989existence}
\bibinfo{author}{Sergiu \surnamestart Hart\surnameend} \&
  \bibinfo{author}{David \surnamestart Schmeidler\surnameend}
  (\bibinfo{year}{1989}): \emph{\bibinfo{title}{Existence of correlated
  equilibria}}.
\newblock {\slshape \bibinfo{journal}{Mathematics of Operations Research}}
  \bibinfo{volume}{14}(\bibinfo{number}{1}), pp. \bibinfo{pages}{18--25},
  \doi{10.1287/moor.14.1.18}.

\bibitemdeclare{inproceedings}{huang2008computing}
\bibitem{huang2008computing}
\bibinfo{author}{Wan \surnamestart Huang\surnameend} \&
  \bibinfo{author}{Bernhard \surnamestart {von Stengel}\surnameend}
  (\bibinfo{year}{2008}): \emph{\bibinfo{title}{Computing an extensive-form
  correlated equilibrium in polynomial time}}.
\newblock In: {\slshape \bibinfo{booktitle}{International Workshop on Internet
  and Network Economics}}, \bibinfo{organization}{Springer}, pp.
  \bibinfo{pages}{506--513}, \doi{10.1007/978-3-540-92185-1_56}.

\bibitemdeclare{article}{kaneko1995behavior}
\bibitem{kaneko1995behavior}
\bibinfo{author}{Mamoru \surnamestart Kaneko\surnameend} \&
  \bibinfo{author}{J~Jude \surnamestart Kline\surnameend}
  (\bibinfo{year}{1995}): \emph{\bibinfo{title}{Behavior strategies, mixed
  strategies and perfect recall}}.
\newblock {\slshape \bibinfo{journal}{International Journal of Game Theory}}
  \bibinfo{volume}{24}(\bibinfo{number}{2}), pp. \bibinfo{pages}{127--145},
  \doi{10.1007/bf01240038}.

\bibitemdeclare{article}{kenton2022discovering}
\bibitem{kenton2022discovering}
\bibinfo{author}{Zachary \surnamestart Kenton\surnameend},
  \bibinfo{author}{Ramana \surnamestart Kumar\surnameend},
  \bibinfo{author}{Sebastian \surnamestart Farquhar\surnameend},
  \bibinfo{author}{Jonathan \surnamestart Richens\surnameend},
  \bibinfo{author}{Matt \surnamestart MacDermott\surnameend} \&
  \bibinfo{author}{Tom \surnamestart Everitt\surnameend}
  (\bibinfo{year}{2022}): \emph{\bibinfo{title}{Discovering Agents}}.
\newblock {\slshape \bibinfo{journal}{arXiv preprint arXiv:2208.08345}}.

\bibitemdeclare{article}{kjaerulff2008bayesian}
\bibitem{kjaerulff2008bayesian}
\bibinfo{author}{Uffe~B \surnamestart Kjaerulff\surnameend} \&
  \bibinfo{author}{Anders~L \surnamestart Madsen\surnameend}
  (\bibinfo{year}{2008}): \emph{\bibinfo{title}{Bayesian networks and influence
  diagrams}}.
\newblock {\slshape \bibinfo{journal}{Springer Science+ Business Media}}
  \bibinfo{volume}{200}, p. \bibinfo{pages}{114}.

\bibitemdeclare{book}{koller2009probabilistic}
\bibitem{koller2009probabilistic}
\bibinfo{author}{Daphne \surnamestart Koller\surnameend} \&
  \bibinfo{author}{Nir \surnamestart Friedman\surnameend}
  (\bibinfo{year}{2009}): \emph{\bibinfo{title}{Probabilistic graphical models:
  principles and techniques}}.
\newblock \bibinfo{publisher}{MIT press}.

\bibitemdeclare{article}{koller2003multi}
\bibitem{koller2003multi}
\bibinfo{author}{Daphne \surnamestart Koller\surnameend} \&
  \bibinfo{author}{Brian \surnamestart Milch\surnameend}
  (\bibinfo{year}{2003}): \emph{\bibinfo{title}{Multi-agent influence diagrams
  for representing and solving games}}.
\newblock {\slshape \bibinfo{journal}{Games and economic behavior}}
  \bibinfo{volume}{45}(\bibinfo{number}{1}), pp. \bibinfo{pages}{181--221},
  \doi{10.1016/s0899-8256(02)00544-4}.

\bibitemdeclare{incollection}{Kuhn1953}
\bibitem{Kuhn1953}
\bibinfo{author}{Harold~W. \surnamestart Kuhn\surnameend}
  (\bibinfo{year}{1953}): \emph{\bibinfo{title}{Extensive Games and the Problem
  of Information}}.
\newblock In: {\slshape \bibinfo{booktitle}{Contributions to the Theory of
  Games ({AM}-28)}}, \bibinfo{volume}{2}, \bibinfo{publisher}{Princeton
  University Press}, pp. \bibinfo{pages}{193--216},
  \doi{10.1515/9781400881970-012}.

\bibitemdeclare{inproceedings}{kwisthout2012computational}
\bibitem{kwisthout2012computational}
\bibinfo{author}{Johan \surnamestart Kwisthout\surnameend} \&
  \bibinfo{author}{Linda~C \surnamestart van~der Gaag\surnameend}
  (\bibinfo{year}{2008}): \emph{\bibinfo{title}{The computational complexity of
  sensitivity analysis and parameter tuning}}.
\newblock In: {\slshape \bibinfo{booktitle}{Proceedings of the Twenty-Fourth
  Conference on Uncertainty in Artificial Intelligence}}, pp.
  \bibinfo{pages}{349--356}.

\bibitemdeclare{book}{kwisthout2009computational}
\bibitem{kwisthout2009computational}
\bibinfo{author}{Johan Henri~Petrus \surnamestart Kwisthout\surnameend} et~al.
  (\bibinfo{year}{2009}): \emph{\bibinfo{title}{The computational complexity of
  probabilistic networks}}.
\newblock \bibinfo{publisher}{Utrecht University}.

\bibitemdeclare{article}{lauritzen2001representing}
\bibitem{lauritzen2001representing}
\bibinfo{author}{Steffen~L \surnamestart Lauritzen\surnameend} \&
  \bibinfo{author}{Dennis \surnamestart Nilsson\surnameend}
  (\bibinfo{year}{2001}): \emph{\bibinfo{title}{Representing and solving
  decision problems with limited information}}.
\newblock {\slshape \bibinfo{journal}{Management Science}}
  \bibinfo{volume}{47}(\bibinfo{number}{9}), pp. \bibinfo{pages}{1235--1251},
  \doi{10.1287/mnsc.47.9.1235.9779}.

\bibitemdeclare{article}{littman2001stochastic}
\bibitem{littman2001stochastic}
\bibinfo{author}{Michael~L \surnamestart Littman\surnameend},
  \bibinfo{author}{Stephen~M \surnamestart Majercik\surnameend} \&
  \bibinfo{author}{Toniann \surnamestart Pitassi\surnameend}
  (\bibinfo{year}{2001}): \emph{\bibinfo{title}{Stochastic boolean
  satisfiability}}.
\newblock {\slshape \bibinfo{journal}{Journal of Automated Reasoning}}
  \bibinfo{volume}{27}(\bibinfo{number}{3}), pp. \bibinfo{pages}{251--296}.

\bibitemdeclare{book}{maschler2020game}
\bibitem{maschler2020game}
\bibinfo{author}{Michael \surnamestart Maschler\surnameend},
  \bibinfo{author}{Shmuel \surnamestart Zamir\surnameend} \&
  \bibinfo{author}{Eilon \surnamestart Solan\surnameend}
  (\bibinfo{year}{2020}): \emph{\bibinfo{title}{Game theory}}.
\newblock \bibinfo{publisher}{Cambridge University Press}.

\bibitemdeclare{article}{maskin2001Markov}
\bibitem{maskin2001Markov}
\bibinfo{author}{Eric \surnamestart Maskin\surnameend} \& \bibinfo{author}{Jean
  \surnamestart Tirole\surnameend} (\bibinfo{year}{2001}):
  \emph{\bibinfo{title}{Markov perfect equilibrium: I. Observable actions}}.
\newblock {\slshape \bibinfo{journal}{Journal of Economic Theory}}
  \bibinfo{volume}{100}(\bibinfo{number}{2}), pp. \bibinfo{pages}{191--219},
  \doi{10.1006/jeth.2000.2785}.

\bibitemdeclare{article}{maua2012solving}
\bibitem{maua2012solving}
\bibinfo{author}{Denis~Deratani \surnamestart Mau{\'a}\surnameend},
  \bibinfo{author}{Cassio~P \surnamestart de~Campos\surnameend} \&
  \bibinfo{author}{Marco \surnamestart Zaffalon\surnameend}
  (\bibinfo{year}{2012}): \emph{\bibinfo{title}{Solving limited memory
  influence diagrams}}.
\newblock {\slshape \bibinfo{journal}{Journal of Artificial Intelligence
  Research}} \bibinfo{volume}{44}, pp. \bibinfo{pages}{97--140},
  \doi{10.1613/jair.3625}.

\bibitemdeclare{article}{maua2016fast}
\bibitem{maua2016fast}
\bibinfo{author}{Denis~Deratani \surnamestart Mau{\'a}\surnameend} \&
  \bibinfo{author}{Fabio~Gagliardi \surnamestart Cozman\surnameend}
  (\bibinfo{year}{2016}): \emph{\bibinfo{title}{Fast local search methods for
  solving limited memory influence diagrams}}.
\newblock {\slshape \bibinfo{journal}{International Journal of Approximate
  Reasoning}} \bibinfo{volume}{68}, pp. \bibinfo{pages}{230--245},
  \doi{10.1016/j.ijar.2015.05.003}.

\bibitemdeclare{article}{van2022complete}
\bibitem{van2022complete}
\bibinfo{author}{Chris \surnamestart van Merwijk\surnameend},
  \bibinfo{author}{Ryan \surnamestart Carey\surnameend} \& \bibinfo{author}{Tom
  \surnamestart Everitt\surnameend} (\bibinfo{year}{2022}):
  \emph{\bibinfo{title}{A Complete Criterion for Value of Information in
  Soluble Influence Diagrams}}.
\newblock {\slshape \bibinfo{journal}{Proceedings of the {AAAI} Conference on
  Artificial Intelligence}} \bibinfo{volume}{36}(\bibinfo{number}{9}), pp.
  \bibinfo{pages}{10034--10041}, \doi{10.1609/aaai.v36i9.21242}.

\bibitemdeclare{techreport}{milch2008ignorable}
\bibitem{milch2008ignorable}
\bibinfo{author}{Brian \surnamestart Milch\surnameend} \&
  \bibinfo{author}{Daphne \surnamestart Koller\surnameend}
  (\bibinfo{year}{2008}): \emph{\bibinfo{title}{Ignorable Information in
  Multi-agent Scenarios}}.
\newblock \bibinfo{type}{Technical Report}
  \bibinfo{number}{MIT-CSAIL-TR-2008-029}, \bibinfo{institution}{Computer
  Science and Artificial Intelligence Laboratory, MIT}.

\bibitemdeclare{article}{nash1950equilibrium}
\bibitem{nash1950equilibrium}
\bibinfo{author}{J.~F. \surnamestart Nash\surnameend} (\bibinfo{year}{1950}):
  \emph{\bibinfo{title}{Equilibrium Points in N-person Games}}.
\newblock {\slshape \bibinfo{journal}{Proceedings of the National Academy of
  Sciences}} \bibinfo{volume}{36}(\bibinfo{number}{1}), pp.
  \bibinfo{pages}{48--49}.

\bibitemdeclare{book}{papadimitriou1994}
\bibitem{papadimitriou1994}
\bibinfo{author}{Christos \surnamestart Papadimitriou\surnameend}
  (\bibinfo{year}{1994}): \emph{\bibinfo{title}{Computational Complexity}}.
\newblock \bibinfo{publisher}{Addison Wesley}.

\bibitemdeclare{article}{park2004complexity}
\bibitem{park2004complexity}
\bibinfo{author}{James~D \surnamestart Park\surnameend} \&
  \bibinfo{author}{Adnan \surnamestart Darwiche\surnameend}
  (\bibinfo{year}{2004}): \emph{\bibinfo{title}{Complexity results and
  approximation strategies for {MAP} explanations}}.
\newblock {\slshape \bibinfo{journal}{Journal of Artificial Intelligence
  Research}} \bibinfo{volume}{21}, pp. \bibinfo{pages}{101--133},
  \doi{10.1613/jair.1236}.

\bibitemdeclare{inproceedings}{pfeffer2007reasoning}
\bibitem{pfeffer2007reasoning}
\bibinfo{author}{Avi \surnamestart Pfeffer\surnameend} \&
  \bibinfo{author}{Ya’akov \surnamestart Gal\surnameend}
  (\bibinfo{year}{2007}): \emph{\bibinfo{title}{On the reasoning patterns of
  agents in games}}.
\newblock In: {\slshape \bibinfo{booktitle}{AAAI}}, pp.
  \bibinfo{pages}{102--109}.

\bibitemdeclare{article}{piccione1997interpretation}
\bibitem{piccione1997interpretation}
\bibinfo{author}{Michele \surnamestart Piccione\surnameend} \&
  \bibinfo{author}{Ariel \surnamestart Rubinstein\surnameend}
  (\bibinfo{year}{1997}): \emph{\bibinfo{title}{On the interpretation of
  decision problems with imperfect recall}}.
\newblock {\slshape \bibinfo{journal}{Games and Economic Behavior}}
  \bibinfo{volume}{20}(\bibinfo{number}{1}), pp. \bibinfo{pages}{3--24},
  \doi{10.1016/0165-4896(96)81573-3}.

\bibitemdeclare{article}{roth1996hardness}
\bibitem{roth1996hardness}
\bibinfo{author}{Dan \surnamestart Roth\surnameend} (\bibinfo{year}{1996}):
  \emph{\bibinfo{title}{On the hardness of approximate reasoning}}.
\newblock {\slshape \bibinfo{journal}{Artificial Intelligence}}
  \bibinfo{volume}{82}(\bibinfo{number}{1-2}), pp. \bibinfo{pages}{273--302},
  \doi{10.1016/0004-3702(94)00092-1}.

\bibitemdeclare{inproceedings}{shachter2013bayes}
\bibitem{shachter2013bayes}
\bibinfo{author}{Ross~D \surnamestart Shachter\surnameend}
  (\bibinfo{year}{1998}): \emph{\bibinfo{title}{Bayes-ball: Rational pastime
  (for determining irrelevance and requisite information in belief networks and
  influence diagrams)}}.
\newblock In: {\slshape \bibinfo{booktitle}{Proceedings of the Fourteenth
  conference on Uncertainty in artificial intelligence}}, pp.
  \bibinfo{pages}{480--487}.

\bibitemdeclare{article}{shapley1953stochastic}
\bibitem{shapley1953stochastic}
\bibinfo{author}{Lloyd~S \surnamestart Shapley\surnameend}
  (\bibinfo{year}{1953}): \emph{\bibinfo{title}{Stochastic games}}.
\newblock {\slshape \bibinfo{journal}{Proceedings of the national academy of
  sciences}} \bibinfo{volume}{39}(\bibinfo{number}{10}), pp.
  \bibinfo{pages}{1095--1100}.

\bibitemdeclare{article}{shimony1994finding}
\bibitem{shimony1994finding}
\bibinfo{author}{Solomon~Eyal \surnamestart Shimony\surnameend}
  (\bibinfo{year}{1994}): \emph{\bibinfo{title}{Finding {MAP}s for belief
  networks is NP-hard}}.
\newblock {\slshape \bibinfo{journal}{Artificial intelligence}}
  \bibinfo{volume}{68}(\bibinfo{number}{2}), pp. \bibinfo{pages}{399--410},
  \doi{10.1016/0004-3702(94)90072-8}.

\bibitemdeclare{incollection}{spence1978job}
\bibitem{spence1978job}
\bibinfo{author}{Michael \surnamestart Spence\surnameend}
  (\bibinfo{year}{1978}): \emph{\bibinfo{title}{Job market signaling}}.
\newblock In: {\slshape \bibinfo{booktitle}{Uncertainty in economics}},
  \bibinfo{publisher}{Elsevier}, pp. \bibinfo{pages}{281--306}.

\bibitemdeclare{article}{von2008extensive}
\bibitem{von2008extensive}
\bibinfo{author}{Bernhard \surnamestart Von~Stengel\surnameend} \&
  \bibinfo{author}{Fran{\c{c}}oise \surnamestart Forges\surnameend}
  (\bibinfo{year}{2008}): \emph{\bibinfo{title}{Extensive-form correlated
  equilibrium: Definition and computational complexity}}.
\newblock {\slshape \bibinfo{journal}{Mathematics of Operations Research}}
  \bibinfo{volume}{33}(\bibinfo{number}{4}), pp. \bibinfo{pages}{1002--1022},
  \doi{10.1287/moor.1080.0340}.

\bibitemdeclare{inproceedings}{ward2022agent}
\bibitem{ward2022agent}
\bibinfo{author}{Francis~Rhys \surnamestart Ward\surnameend},
  \bibinfo{author}{Francesca \surnamestart Toni\surnameend} \&
  \bibinfo{author}{Francesco \surnamestart Belardinelli\surnameend}
  (\bibinfo{year}{2022}): \emph{\bibinfo{title}{On Agent Incentives to
  Manipulate Human Feedback in Multi-Agent Reward Learning Scenarios.}}
\newblock In: {\slshape \bibinfo{booktitle}{AAMAS}}, pp.
  \bibinfo{pages}{1759--1761}.

\bibitemdeclare{inproceedings}{waugh2009practical}
\bibitem{waugh2009practical}
\bibinfo{author}{Kevin \surnamestart Waugh\surnameend}, \bibinfo{author}{Martin
  \surnamestart Zinkevich\surnameend}, \bibinfo{author}{Michael \surnamestart
  Johanson\surnameend}, \bibinfo{author}{Morgan \surnamestart Kan\surnameend},
  \bibinfo{author}{David \surnamestart Schnizlein\surnameend} \&
  \bibinfo{author}{Michael~H \surnamestart Bowling\surnameend}
  (\bibinfo{year}{2009}): \emph{\bibinfo{title}{A Practical Use of Imperfect
  Recall.}}
\newblock In: {\slshape \bibinfo{booktitle}{SARA}}.

\bibitemdeclare{article}{zahoransky2021partial}
\bibitem{zahoransky2021partial}
\bibinfo{author}{Valeria \surnamestart Zahoransky\surnameend},
  \bibinfo{author}{Julian \surnamestart Gutierrez\surnameend},
  \bibinfo{author}{Paul \surnamestart Harrenstein\surnameend} \&
  \bibinfo{author}{Michael \surnamestart Wooldridge\surnameend}
  (\bibinfo{year}{2021}): \emph{\bibinfo{title}{Partial order games}}.
\newblock {\slshape \bibinfo{journal}{Games}}
  \bibinfo{volume}{13}(\bibinfo{number}{1}), p.~\bibinfo{pages}{2},
  \doi{10.3390/g13010002}.

\end{thebibliography}

\appendix

\section{Strategic Relevance and Subgames}
\label{app:stratrel}

Koller and Milch define \textbf{strategic relevance} to infer whether the choice of a decision rule can affect the optimality of another decision rule \cite{koller2003multi}. Hammond et al. extend strategic relevance to also consider whether the parameterisation of non-decision nodes can affect the decision rule's optimality \cite{causalgames}. Intuitively, a mechanism $\mecvar_V$ is strategically relevant to the decision rule $\Pi_D$ of $D \in \bm{D}^i$ if the choice of CPD at $\mecvar_V$ can affect agent $i$'s utility nodes that are downstream of $D$ (i.e., those in $\bm{U}^i\cap \Desc_D$). Formally:

 \begin{definition}[\cite{koller2003multi,causalgames}]
  \label{def:strategic-relevance} Recall that $\dom(\Pi_D)$ gives the set of possible decision rules at $\Pi_D$ for decision node $D$. 
  Given a MAID with $D \in \bm{D}^i$ and $V \neq D \in \bm{D}$, the mechanism $\mecvar_V$ for $V$ is \textbf{strategically relevant} to $\Pi_D$ if there exist two joint distributions over $\bm{V}$ parameterised by mechanisms $\mecvals$ and $\mecvals'$ respectively such that:
  \begin{itemize}
      \item $\pi_{D} \in \argmax_{\varpi_{D} \in \dom(\Pi_{D})} EU^i((\varpi_{D}, {\bm{\pi}}_{-D}) \mid \mecvals)$
      \item $\mecvals$ differs from $\mecvals'$ only at $\mecvar_V$,
      \item $\pi_{D} \notin \argmax_{\varpi_{D} \in \dom(\Pi_{D})} EU^i((\varpi_{D}, {\bm{\pi}}_{-D}) \mid \mecvals')$, and neither does any decision rule $\varpi_{D}$ that agrees with $\pi_{D}$ on all $\pa_{D}$ such that $\Pr(\pa_{D} \mid \mecvals') > 0$.
  \end{itemize}
\end{definition} 

The first two conditions say: if the decision rule $\pi_{D}$ is optimal for the MAID parameterisation (i.e., the setting of all mechanism variables) $\mecvals$, and $\Pi_D$ does not strategically rely on $\mecvar_V$, then $\pi_{D}$ must also be optimal for any other parameterisation $\mecvals'$ that differs from $\mecvals$ only at $\mecvar_V$. The third condition deals with sub-optimal decision rules in response to zero-probability decision contexts (i.e., non-credible threats).

Koller and Milch \cite{koller2003multi} also derive a graphical criterion for strategic relevance, called \emph{$s$-reachability}, which is sound (if $\mecvar_V$ is strategically-relevant to $\Pi_D$, then $\mecvar_V$ is $s$-reachable from $\Pi_D$) and complete (if $\mecvar_V$ is $s$-reachable from $\Pi_D$, then there is some parameterisation $\mecvals$ of the MAID and some policy profile $\pi$ such that $\mecvar_V$ is strategically-relevant to $\Pi_D$). This uses the \emph{independent mechanised graph} $\meczero{\graph}$, which contains a separate mechanism parent for each variable in the original MAID graph, but no edges between the mechanism variables.  
 \begin{definition}[\cite{koller2003multi}]
     \label{def:$s$-reachability}
     $\mecvar_{V}$ is $s$-reachable from $\Pi_{D}$ if $\mecvar_{V} \not\perp_{\meczero{\graph}} \bm{U}^i \cap \Desc_{D} \mid D, \Pa_{D}$.
 \end{definition}

$s$-reachability determines which inter-mechanism edges are present in the MAID's mechanised graph; $\mecvar_V \rightarrow \Pi_D$ exists in the mechanised graph if and only if $\Pi_D$ strategically relies on $\mecvar_V$. 

\begin{figure}[h!]
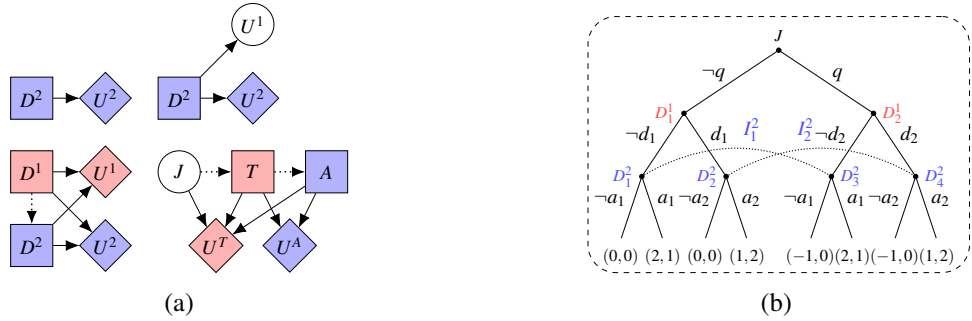

\begin{subfigure}[b]{0.49\linewidth}
  \centering
  \resizebox{0.7\width}{!}{
    \begin{influence-diagram}
      \node (D1) [decision, player1] {$D^1$};
      \node (D2) [decision, player2, below = of D1] {$D^2$};
      \node (U1) [utility, player1, right = of D1] {$U^1$};
      \node (U2) [utility, player2, right = of D2] {$U^2$};
      \node (D22) [decision, player2, above = of D1] {$D^2$};
      \node (U22) [utility, player2, right = of D22] {$U^2$};
      \node (D23) [decision, player2, right = of U22] {$D^2$};
      \node (U23) [utility, player2, right = of D23] {$U^2$};
      \node (U13) [above = of U23] {$U^1$};
      \node (J4) [below = of D23] {$J$};
      \node (D14) [decision, player1, right = of J4] {$T$};
      \node (D24) [decision, player2, right = of D14] {$A$};
      \node (U14) [utility, player1, below right = 1.4cm and 0.7cm of J4] {$U^T$};
      \node (U24) [utility, player2, right = of U14] {$U^A$};

      \edge {D14} {U14, U24};
      \edge {D24} {U14, U24};
      \edge [information] {J4} {D14};
      \edge {J4} {U14};
      \edge [information] {D14} {D24};

      \edge {D22} {U22};
      \edge {D23} {U23, U13};
      \edge {D1, D2} {U1};
      \edge[information] {D1} {D2};
      \edge {D1, D2} {U2};

  \end{influence-diagram}
  }
  \caption{}
  \label{fig:subgames:a}
\end{subfigure}
  \begin{subfigure}[b]{0.49\linewidth}
      \centering
        \resizebox{0.7\width}{!}{
      \begin{istgame}[scale=0.8]
      \xtdistance{15mm}{45mm}
      \istroot(0){$J$}
      \istb{\neg q}[al]
      \istb{q}[ar] 
      \endist
      \xtdistance{15mm}{20mm}
      \istroot(1)(0-1)<180, red!70>{$D^1_1$}
      \istb{\neg d_1}[al]
      \istb{d_1}[ar] 
      \endist
      \istroot(2)(0-2)<0, red!70>{$D^1_2$}
      \istb{\neg d_2}[al]
      \istb{d_2}[ar] 
      \endist
      \xtdistance{15mm}{10mm}
      \istroot(3)(1-1)<180, blue!70>{$D^2_1$}
      \istb{\neg a_1}[al]{(0,0)}
      \istb{a_1}[ar]{(2,1)} 
      \endist
      \istroot(4)(1-2)<180, blue!70>{$D^2_2$}
      \istb{\neg a_2}[al]{(0,0)}
      \istb{a_2}[ar]{(1,2)} 
      \endist
      \istroot(5)(2-1)<0,blue!70>{$D^2_3$}
      \istb{\neg a_1}[al]{(-1,0)}
      \istb{a_1}[ar]{(2,1)} 
      \endist
      \istroot(6)(2-2)<0,blue!70>{$D^2_4$}
      \istb{\neg a_2}[al]{(-1,0)}
      \istb{a_2}[ar]{(1,2)} 
      \endist
      \xtCInfoset(3)(5){\textcolor{blue!70}{$I^2_1$}}[above right]
      \xtCInfoset(4)(6){\textcolor{blue!70}{$I^2_2$}}[above left]
      \xtSubgameBox(0){(0)(1)(3-1)(3-2)(4-1)(4-2)(2)(5-1)(5-2)(6-1)(6-2)}[black,inner sep = 17pt, xshift=0pt, yshift=0pt]
      \end{istgame}}
      \caption{}
      \label{fig:subgames:b}
  \end{subfigure}
  \caption{(a) shows the four subdiagrams (three of which are `proper') of the MAID in Figure \ref{fig:taxi:a} and (b) shows the corresponding EFG in which none of the MAID's proper subgames can be recognised.
  }
  \label{fig:subgames}
\end{figure}

We now briefly introduce subgames (see \cite{causalgames}) for more details) because they simplify the presentation of some of our proofs in Appendix \ref{app:proofs}. Subgames in EFGs represent parts of the game that can be solved independently from the rest. In MAIDs, they fulfil the same purpose: they identify parts of the game that can be solved independently (and allow a subgame-perfect equilibrium refinement to be defined). Subgames in MAIDs are found by exploiting $s$-reachability to find the graphs underlying the subgames, called sub-diagrams. To then find the subgames for each subdiagram, the parameterisation of the remaining variables is updated to be consistent with the original game and graph structure. 

Importantly, because MAIDs explicitly represent conditional independencies between variables, we can often find more subgames in a MAID than in a corresponding EFG. This is the case for Example \ref{ex:taxi}'s MAID (shown in Figure \ref{fig:taxi:a}) with the four subdiagrams (three proper) in Figure \ref{fig:subgames:a}. Each subdiagram has a set of associated subgames, one for each instantiation of the variables outside of the subdiagram.
None of the proper MAID subgames can be recognised as subgames in the corresponding EFG (in Figure \ref{fig:subgames:b}).

\begin{definition}
  \label{def:subdiagram}
  Given a MAID $\model = (\graph, \bm{\theta})$, with $\graph = (N, \bm{V}, E)$, the subgraph $(\bm{V}', E')$ of $\graph$, along with the set of agents $N' \subseteq N$ possessing decision variables in that subgraph, is known as a \textbf{subdiagram} $\graph' = (N', \bm{V}', E')$ if:
  \begin{itemize}
    \item $\bm{V}'$ contains every variable $Z$ such that $\mecvar_Z$ is $s$-reachable from some $\Pi_{D}$ with $D \in \bm{V}'$,
    \item $\bm{V}'$ contains, for all $X,Y \in \bm{V}'$, every variable that lies on a directed path $X \pathto Y$ in $\graph$. 
  \end{itemize}
  A \textbf{subgame} of $\model$ is a new MAID $\model'=(\graph', \bm{\theta}')$ where $\graph'$ is a subdiagram of $\graph$ and
  $\bm{\theta}'$ is defined by $\Pr'(\bm{v}' ; \bm{\theta}') \coloneqq \Pr(\bm{v}' \mid \bm{z} ; \bm{\theta})$, where $\bm{z}$ is some instantiation of the variables $\bm{Z} = \bm{V}\setminus \bm{V'}$. A subgame is \textbf{feasible} if there exists a policy profile ${\bm{\pi}}$ where $\Pr^{\bm{\pi}}(\bm{z}) > 0$. 
\end{definition}

The first condition on $\bm{V}'$ ensures that for any decision variable $D$ in the subdiagram, any variable whose mechanism may impact the optimal decision rule for $D$ is also included in the graph. The second condition says that additional variables may also be included in the subdiagram as long as mediators are included too. This ensures that the CPDs for all the variables in the subgame remain consistent.

\section{Proofs}
\label{app:proofs}
\setcounter{lemma}{0}
\setcounter{proposition}{0}

\begin{proposition}
    Both forgetfulness and absent-mindedness can prevent the existence of an NE in behavioural policies.
\end{proposition}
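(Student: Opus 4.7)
My plan is to establish the claim by producing two witness games, one for each type of imperfect recall. The natural candidates are the zero-sum matching-pennies variants already introduced in Examples~\ref{ex:forget} and \ref{ex:absentminded}; for each, I will analyse the induced normal-form game together with Bob's behavioural-policy parameterisation and show directly that the best-response correspondence admits no fixed point.

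For the forgetful case (Example~\ref{ex:forget}), write Alice's behavioural policy as $p = \pi_A(a)$ and Bob's as $q_1 = \pi_{B_1}(b_1)$, $q_2 = \pi_{B_2}(b_2)$, treating $\Pi_{B_1}$ and $\Pi_{B_2}$ as independent per Figure~\ref{fig:noNE:a}. Using the payoff table in Figure~\ref{fig:noNE:c}, I will expand $EU^B(p,q_1,q_2)$ and simplify it to a clean form such as $2(q_1-p)(q_2-p) + (\text{terms in } p \text{ only})$. The key observation is that this is bilinear in $(q_1,q_2)$ with a saddle at $q_1=q_2=p$, so any behavioural best response for Bob must lie at a corner of $[0,1]^2$ where $q_1=q_2$. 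Differentiating $EU^A = -EU^B$ in $p$ yields $\partial_p EU^A = 2(q_1+q_2-1)$, which also forces Alice's best response to be pure unless $q_1+q_2=1$. A short case analysis over the corners $(q_1,q_2)\in\{(0,0),(1,1)\}$ (Bob's candidates) and $p\in\{0,1\}$ (Alice's candidates) then shows every proposed equilibrium triggers a profitable deviation, so no fixed point exists.

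For the absent-minded case (Example~\ref{ex:absentminded}), the behavioural policy is a single parameter $q$ shared by $B_1$ and $B_2$ because of the merged decision-rule node $\Pi_B$ in Figure~\ref{fig:noNE:d}. I will compute $EU^B(p,q)$ by summing over the three leaves reachable in Figure~\ref{fig:noNE:e}, obtaining a polynomial of the form $(1-q)(1-2p) - 2q(1-q) + q^2(2p-1)$; the coefficient of $q^2$ is $2p+1 > 0$, so $EU^B$ is strictly convex in $q$ and any best response is an extreme point $q\in\{0,1\}$. Comparing $EU^B(p,0)=1-2p$ with $EU^B(p,1)=2p-1$ gives Bob's best-response rule ($q=0$ if $p<1/2$, $q=1$ if $p>1/2$), while $EU^A$ is linear in $p$ with slope $2(1-q-q^2)$, forcing Alice to a boundary $p\in\{0,1\}$ for every $q\in\{0,1\}$. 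Checking the four corner combinations shows each induces the opponent to switch, ruling out an NE in behavioural policies.

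The main obstacle is the routine but error-prone algebra of expanding the two expected-utility functions from the tables, and verifying the best-response cases exhaustively (since boundary-valued best-response correspondences mean a would-be equilibrium could hide at any combination of pure responses). To keep the appendix compact, I would present the factored forms of $EU^B$ explicitly so that the extremality of best responses is manifest, and then use a short table to rule out each corner profile. No appeal to Kakutani is needed for the lower bound; the non-convex-valuedness of the grand best-response correspondence is visible directly in the case analysis, matching the interpretation offered in the paragraph preceding the proposition.
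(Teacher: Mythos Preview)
Your plan is sound and matches the paper's at the structural level: both proofs exhibit Examples~\ref{ex:forget} and~\ref{ex:absentminded} as the two counterexamples and analyse Bob's behavioural best response directly. The paper's mechanics differ slightly: rather than factoring $EU^B$ and arguing that bilinearity/strict convexity forces Bob to a corner, it supposes an NE exists, invokes the principle of indifference (Alice mixes only if indifferent), extracts an algebraic constraint on Bob's parameters, and then shows that Bob's own incentive makes this constraint infeasible (e.g.\ reducing to $(2p-1)^2<0$ in the forgetful case). Your direct best-response characterisation is equally valid and arguably cleaner for a reader who wants to see the non-convexity of the grand best-response correspondence explicitly; the paper's indifference route is shorter on the page. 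Either fits the narrative in Section~\ref{sec:forgetabsent}.

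One concrete error to fix before you write up the absent-minded case: your $EU^B$ has the $\bar b$-first and $bb$ branches swapped. Reading the leaves of Figure~\ref{fig:noNE:e} in the order $(U^B,U^A)$ (confirmed by the $(-2,2)$ leaf matching ``Bob gets a $-2$ penalty''), the first-move $\bar b$ branch contributes $(1-q)(2p-1)$ and the $bb$ branch contributes $q^2(1-2p)$, so $EU^B(p,0)=2p-1$, $EU^B(p,1)=1-2p$, and the coefficient of $q^2$ is $3-2p$ rather than $2p+1$. Strict convexity and the no-corner-is-a-fixed-point conclusion survive the correction unchanged (by zero-sum symmetry your case table merely flips), but propagating the wrong signs into $\partial_p EU^A$ will confuse a reader checking your algebra. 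Your forgetful-case factorisation $EU^B = 2(q_1-p)(q_2-p) + (\text{terms in }p)$ is correct.
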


\begin{proof}
  Example \ref{ex:forget} (Figures \ref{fig:noNE:a}-\ref{fig:noNE:c}) and Example \ref{ex:absentminded} (Figures \ref{fig:noNE:d}-\ref{fig:noNE:f}) are counterexamples for each case.
  
  \textbf{Proof for Example \ref{ex:forget} (forgetfulness):}
  The normal-form game showing the payoffs for each agent is shown in Figure \ref{fig:noNE:c}. First, observe that there are no NE in pure policies. Now, suppose that there does exist an NE in behavioural policies. If Alice always plays $a$ or always $\bar{a}$ -- i.e., $\pi^A(a)=1$ or $\pi^A(a)=0$ -- then Bob's best response is always $\bar{b}_1 \bar{b}_2$ or always $b_1 b_2$, respectively. However, this does not form an NE. So, Alice must select a stochastic decision rule $\pi_{A}$ and be indifferent (by the principle of indifference) between $a$ and $\bar{a}$. 
  
  Letting $\Pi_{B_1}$ and $\Pi_{B_2}$ be parameterised by $p, q \in [0,1]$ where $\pi_{B_1}(b_1)=p$ and $\pi_{B_2}(b_2)=q$, we obtain two constraints on $p$ and $q$. On the one hand, by virtue of Alice's indifference, Bob's behavioural policy ${\bm{\pi}}^B$ must result in $\pi^B(\neg b_1, \neg b_2) = \pi^B(b_1, b_2)$, and so:
  $(1-p)(1-q) = pq \implies p + q = 1$. On the other hand, Bob receives utility $-1$ if his policy ${\bm{\pi}}^B$ results in any outcome with $B_1=\neg b_1$ and $B_2=b_2$, or $B_1=b_1$ and $B_2=\neg b_2$, whatever the choice of ${\bm{\pi}}^A$.
  Therefore, we must have that $\pi^B(\neg b_1, b_2) + \pi^B(b_1, \neg b_2)  < \pi^B(b_1, b_2) + \pi^2(\neg b_1, \neg b_2)$ and thus, by substituting in the result that $p + q = 1$:
  $(1-p)q + p(1-q) < pq + (1-p)(1-q) \implies (2p - 1)^2 < 0.$. This contradiction implies that the MAID for Example \ref{ex:forget} has no NE in behavioural policies.
  
   To further understand this example, let us again write Bob's policy as a tuple $(p,q)$, and suppose $\pi_{A}(a) = 0.5$. Then, either pure policy $(1,1)$ and $(0,0)$ is a best response for Bob with $EU^B = 0$. But, consider the convex combination of these best responses $0.5\cdot (1,1)+0.5\cdot(0,0)=(0.5,0.5)$. Under this policy, each of the eight outcomes in the payoff matrix is equally likely and so Bob's expected payoff drops to $(-1-1-1+1+1-1-1-1)/8=-0.5$. Since a convex combination of best responses is no longer a best response, Bob's best response function is not convex-valued, and so nor is the grand best response function. The conditions of Kakutani's fixed point theorem are not satisfied, which explains why a Nash equilibrium need not exist.
 
  \textbf{Proof for Example \ref{ex:absentminded} (absent-mindedness):}
  First, observe from the normal-form game in Figure \ref{fig:noNE:f} that there is no NE in pure policies in this game. Next, suppose there exists a NE in behavioural policies and let $\Pi_{B}$ be parameterised by $p\in [0,1]$, where $\pi_{B}(b)=p$ for $p\in [0,1]$. Alice's payoff only depends on her policy $\pi^A$ when Bob plays $bb$ or $\bar{b}\bar{b}$, for which Alice has pure best responses. This implies that, at an NE, $p^2=(1-p)^2 \implies p=0.5$. Therefore, Alice's policy is irrelevant and $EU^B = -1$ ($EU^B = 0$)  if he does (doesn't) forfeit, which happens with probability $0.5$. Therefore, Bob's policy is dominated by his pure policies, with worst-case payoff $EU^B = -1$. This contradicts the assumption of an NE in behavioural policies.

  \emph{Explanation:} If $\pi_A(a) = 0.5$,  then $p=0$ and $p'=1$ are both best responses for Bob with $EU^B=0$. However, the convex combination $0.5p+0.5p'$ gives expected payoff to Bob $EU^B=0.25\cdot1+0.25\cdot(-1) + 0.5\cdot (-10)=-5$ and is therefore not a best response. Again this is due to the fact that under behavioural policies, in situations of imperfect recall, a convex combination of pure policies can introduce outcomes that could not occur under either pure policy. Under a mixed combination of pure policies, Alice will always follow one or the other, and so no new outcomes are introduced. However, under a behavioural combination, two independent absent-minded draws from the same distribution over actions can come out differently, introducing new potential outcomes\textemdash in this case forfeit. 
\end{proof}

\setcounter{lemma}{0}
\setcounter{proposition}{1}
\begin{proposition}
  Given a MAID $\model$ with any partial profile ${\bm{\pi}}^{-i}$ for agents $-i$, then if agent $i$ is not absent-minded, for any behavioural policy ${\bm{\pi}}^i$ there exists a pure policy ${\dot{\bm{\pi}}}^i$ which yields a payoff at least as high against ${\bm{\pi}}^{-i}$.
    On the other hand, if agent $i$ is absent-minded in $\model$ across a pair of decisions with descendants in $\bm{U}^i$, then there exists a parameterisation of $\model$ and a behavioural policy 
${\bm{\pi}}^i$ which yields a payoff strictly higher than any payoff achievable by a pure policy.
\end{proposition}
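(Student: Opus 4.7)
The plan for the first statement is to exploit that when agent $i$ is not absent-minded, the expected utility $EU^i$ is multilinear in agent $i$'s decision-rule parameters. Since no two decisions of agent $i$ share a decision rule, the parameters $\{\pi_D(\cdot\mid \pa_D)\}_{D\in\bm{D}^i,\ \pa_D\in\dom(\Pa_D)}$ are algebraically independent, so in the factorisation of $\Pr^{\bm{\pi}}(\bm{v})$ each term contains at most one factor of the form $\pi_D(\cdot\mid\pa_D^{*})$ for any fixed $(D,\pa_D^{*})$. Holding every other decision-rule entry fixed, $EU^i$ is therefore an affine function of the conditional distribution $\pi_D(\cdot\mid\pa_D^{*})\in\Delta(\dom(D))$, and attains its maximum over this simplex at a vertex (a pure choice). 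Iteratively rounding each entry $\pi_D(\cdot\mid\pa_D^{*})$ with $D\in\bm{D}^i$ to an optimiser cannot decrease $EU^i$, and after finitely many replacements yields a pure policy $\dot{{\bm{\pi}}}^i$ with payoff at least $EU^i({\bm{\pi}}^i,{\bm{\pi}}^{-i})$.

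For the second statement the plan is to embed the absent-minded driver payoffs into $\model$. Let $D_j,D_k\in\bm{D}^i$ be the two decisions sharing a decision rule $\Pi_D$, with common action domain which we take to be $\{e,c\}$. By hypothesis both decisions have descendants in $\bm{U}^i$, and I would fix a single utility node $U\in\bm{U}^i$ through which the joint effect of $(D_j,D_k)$ is routed. The construction sets every chance variable to a constant, sets the CPDs of all other utilities in $\bm{U}^i$ to the constant $0$, and chooses a deterministic CPD for $U$ that depends only on $(D_j,D_k)$ among its parents and realises the driver payoffs: $U=0$ if $d_j=e$; $U=4$ if $d_j=c,\,d_k=e$; and $U=1$ if $d_j=c=d_k$. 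Under any pure $\dot{\pi}_D$ the shared rule forces $D_j=D_k$, so $EU^i\in\{0,1\}$ and no pure policy exceeds $1$. Writing $p=\pi_D(e)$ for a behavioural rule, the two decisions become i.i.d.\ Bernoulli$(p)$ draws, giving
\[
EU^i \;=\; 4p(1-p) + (1-p)^2 \;=\; (1-p)(3p+1),
\]
which is maximised at $p=\tfrac{1}{3}$ with value $\tfrac{4}{3}>1$, producing a behavioural policy that strictly beats every pure (and hence every mixed) policy.

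The step I expect to be the main obstacle is the ``routing'' part of the embedding: in general $U$'s parents may be many and the decisions $D_j,D_k$ need not be parents of $U$, only ancestors through chains of chance and other decision variables. I would handle this by first setting every chance variable in $\model$ to be deterministic, so that the value of every ancestor of $U$ becomes a fixed function of $(d_j,d_k)$ and the frozen profile ${\bm{\pi}}^{-i}$, and then propagating the driver payoffs through that deterministic network to define $U$'s CPD as a function of its parents. A residual subtlety is the corner case in which no utility in $\bm{U}^i$ is jointly influenced by both $D_j$ and $D_k$: in that situation $EU^i$ decomposes as a sum of terms each affine in $\pi_D$, so the absent-mindedness has no functional bite and the claimed strict separation cannot arise. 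I would therefore argue that the hypothesis implicitly excludes this case (by restricting to pairs across which absent-mindedness is genuinely active), and otherwise proceed with the driver embedding as above.
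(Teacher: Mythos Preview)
Your first part is essentially the paper's argument: because agent $i$ is not absent-minded, each decision has its own decision rule, so $EU^i$ is affine in each conditional distribution $\pi_D(\cdot\mid\pa_D)$ with the other entries held fixed; push each to a vertex of the simplex and iterate. The paper phrases this as $EU^i=\sum_d \pi_D(d\mid\pa_D)\lambda_d+\nu$ with $\lambda_d,\nu$ independent of $\pi_D(\cdot\mid\pa_D)$, then picks $\hat d=\argmax_d\lambda_d$.

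For the second part you take a genuinely different route. The paper does not embed the driver payoffs; instead it picks a single outcome $\hat{\bm v}$ with $\pa_{D_l}=\pa_{D_m}$ but $d_l\neq d_m$, and parameterises so that $EU^i=1$ iff $\bm V=\hat{\bm v}$ and $EU^i=0$ otherwise. Any pure shared rule forces $d_l=d_m$ whenever the contexts agree, so every pure policy gets $0$; any behavioural rule assigning positive mass to both actions reaches $\hat{\bm v}$ with positive probability and gets strictly positive expected utility. This is shorter than your construction: there is no need to realise the specific $(0,4,1)$ table, compute the optimum $p=\tfrac13$, or argue about i.i.d.\ Bernoulli draws. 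Your driver embedding is correct in spirit and gives a sharper quantitative gap ($\tfrac43$ versus $1$), but the paper only needs ``positive versus zero''.

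On your flagged obstacles: the routing issue you worry about is real, and the paper's proof is equally casual about it---it simply asserts that one can parameterise so that $EU^i=1$ exactly on $\hat{\bm v}$, leaning on the hypothesis that $D_l,D_m$ have descendants in $\bm U^i$. Your deterministic-chance idea is the right way to make this precise in either approach. Your corner case (no utility jointly influenced by both decisions) is also a genuine edge case that the paper does not address; it is not a defect of your argument relative to the paper's.
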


\begin{proof}    
  Let ${\bm{\pi}}^i$ be a behavioural policy and begin with any decision node $D \in \bm{D}^i$ with decision rule $\pi_D \in {\bm{\pi}}^i$. Now $\pi_D^i(d \mid \pa_D)$ is the probability of choosing $d \in \dom(D)$ at $D$ when $\Pa_D = \pa_D$ according to ${\bm{\pi}}^i$. Since agent $i$ is not absent-minded, the expected payoff for agent $i$ can be written $EU^i({\bm{\pi}}^i, {\bm{\pi}}^{-i}) = \sum_{d \in \dom(D)}\pi^i(d \mid \pa_D)\lambda_d + \nu$, where each coefficent $\lambda_d$ and $\nu$ are independent of $\pi_D^i(d \mid\pa_D)$. Consider the action $\hat{d} \in \dom(D)$ which achieves the highest  $\lambda_d$ (i.e., contributes most the expected utility) 
  Setting $\pi_D^i(\hat{d} \mid \pa_D)=1$ therefore yields a payoff at least as high. The first claim therefore follows by repeating this argument for every $D \in \bm{D}^i$.

  For the converse claim, agent $i$ is absent-minded, which means that at least two of agent $i$'s decision nodes must draw from an identical distribution. Without loss of generality, call these $D_l$ and $D_m$. Recall that for this to be the case, $dom(D_l)=dom(D_m)$ and $dom(\emph{\Pa}_{D_l})=dom(\emph{\Pa}_{D_m})$. Now consider an outcome of the game $\hat{\bm{v}} \in \dom(\bm{V})$ where $\pa_{D_l} = \pa_{D_m}$, but $d_l \neq d_m$. Since $D_l$ and $D_m$ have descendants in $\bm{U}^i$, Parameterise the MAID $\model$ such that $EU^i = 1$ if and only if $\bm{V} = \hat{\bm{v}}$. For all other game outcomes $\bm{v} \neq \hat{\bm{v}}$, let $EU^i = 0$. The claim follows since the outcome $\hat{\bm{v}}$ cannot be instantiated by any pure policy for agent $i$, but can be instantiated by any behavioural policy for agent $i$ that has a (shared) decision rule for  $D_l$ and $D_m$ that assigns a positive probability to both actions $d_l$ and $d_m$. 
\end{proof}

\begin{proposition}
  A MAID with sufficient information always has an NE in pure policies, a MAID with sufficient recall always has an NE in behavioural policies, and every MAID has an NE in mixed policies.
\end{proposition}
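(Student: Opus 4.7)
The plan is to handle the three claims in order of increasing difficulty.

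The mixed-policy claim follows from Nash's theorem. Viewing the MAID as a finite normal-form game whose pure strategies for agent $i$ are $\dot{\bm{P}}^i$ and whose payoffs are $EU^i$ extended multilinearly to $\bigtimes_i \Delta(\dot{\bm{P}}^i)$, \cite{nash1950equilibrium} yields a mixed NE of the normal-form reduction, which is by construction a mixed-policy NE of the MAID.

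The sufficient-recall claim is already proved in \cite{causalgames} and I would simply cite it. Its proof uses the acyclicity of each agent's decision-rule subgraph $\bm{\Pi}_{\bm{D}^i}$ to construct, by intra-agent backward induction, a behavioural best response of agent $i$ to any fixed ${\bm{\pi}}^{-i}$. The resulting best-response correspondence is nonempty, convex, and upper hemicontinuous, so Kakutani's fixed point theorem applied to the grand best-response correspondence produces a behavioural NE.

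For the sufficient-information claim, I would adapt the previous proof by promoting the intra-agent backward induction to a \emph{global} induction across all agents. Fix any topological ordering $\Pi_{D_1}, \ldots, \Pi_{D_m}$ of the full decision-rule subgraph $\bm{\Pi}_{\bm{D}}$, which exists because this subgraph is acyclic (Definition~\ref{def:sufficient}). Proceeding forward in $k$, with $\dot{\pi}_{D_1}, \ldots, \dot{\pi}_{D_{k-1}}$ already fixed as pure rules, I would set $\dot{\pi}_{D_k}$ by choosing, at each decision context $\pa_{D_k}$, some deterministic action in $\dom(D_k)$ that maximises the conditional expected utility of the owner of $D_k$ given $\pa_{D_k}$ and the already-fixed rules. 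Such an action exists because the argmax of a finite real-valued function is nonempty, so $\dot{\pi}_{D_k}$ is pure. By the topological order, $\Pi_{D_k}$ does not strategically rely on any $\Pi_{D_{k'}}$ with $k' > k$, so by the soundness of $s$-reachability~\cite{koller2003multi} the choice remains optimal when the later rules are subsequently fixed, and the terminal profile $\dot{\bm{\pi}}$ is a pure NE.

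The main obstacle is verifying that later induction steps cannot disturb the optimality of earlier-fixed pure rules. This rests on the formal content of strategic reliance (Definition~\ref{def:strategic-relevance}) together with the single-child-mechanism assumption noted in the footnote to Definition~\ref{def:sufficient}: without non-absent-mindedness, Proposition~\ref{prop:BRdeviation} warns that behavioural rules can strictly outperform every pure rule, which would undermine the pure-NE conclusion. Zero-probability decision contexts need no special treatment because the rule's choice there cannot affect any agent's expected utility under the fixed profile.
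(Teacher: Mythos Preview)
Your proposal is correct and, at its core, is the same argument the paper gives: both reduce the sufficient-information case to a one-decision-at-a-time optimisation whose order is licensed by the acyclicity of the decision-rule subgraph, and both invoke non-strategic-relevance to guarantee that rules fixed at one step are not disturbed by rules fixed later. The paper routes this through the subdiagram/subgame machinery of Appendix~\ref{app:stratrel}: it starts from an \emph{arbitrary} full policy profile and iterates through a subdiagram ordering $\graph_1 \prec \cdots \prec \graph_m$, where each step turns previously-optimised decisions into chance nodes and leaves exactly one decision to optimise. Your version works directly on a topological order of $\bm{\Pi}_{\bm{D}}$ and appeals to Definition~\ref{def:strategic-relevance} without the subgame layer; this is equivalent (subdiagram containment is defined via $s$-reachability), and arguably more direct.

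Two small points worth tightening. First, when you write ``maximises the conditional expected utility \ldots given \ldots the already-fixed rules'', the quantity $EU^i$ still depends on the not-yet-fixed rules $\pi_{D_{k+1}},\ldots,\pi_{D_m}$; the paper handles this by explicitly seeding with an arbitrary full profile, and you should do likewise before invoking non-strategic-relevance to show the argmax is invariant. Second, your observation about the single-child-mechanism footnote is apt and is something the paper leaves implicit: sufficient information as stated in Definition~\ref{def:sufficient} presupposes no absent-mindedness, which is what lets each single-rule optimisation attain its maximum at a pure rule.
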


\begin{proof}
    The mixed policies case follows from Nash's theorem since all the finite number of random variables in a MAID have finite domains \cite{nash1950equilibrium}. Hammond et al. proved the case with sufficient recall~\cite{causalgames}.
    
    We now consider the sufficient information case where we show that a NE in pure policies must exist. Begin with an arbitrary policy profile across all decision nodes in the original MAID, $\model$.  Decision rules associated with each $D \in \bm{D}$ can be optimised by iterating backwards through a subdiagram ordering $\graph_1 \prec \dots \prec \graph_m$ of $\model$'s subdiagrams such that $\graph_j \prec \graph_k$ implies that $\graph_j$ is \emph{not} a subdiagram of $\graph_k$. When $\model$ is a sufficient information game, this means that $\graph_m$ contains just one decision node for some agent $i \in N$, and, for each subdiagram $\graph_j$ where $1 \leq j < m$, $\graph_{j-1}$ contains \emph{at most} one additional decision variable. Several subdiagrams can have the same set of decisions, $\bm{D}_k$, so we choose a single subdiagram $\graph_k$ (one with the fewest nodes $\bm{V}'$) for each $\bm{D}_k$ and discard the others. Each subdiagram in this ordering has an associated subgame for each setting of the nodes which have a child in $\bm{V}'$. 

    When considering each subgame $\model_{m-j}$ for $\graph_{m-j}$, the decision rules for all decision nodes in proper subgames of $\model_{m-j}$ will have already been optimised and fixed in previous iterations, so these are now chance nodes in $\model_{m-j}$. In addition, the decision node $D_{m-j}$ in $\model_{m-j}$ does not strategically rely on any of the decision nodes outside of $\model_{m-j}$. Therefore, this step is localised to computing only the optimal decision rule for $D_{m-j}$. Since this is a single-agent single-decision optimisation, we know that there must exist a pure decision rule best response. In the case of a tie, pick one arbitrarily. After repeating this optimisation process for all subgames in the MAID, we know that every decision node must have a pure decision rule, so we have found a NE in pure policies, as required.
\end{proof}

\begin{proposition}
A MAID-CE in bounded treewidth MAIDs with sufficient recall can be found in poly-time.
\end{proposition}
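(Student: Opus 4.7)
The plan is to reduce the problem of finding a MAID-CE to a linear program (LP) of polynomial size, following the strategy of Huang and von Stengel~\cite{huang2008computing} for computing extensive-form correlated equilibria (EFCE). The key observation driving the reduction is that a MAID-CE, by Definition~\ref{def:MAIDCE}, is specified by a distribution $\kappa \in \Delta(\dot{\bm{P}})$ together with the requirement that no agent can profitably deviate at any of their decision contexts, given the information they would receive from the mediator up to that point. Since $|\dot{\bm{P}}|$ can be exponential in $|\bm{V}|$, we cannot represent $\kappa$ directly; instead, following Huang and von Stengel, I would parameterise the LP by marginal variables indexed by decision contexts (which play the role of EFG information sets) and by local action recommendations, and use a system of consistency constraints to ensure that these marginals are realisable by some joint distribution over pure policies.

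First, I would establish the bijection (remarked on in the main text) between agent $i$'s information sets in the corresponding EFG and agent $i$'s decision contexts $\pa_D$ in the MAID. This lets us define variables $x_{D, d, \pa_D}$ (one for each pair of a recommended action $d$ and a decision context $\pa_D$) with the intended semantics of being the marginal probability (under $\kappa$ and the chance/utility parameterisation) that $D$ is reached with context $\pa_D$ and agent $i$ receives recommendation $d$. The incentive constraints of a MAID-CE translate into linear inequalities comparing the expected utility of following a recommendation to the expected utility of an arbitrary deviation policy from that decision context onwards; these take the same shape as in Huang and von Stengel's formulation, once one substitutes decision contexts for information sets.

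Second, I would bound the number of variables and constraints. There are at most $\sum_{D \in \bm{D}} |\dom(D)| \cdot |\dom(\Pa_D)|$ marginal variables, which is polynomial in $|\bm{V}|$ under our concise representation assumptions from Section~\ref{sec:rep}. Similarly, the number of deviation constraints is polynomial, provided the coefficients (which are conditional probabilities of the form $\Pr^{\bm\pi}(\cdot \mid \pa_D)$ obtained by marginalising out chance and opponent decisions) can be computed in poly-time. This is where bounded treewidth enters: marginal inference in the BN induced by fixing $\bm{\pi}$ can be performed in time exponential in the treewidth and polynomial otherwise~\cite{koller2009probabilistic}, so every coefficient and hence the entire LP can be written down in polynomial time. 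Applying a poly-time LP solver then yields the MAID-CE.

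The main obstacle, and the place where the argument has to go beyond Huang and von Stengel, is justifying the relaxation from perfect recall to sufficient recall. Their construction implicitly uses the sequential structure of perfect-recall EFGs so that the ``past'' recommendations leading up to a decision context are well-defined. In a MAID with sufficient recall, the subgraph of the mechanised graph restricted to agent $i$'s decision rule nodes is acyclic (Definition~\ref{def:sufficient}), which gives a well-defined partial order on decision rules sufficient to decompose each agent's incentive constraints context-by-context without double-counting. I would verify that (i) the acyclicity condition implies the consistency constraints among the marginal variables remain linear and of polynomial size, and (ii) deviation strategies can still be captured by swapping in alternative conditional distributions at each decision context of the deviating agent, since strategic relevance correctly identifies which information is available. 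Combining these with the inference bound from bounded treewidth yields a poly-time LP whose solutions are exactly the MAID-CEs.
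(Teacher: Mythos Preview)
Your proposal is correct and follows essentially the same approach as the paper's own proof sketch: reduce to Huang and von Stengel's LP for EFCE via the bijection between MAID decision contexts and EFG information sets, use bounded treewidth to make the inference needed for the LP coefficients tractable, and argue that sufficient recall (acyclicity of the decision-rule subgraph) supplies the ordering on decision contexts that their construction requires in place of perfect recall. Your sketch is in fact more detailed than the paper's, which records only observations (i) and (ii) and defers everything else to~\cite{huang2008computing}.
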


\begin{proof}[Proof sketch]
    We follow Huang and von Stengel's method for this result \cite{huang2008computing}. Our result comes from the observation that if there is sufficient recall in a MAID, then: (i) the set of decision contexts of every decision node in the MAID is in bijection with the set of all information sets in a corresponding EFG; and (ii) sufficient recall is sufficient for the ordering of decision contexts analogous to Huang and von Stengel's ordering of information sets.
\end{proof}

\begin{lemma}
  \label{lemma}
  If \textsc{Is-Best-Response} can be solved in poly-time, then agent $i$'s expected utility under a best response to a partial policy profile ${\bm{\pi}}^{-i}$ in a MAID can be found in poly-time.
\end{lemma}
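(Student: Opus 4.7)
The plan is to binary search over rationals using the \textsc{Is-Best-Response} oracle as the comparison primitive. Let $V^\ast := \max_{\bm{\pi}^i} EU^i(\bm{\pi}^i, \bm{\pi}^{-i})$. Under the concise representation conventions of Section~\ref{sec:rep}, every CPD entry and every utility value is a fraction of two integers of polynomial bit length, so there exists a common denominator $D$ with $\log D$ polynomial in the input size such that every expected utility induced by a pure policy lies in $\{k/D : k \in \mathbb{Z}\}$, and $|V^\ast| \le M := \sum_{U \in \bm{U}^i}\max_{u}|u|$ with $\log M$ polynomial. For non-absent-minded agents, Proposition~\ref{prop:BRdeviation} provides a pure best response, so $V^\ast$ itself is of the form $k/D$ with $|k| \le MD$; the absent-minded case is handled below.

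Next, I would run standard binary search: initialise $[\ell, r] = [-M, M]$, and at each step query the oracle at $q = (\ell + r)/2$, replacing $\ell$ by $q$ if the answer is YES (so $V^\ast > q$) and replacing $r$ by $q$ otherwise. Each iteration uses a single oracle call (poly-time by hypothesis) plus arithmetic on rationals of polynomial bit length. After $O(\log M + \log D)$ iterations the interval has length strictly below $1/D^2$; since any two distinct rationals with denominator dividing $D$ differ by at least $1/D^2$, the interval then contains a unique candidate value, which I would recover exactly by rounding (for example, set $k := \lfloor r \cdot D \rfloor$ with a one-step correction to handle the boundary case, and output $V^\ast = k/D$). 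The total running time is polynomial in the input size.

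The main obstacle is justifying the bit-complexity bound on $V^\ast$ in the absent-minded case, where Proposition~\ref{prop:BRdeviation} shows that the optimal behavioural policy can be genuinely randomised, so $V^\ast$ is no longer obviously a product of polynomially many CPD entries. The way around this is to observe that the expected utility is a multilinear polynomial in the decision-rule probabilities with coefficients of polynomial bit length, and the optimum of such a polynomial over the polytope of behavioural policies (a product of simplices with rational vertices of polynomial bit length) is itself a rational of polynomial bit length by standard bounds on the bit complexity of optima of polynomial programs. Consequently a suitable common denominator $D$ still exists, and the recovery argument in the previous paragraph goes through unchanged. If one prefers to sidestep this altogether, binary search alone returns a polynomial-precision approximation of $V^\ast$ in $O(\log M + \log(1/\varepsilon))$ oracle calls for any $\varepsilon$ of polynomial bit length, which suffices for the downstream use of this lemma in the proof of Proposition~\ref{prop:polySPE}.
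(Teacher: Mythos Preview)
Your approach---binary search on the value using the \textsc{Is-Best-Response} oracle, together with the rationality of all input parameters---is exactly the paper's. For non-absent-minded agents your argument is complete: Proposition~\ref{prop:BRdeviation} gives a pure best response, so $V^\ast$ is a sum of products of polynomially many rational CPD entries and utilities, hence has a common denominator $D$ of polynomial bit length, and your rounding step recovers it exactly.

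The gap is in the absent-minded case. When a single decision rule $\Pi_D$ governs several decision nodes, each factor $\pi_D(d\mid\pa_D)$ appears once per governed node in the joint distribution, so $EU^i$ is \emph{not} multilinear in the decision-rule parameters; its degree in a shared parameter equals the number of decisions sharing that rule. Your fallback to ``standard bounds on the bit complexity of optima of polynomial programs'' then fails to deliver rationality: the maximum of a rational-coefficient polynomial over a rational box is algebraic but need not be rational. A concrete MAID witnesses this: take three binary decisions $D_1,D_2,D_3$ sharing one rule with $\pi(d)=p$, and let $U=1/3$ when exactly one $D_j=d$, $U=2/3$ when exactly two, and $U=0$ otherwise; then $EU = p(1-p)^2 + 2p^2(1-p) = p - p^3$, whose maximum on $[0,1]$ is $2\sqrt{3}/9$ at $p=1/\sqrt{3}$. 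Here $V^\ast$ is irrational, so no common denominator $D$ exists and the exact-recovery step cannot work. Your approximation fallback is sound but yields only an $\varepsilon$-close rational, not the exact value the lemma asserts; in fairness, the paper's one-line proof invokes only rationality of the \emph{inputs} and is silent on this point too.
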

\begin{proof}
  This follows immediately from using binary search over agent $i$'s policies and uses the fact that we are restricting parameters in the MAID to be rational numbers.
\end{proof}

\setcounter{proposition}{6}

\begin{proposition}
If the in-degrees of $\bm{D}^i$ are bounded and \textsc{Is-Best-Response} can be solved in poly-time, then a best response policy for agent $i$ to a partial policy profile ${\bm{\pi}}^{-i}$ can be found in poly-time.
\end{proposition}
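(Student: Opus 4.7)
The plan is a greedy construction that commits to one entry of agent $i$'s pure policy at a time, using the poly-time best-response-value routine from Lemma~\ref{lemma} to certify that each commitment preserves the optimal attainable value.

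First, I would apply Lemma~\ref{lemma} to compute $v^* := \sup_{\hat{\bm{\pi}}^i} EU^i(\hat{\bm{\pi}}^i, \bm{\pi}^{-i})$ in poly-time. Under the bounded in-degree assumption, each $D \in \bm{D}^i$ has only a constant number of decision contexts $\pa_D$, and each $\dom(D)$ is constant-size too, so specifying a pure policy for agent $i$ requires only polynomially many entries in total. Proposition~\ref{prop:BRdeviation} further guarantees that, when agent $i$ is not absent-minded, some such pure policy attains $v^*$.

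Next, I would enumerate the pairs $(D, \pa_D)$ in any fixed order. For each pair, iterate $d \in \dom(D)$, tentatively commit to $\pi_D(d \mid \pa_D) = 1$, and fold the commitment into the input, e.g.\ by splitting $D$ on $\pa_D$ into constant-size many sub-decisions and replacing the committed sub-decision with a chance node carrying the corresponding deterministic CPD; this preserves the graph's size and treewidth up to a constant factor. Then invoke Lemma~\ref{lemma} on the modified MAID $\model'$ to obtain the best value $v'$ achievable by agent $i$ over the remaining free entries. If $v' = v^*$, commit to this entry and continue; otherwise try the next $d$. An inductive argument ensures the partial commitment always extends to a pure policy achieving $v^*$: by hypothesis some pure best response to the current residual problem exists, and whichever action it prescribes for the next entry will be accepted by the greedy check.

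The total number of greedy commitments is polynomial in the input size, and each requires only a single poly-time call to Lemma~\ref{lemma}, so the procedure runs in poly-time. The main obstacle I foresee is verifying that folding a committed decision-rule entry into the input yields a modified instance on which Lemma~\ref{lemma} (and hence \textsc{Is-Best-Response}) remains solvable in poly-time; this should be straightforward, since the modification only shrinks agent $i$'s policy space and preserves the underlying graph structure responsible for the poly-time solvability of \textsc{Is-Best-Response} in the first place.
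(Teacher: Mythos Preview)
Your proposal is correct and follows essentially the same approach as the paper: compute the optimal value $v^*$ via Lemma~\ref{lemma}, then greedily commit to pieces of agent $i$'s pure policy while using the poly-time \textsc{Is-Best-Response} routine to certify that $v^*$ remains attainable. The only difference is granularity: the paper commits to an entire pure decision rule $\pi_D$ for one $D\in\bm{D}^i$ at a time (bounded in-degree makes the number of such rules constant, so $D$ can simply be replaced by a chance node), whereas you commit to one context--action entry at a time and invoke a node-splitting trick; both yield polynomially many calls to Lemma~\ref{lemma} and are equally valid, though the paper's coarser step avoids the graph-surgery obstacle you flagged.
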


\begin{proof}
  Begin by constructing the MAID $\model({\bm{\pi}}^{-i})$
 by replacing decision nodes $\bm{D} \setminus \bm{D}^{-i}$ as chance nodes with CPDs given by ${\bm{\pi}}^{-i}$. Next, use Lemma \ref{lemma} to compute agent $i$'s expected utility under a best response policy in $\model({\bm{\pi}}^{-i})$ and use this value as $q$.
  Take each of agent $i$'s decision variables $D \in \bm{D}^i$ and build a new MAID $\model({\bm{\pi}}^{-i}, \pi_D)$ for every possible decision rule of $D$ (i.e., replace $D$ as a chance node with CPD $\pi_D$). The fact that the in-degrees of agent $i$'s decision nodes are bounded, bounds the number of these MAIDs. For each induced MAID, we can then use a poly-time algorithm for \textsc{Is-Best-Response} to determine any decision rule $\pi_D$ that makes up the best response policy for agent~$i$. 
\end{proof}

\setcounter{proposition}{9}
  \begin{proposition}
    In a MAID with sufficient information, if the in-degrees of $\bm{D}$ are bounded and \textsc{Is-Best-Response} can be solved in poly-time, then a pure NE can be found in poly-time.
  \end{proposition}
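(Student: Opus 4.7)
The plan is to refine the constructive existence argument of Proposition~3 into a poly-time algorithm. Sufficient information gives an acyclic induced subgraph of the mechanised graph on the decision-rule nodes $\bm{\Pi}_{\bm{D}}$, which is what will let me optimise decisions one at a time in a topological order without ever revisiting a rule.

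Concretely, I would first construct the mechanised graph via $s$-reachability (poly-time, by \cite{shachter2013bayes}) and compute a topological order $\Pi_{D_1}, \dots, \Pi_{D_r}$ of its $r = |\bm{D}|$ decision-rule nodes. Starting from any pure policy profile $\dot{\bm{\pi}}$, I would iterate $k = 1, \dots, r$, at each step replacing $\dot\pi_{D_k}$ by a pure best response to the current partial profile $\dot{\bm{\pi}}_{-D_k}$.

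Each iteration can be made poly-time as follows. Fix $\dot{\bm{\pi}}_{-D_k}$ and form the modified MAID $\model'$ in which every decision other than $D_k$ is converted into a chance node with its currently fixed CPD. In $\model'$, the agent $i$ owning $D_k$ holds a single decision, so the hypotheses of Proposition~7 hold and an agent-$i$ best response can be computed in poly-time. Sufficient information excludes absent-mindedness (per the footnote on Definition~\ref{def:sufficient}), so by Proposition~2 a pure best response exists; and since bounded in-degree makes the number of pure decision rules $|\dom(D_k)|^{|\dom(\Pa_{D_k})|}$ constant, I can enumerate them and pick one that is optimal using the poly-time \textsc{Is-Best-Response} oracle. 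The total running time over $r$ iterations is polynomial.

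The main obstacle to make precise is correctness: proving that the output profile is indeed an NE of the original MAID, even though each $\dot\pi_{D_k}$ is optimised against a $\dot{\bm{\pi}}_{-D_k}$ whose later-indexed rules are subsequently overwritten. I would establish by induction on $k$ the invariant that $\dot\pi_{D_k}$ remains a best response after every subsequent update. The key point is that any $D_l$ updated after step~$k$ (so $l > k$) has $\Pi_{D_l}$ appearing after $\Pi_{D_k}$ in the topological order, hence $\Pi_{D_l}$ is not an ancestor of $\Pi_{D_k}$ in the mechanised graph, and in particular no edge $\Pi_{D_l} \to \Pi_{D_k}$ exists. By the completeness of $s$-reachability for strategic relevance, $\Pi_{D_k}$ does not strategically rely on $\Pi_{D_l}$, so applying the definition one overwrite at a time (comparing parameterisations that differ at a single mechanism) shows that the optimality of $\dot\pi_{D_k}$ is preserved throughout the remaining iterations. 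Hence at termination $\dot{\bm{\pi}}$ is a pure NE.
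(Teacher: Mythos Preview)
Your proposal is correct and follows essentially the same approach as the paper: both exploit sufficient information to obtain an acyclic ordering on decision-rule nodes and optimise one decision at a time via Proposition~\ref{prop:optimalbestresponse}, with the paper phrasing this through the subdiagram/subgame machinery of Proposition~\ref{prop:suffexistence}'s proof rather than your direct topological sort on $\bm{\Pi}_{\bm D}$. One small slip: the direction you need is the \emph{soundness} of $s$-reachability (strategically relevant $\Rightarrow$ $s$-reachable, hence its contrapositive), not completeness, to conclude from the absence of the edge $\Pi_{D_l}\to\Pi_{D_k}$ that $\Pi_{D_k}$ does not strategically rely on $\Pi_{D_l}$.
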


\begin{proof}
  First, note that we can check whether a MAID is a sufficient information game in poly-time using $s$-reachability, a graphical criterion based on d-separation \cite{shachter2013bayes}. We can then follow the constructive procedure given for the proof of Proposition \ref{prop:suffexistence}. Given Proposition \ref{prop:optimalbestresponse}, each optimisation step must take poly-time and since the in-degrees of all decision nodes are bounded by a constant, the number of subgames is also bounded by a constant. Therefore, the entire procedure takes poly-time.
\end{proof}

\end{document}